\renewcommand{\Re}{\operatorname{Re}}
\renewcommand{\polylog}{\operatorname{poly}\log}
\renewcommand{\poly}{\operatorname{poly}}
\DeclareMathOperator*{\argmin}{arg\,min}
\newcommand{\I}{i}
\newcommand{\opr}[1]{\ensuremath{\operatorname{#1}}}
\newcommand{\mc}[1]{\mathcal{#1}}
\newcommand{\wt}[1]{\widetilde{#1}}
\newcommand{\abs}[1]{\left\lvert#1\right\rvert}
\newcommand{\norm}[1]{\left\lVert#1\right\rVert}
\newcommand{\Or}{\mathcal{O}}
\newcommand{\RR}{\mathbb{R}}
\newtheorem{thm}{\protect\theoremname}
\newtheorem{lem}[thm]{\protect\lemmaname}
\newtheorem*{lem*}{\protect\lemmaname}
\newtheorem{cor}[thm]{\protect\corollaryname}
\newtheorem{defn}[thm]{\protect\definitionname}
\newtheorem*{defn*}{\protect\definitionname}
\providecommand{\definitionname}{Definition}
\providecommand{\assumptionname}{Assumption}
\providecommand{\corollaryname}{Corollary}
\providecommand{\lemmaname}{Lemma}
\providecommand{\propositionname}{Proposition}
\providecommand{\remarkname}{Remark}
\providecommand{\theoremname}{Theorem}
\tikzset{%
  highlight/.style={rectangle,rounded corners,fill=blue!15,draw,fill opacity=0.3,thick,inner sep=0pt}
}
\newcommand{\QETU}{{QET-U}\xspace}
\newcommand{\TFIM}{\ensuremath{\mathrm{TFIM}}}
\newcommand{\red}[1]{#1}
\newcommand{\REV}[1]{#1}
\newcommand{\COR}[1]{#1}
\newcommand{\bP}{\mathds{P}}
\begin{document}

\newcommand{\DeptMath}{Department of Mathematics, University of California, Berkeley, California 94720 USA}
\newcommand{\LBLMath}{Applied Mathematics and Computational Research Division, Lawrence Berkeley National Laboratory, Berkeley, CA 94720, USA}
\newcommand{\BQIC}{Berkeley Center for Quantum Information and Computation, Berkeley, California 94720 USA}
\newcommand{\CIQC}{Challenge Institute of Quantum Computation, University of California, Berkeley, California 94720 USA}

%+Title
\title{Ground-state preparation and energy estimation \\on early fault-tolerant quantum computers \\via quantum eigenvalue transformation of unitary matrices}
\author{Yulong Dong$^{1,4}$} 
\author{Lin Lin$^{1,2,3}$}
\email{Electronic address: linlin@math.berkeley.edu}
\author{Yu Tong$^{1}$} 
\affiliation{$^1$\DeptMath}
\affiliation{$^2$\LBLMath}
\affiliation{$^3$\CIQC}
\affiliation{$^4$\BQIC}

\date{\today}

\begin{abstract}
Under suitable assumptions, the algorithms in [Lin, Tong, Quantum 2020] can estimate the ground-state energy and prepare the ground state of a quantum Hamiltonian with near-optimal query complexities. However, this is based on a block encoding input model of the Hamiltonian, whose implementation is known to require a large resource overhead. We develop a tool called quantum eigenvalue transformation of unitary matrices with real polynomials (QET-U), which uses a controlled Hamiltonian evolution as the input model, a single ancilla qubit and no multi-qubit control operations, and is thus suitable for early fault-tolerant quantum devices. This leads to a simple quantum algorithm that outperforms all previous algorithms with  a comparable circuit structure for estimating the ground-state energy. 
For a class of quantum spin Hamiltonians, we propose a new method that exploits certain anti-commutation relations and further removes the need of implementing the controlled Hamiltonian evolution. Coupled with a Trotter-based approximation of the Hamiltonian evolution, the resulting algorithm can be very suitable for early fault-tolerant quantum devices.
We demonstrate the performance of the algorithm using \textsf{IBM Qiskit} for the transverse field Ising model. 
If we are further allowed to use multi-qubit Toffoli gates, we can then implement amplitude amplification and a new binary amplitude estimation algorithm, which increases the circuit depth but decreases the total query complexity. The resulting algorithm saturates the near-optimal complexity for ground-state preparation and energy estimating using  a constant number of ancilla qubits (no more than $3$).
\end{abstract}

\maketitle

\section{Introduction}

Preparing the ground state and estimating the ground-state energy of 
a quantum Hamiltonian have a wide range of applications in condensed matter physics, quantum chemistry, and quantum information. 
To solve such problems, quantum computers promise to deliver a new level of computational power that can be significantly  beyond the boundaries set by classical computers.
%The availability of near term quantum devices has already spawned a rapidly increasing list of activities in attempting to employ these devices to solve challenging scientific problems. 
Despite exciting early progress on NISQ devices~\cite{Preskill2018}, it is widely believed that most scientific advances in quantum sciences require some version of fault-tolerant quantum computers, which are expected to be able to accomplish much more complicated tasks. 
On the other hand, the fabrication of full-scale fault-tolerant quantum computers remains a formidable technical challenge for the foreseeable future, and it is reasonable to expect that early fault-tolerant quantum computers share the following characteristics: (1) The number of logical qubits is limited. (2) It can be difficult to execute certain controlled operations (e.g., multi-qubit control gates), whose implementation require a large number of non-Clifford gates. Besides these, the maximum circuit depth of early-fault-tolerant quantum computers, which is determined by the maximum coherence time of the devices, may still be limited. Therefore it is still important to reduce the circuit depth, sometimes even at the expense of a larger total runtime (via a larger number of repetitions). 
Quantum algorithms tailored for early fault-tolerant quantum computers \red{\cite{campbell2020early,BabbushMcCleanEtAl2021focus,BoothOGorman2021quantum,Layden2021first,LinTong2022,WanBertaCampbell2021randomized,ZhangWangJohnson2021computing,WangSimJohnson2022state}} need to properly take these limitations into account, and the resulting algorithmic structure can be different from those designed for fully fault-tolerant quantum computers.%\cite{AbramsLloyd1999quantum, PoulinWocjan2009preparing, ge2019faster, lin2020near}. 

To gain access to the quantum Hamiltonian $H$, a standard input model is the \textit{block encoding} (BE) model, which directly encodes the matrix $H$ (after proper rescaling) as a submatrix block of a larger unitary matrix $U_H$~\cite{LowChuang2019,ChakrabortyGilyenJeffery2018}. Combined with techniques such as linear combination of unitaries (LCU)~\cite{BerryChildsKothari2015}, quantum signal processing~\cite{LowChuang2017} or quantum singular value transformation~\cite{GilyenSuLowEtAl2019}, one can implement a large class of matrix functions of $H$ on a quantum computer. This leads to quantum algorithms for ground-state preparation and ground-state energy estimation with near-optimal query complexities to $U_H$~\cite{LinTong2020a}. The block encoding technique is also very useful in many other tasks such as Hamiltonian simulation, solving linear systems, preparing the Gibbs state, and computing Green's function and the correlation functions \cite{TongAnWiebeEtAl2020,ChakrabortyGilyenJeffery2018,Rall2020quantum,GilyenSuLowEtAl2018,LowChuang2017}. However, the block encoding of a quantum Hamiltonian (e.g., a sparse matrix) often involves a relatively large number of ancilla qubits, as well as multi-qubit controlled operations that lead to a large number of two-qubit gates and long circuit depths~\cite{GilyenSuLowEtAl2019}, and is therefore not suitable in the early fault-tolerant setting. 

A widely used alternative approach for accessing the information in $H$ is the time evolution operator $U=\exp(-i \tau H)$ for some time $\tau$. 
This input model will be referred to as the \textit{Hamiltonian evolution} (HE) model. While Hamiltonian simulation can be performed using quantum signal processing for sparse Hamiltonians with optimal query complexity~\cite{LowChuang2017}, such an algorithm queries a block encoding of $H$, which defeats the purpose of employing the HE model. 
On the other hand, when $H$ can be efficiently decomposed into a linear combination of Pauli operators, the time evolution operator can be efficiently implemented using, e.g., the Trotter product formula~\cite{Lloyd1996,ChildsSuTranEtAl2021} without using any ancilla qubit. 
This remarkable feature has inspired quantum algorithms for performing a variety of tasks using controlled time evolution and one ancilla qubit. 
A textbook example of such an algorithm is the Hadamard test. It uses one ancilla qubit and the controlled Hamiltonian evolution to estimate the average value $\Re\braket{\psi|U|\psi}$, which is encoded by the probability of measuring the ancilla qubit with outcome $0$ (see \cref{fig:main_circuits} (a)). 
The number of repeated measurements of this procedure is $\Or(\epsilon^{-2})$, where $\epsilon$ is the desired precision. 
Assume the spectrum of the Hamiltonian $H$ is contained in $[\eta,\pi-\eta]$ for some $\eta>0$.
If $\ket{\psi}$ is the exact ground state of $H$, we can retrieve the eigenvalue as $\lambda=\arccos(\Re\braket{\psi|U|\psi})$.  
By allowing a series of longer simulation times of $t=d$ for some integer $d$, this leads to Kitaev's algorithm that uses only $\log \epsilon^{-1}$ measurements, at the expense of increasing the circuit depth to $\Or(\epsilon^{-1})$ (see \cref{fig:main_circuits} (b)).
The total simulation time is therefore $\Or(\epsilon^{-1}\log\epsilon^{-1})$, which reaches the Heisenberg limit~\cite{GiovannettiLloydMaccone2006,GiovannettiLloydMaccone2011advances,ZwierzPerezDelgadoKok2010,ZwierzPerezDelgadoKok2012ultimate} up to a logarithmic factor.

When the input state $\ket{\phi_0}$ (prepared by an oracle $U_I$) is different from the exact ground state of $H$ denoted by $\ket{\psi_0}$,  there have been multiple quantum algorithms using the circuit of \cref{fig:main_circuits} (b) or its variant to estimate the ground-state energy~\cite{wang2019accelerated,wiebe2015bayesian,OBrienTarasinskiTerhal2019quantum,LinTong2022}.
Let $\gamma$ be a lower bound of the initial overlap, i.e., $|\braket{\phi_0|\psi_0}|\ge \gamma$. It is worth noting that \textit{all} quantum algorithms with provable performance guarantees require \textit{a priori} knowledge that $\gamma$ is reasonably large (assuming black-box access to the Hamiltonian). Without such an assumption, this problem is \QMA-hard \cite{KitaevShenVyalyi2002, KempeKitaevRegev2006, OliveiraTerhal2005, AharonovGottesmanEtAl2009}.
Candidates for such $\ket{\phi_0}$ include the Hartree-Fock state in quantum chemistry \cite{KivlichanMcCleanEtAl2018,TubmanEtAl2018postponing}, and quantum states prepared using the variational quantum eigensolver \cite{PeruzzoMcCleanShadboltEtAl2014,McCleanRomeroBabbushEtAl2016,omalley2016scalable}. \red{Some techniques can be used to boost the overlap using low-depth circuits \cite{WangSimJohnson2022state}.}
Furthermore, algorithms using the circuit of \cref{fig:main_circuits} (b) typically cannot be used to prepare the ground state. \red{With the time evolution operator as the input, one can use the LCU algorithm to prepare the ground state \cite{ge2019faster,KeenDumitrescuWang2021quantum}, thus reducing the number of ancilla qubits needed to implement the block encoding of the Hamiltonian. Note that LCU requires additional ancilla qubits to store the coefficients, and as a result cannot be implemented using $\Or(1)$ qubits.}

\begin{figure}[H]
\begin{center}
\begin{center}
\subfloat[Hadamard test, short time evolution]{
\includegraphics[width=0.376\linewidth]{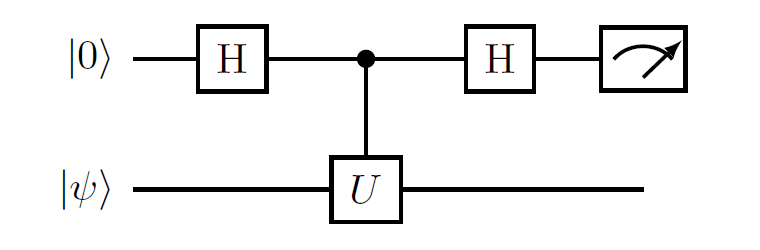}
}
\hspace{2em}
\subfloat[Hadamard test, long time evolution]{
\includegraphics[width=0.424\linewidth]{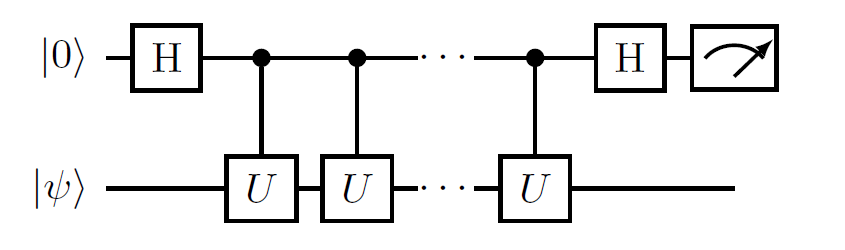}
}

\subfloat[\QETU]{
  \includegraphics[width=0.75\linewidth]{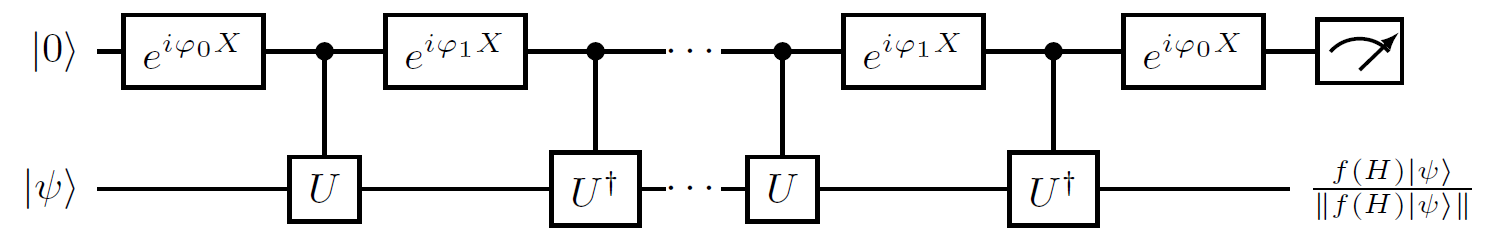}
}
\end{center}
\end{center}
\caption{After a proper rescaling, the $n$-qubit circuit $U$ implements $e^{-i H}$, and Hadamard test circuit (a) estimates $\Re\braket{\psi|e^{-i H}|\psi}$.
Repeating the controlled evolution $d$ times, the circuit (b) estimates the average value of a long time evolution $\Re\braket{\psi|e^{-i d H}|\psi}$.
For a very general class of functions $f$ the circuit (c) can approximately prepare a normalized quantum state $f(H)\ket{\psi}/\norm{f(H)\ket{\psi}}$ with approximate success probability $p=\norm{f(H)\ket{\psi}}^2$, by interleaving the forward ($U$) and backward ($U^{\dag}$)  time evolution with some properly chosen $X$-rotations in the ancilla qubit.}
\label{fig:main_circuits}
\end{figure}

We will show that both the ground-state preparation and energy estimation can be solved by (repeatedly) preparing a quantum state of the form $\ket{\psi_f}\propto f(H)\ket{\phi_0}$, where $f$ is a real polynomial approximating a shifted sign function.
A main technical tool developed in this this paper is called \textit{quantum eigenvalue transformation of unitary matrices with real polynomials} (\QETU), which allows us to prepare such a state $\ket{\psi_f}$ by querying $U=e^{-i  H}$, using only one ancilla qubit, and does not require any multi-qubit control operation (\cref{thm:qet_unitary}). The circuit structure (Fig. \ref{fig:main_circuits} (c)) is only slightly different from that  in Fig \ref{fig:main_circuits} (b).
The \QETU technique is closely related to concepts such as quantum signal processing, quantum eigenvalue transformation, and quantum singular value transformation. The relations among these techniques are detailed in \cref{sec:related_matrixtrans}. The information of the function $f$ of interest is stored in the adjustable parameters $\{\varphi_i\}$ called \textit{phase factors}. To find such parameters, we need to identify  a polynomial approximation to the shifted sign function, and then evaluate the phase factors corresponding to the approximate polynomial. Most quantum signal processing (QSP) based applications construct such a polynomial approximation analytically, which can sometimes lead to cumbersome expressions and suboptimal approximation results. We provide a convex-optimization-based procedure to streamline this process and yield the near-optimal approximation (see \cref{sec:convex}). Both the \QETU technique and the convex optimization method can be useful in applications beyond ground-state preparation and energy estimation.

The computational cost will be primarily measured in terms of the query complexity, i.e., how many times we need to query $U$ and $U_I$ in total, and we will also analyze the additional one- and two-qubit gates needed, such as the single qubit rotation gates, and the two-qubit gates needed to implement the $n$-qubit reflection operator. In our algorithms, the number of additional gates has the same scaling as the query complexity, or involves an $n$ factor where $n$ is system size. We measure the circuit depth requirement in terms of \textit{query depth}: the number of times we need to query $U$ in one coherent run of the circuit. \red{Note that this term is not to be confused with the circuit depth for implementing the oracle $U$, which we will not consider in this work.}  This metric reflects the circuit depth requirement faithfully because in our algorithm, in one coherent run of the circuit, the number of queries to $U_I$ is also upper bounded by this metric, and the additional circuit depth needed for additional gates is upper bounded by this metric up to a factor of $\Or(n)$. 

Besides the query depth, we also focus on whether multi-qubit control needs to be implemented.  Algorithms such as amplitude amplification and amplitude estimation~\cite{BrassardHoyerMoscaEtAl2002}
can be used to reduce the total query complexity, but they also need to use $(n+1)$-bit Toffoli gates (specifically, $n$-qubit reflection operator with respect to the zero state $\ket{0^n}$), which can be implemented using $\Or(n)$ two-qubit gates and one ancilla qubit~\cite{BarencoBennettEtAl1995elementary}. These operations will be referred to as ``low-level'' multi-qubit control gates. Some other quantum algorithms may require more complex multi-qubit control operations as well as more ancilla qubits. For instance, the high-confidence QPE algorithm \cite{knill2007optimal,poulin2009sampling,nagaj2009fast} requires a circuit to carry out the arithmetic operation of taking the median of multiple energy measurement results, which can require $\poly(n)$ two-qubit gates and ancilla qubits. Such operations will be referred to as ``high-level'' multi-qubit control gates.

To solve the ground-state preparation and energy estimation problem, we propose two different types of algorithms, with two different goals in mind. 
For the first type of algorithms, which we call the \textit{short query depth algorithms}, we only use \QETU to prioritize reducing the quantum resources needed. No multi-qubit controlled operation is involved. For the second type of algorithms, which we call the \textit{near-optimal algorithms}, we optimize the total query complexity by using amplitude amplification and a new binary amplitude estimation algorithm (\cref{lem:binary_amplitude_estimation}).
Such algorithms only use low-level multi-qubit control operations. Both types of algorithms only use a small number of ancilla qubits (no more than 2 or 3).

For ground-state energy estimation, surprisingly, even though the total query complexity of the short query depth algorithm does not have the optimal asymptotic scaling, it still outperforms \textit{all} previous algorithms with the same ancilla qubit number constraint \cite{HigginsBerryEtAl2007,BerryHiggins2009,somma2019quantum,LinTong2022}, in terms of total query complexity (see Table \ref{tab:compare_algs_energy}). Most notably, in this setting we achieve a quadratic improvement on the $\gamma$ dependence, from $\wt{\Or}(\gamma^{-4})$ to $\wt{\Or}(\gamma^{-2})$ (the notation $\wt{\Or}(g)$ means $\Or(g\polylog(g))$ unless otherwise stated). Moreover the circuit depth from the previous state-of-the-art result is preserved in our algorithm. Numerical comparison \REV{(see \cref{fig:compare_performance})} demonstrates that our algorithm  outperforms QPE, not only in terms of the asymptotic scaling, but also the exact non-asymptotic number of queries \COR{for moderately small values of $\gamma$}. Our near-optimal algorithm takes this advantage even further, matching the best known query complexity scaling in Ref. \cite{LinTong2020a} (which saturates the query complexity lower bound). 

For ground-state preparation, the only other algorithm that can use at most constantly many ancilla qubits is the quantum phase estimation algorithm with semi-classical Fourier transform \cite{griffiths1996semiclassical}. Compared to this algorithm, our short query depth algorithm has an exponentially improved precision dependence and a quadratically improved $\gamma$ dependence, from $\wt{\Or}(\gamma^{-4})$ to $\wt{\Or}(\gamma^{-2})$, while maintaining the same circuit depth. The near-optimal algorithm further improves the dependence to $\wt{\Or}(\gamma^{-1})$. A comparison of the algorithms for ground-state preparation can be found in Table \ref{tab:compare_algs_state}. We remark that here we consider the case where we know a parameter $\mu$ such that $\lambda_0\leq \mu-\Delta/2<\mu+\Delta/2\leq \lambda_1$, as in Theorem \ref{thm:ground_state_prep_with_bound_wo_AA}. If no such $\mu$ is known, we need to first estimate the ground-state energy to precision $\Or(\Delta)$, and the resulting algorithm is discussed in Theorem \ref{thm:ground_state_prep_full}. \red{If the ground-state energy is known \textit{a priori}, then the algorithm in \cite{ChoiLeeEtAl2021rodeo} may yield a similar speedup for preparing the ground state, but such knowledge  is generally not available.}

In the above analysis, specifically in Tables~\ref{tab:compare_algs_energy} and \ref{tab:compare_algs_state}, we compared with algorithms whose complexity can be rigorously analyzed under the assumptions of a good initial overlap (and spectral gap for ground-state preparation). We did not compare with heuristic algorithms such as the variational quantum eigensolver \cite{PeruzzoMcCleanShadboltEtAl2014,McCleanRomeroBabbushEtAl2016,omalley2016scalable}. There are also algorithms that are designed with different but similar goals in mind, such as the quantum algorithmic cooling technique in Ref.~\cite{ZengSunYuan2021universal}, which can estimate an eigenvalue $\lambda_j$ belonging to a given range $[\lambda_j^L,\lambda_j^R]$ (assuming all other eigenvalues are away from this range). Then the total runtime scaling is $\wt{\Or}(\gamma^{-4})$ where $\gamma$ is the overlap between the initial guess and the target eigenstate \cite[Theorem 2]{ZengSunYuan2021universal}. The same technique can also be used to estimate the observable expectation value of the target eigenstate, without coherently preparing the target eigenstate. 

For certain Hamiltonians, \QETU can be implemented with the standard Hamiltonian evolution rather than the controlled version. Note that ``control-free'' only means that the Hamiltonian evolution is not controlled by one or more qubits, but control gates that are independent of the Hamiltonian can still be used. In Ref. \cite{Huggins2020nonorthogonalVQE}, the control-free setting for an $n$-qubit time evolution is achieved by introducing an $n$-qubit reference state, on which the time evolution acts trivially. The algorithm also requires the implementation of the controlled $n$-qubit SWAP gate, and therefore has a relatively large overhead. 
There are other control-free algorithms proposed in Refs. \cite{LuBanulsCirac2020algorithms,OBrienEtAl2020error,LinTong2022} for energy and phase estimation via the measurement of certain scalar expectation values as the output. In particular, such algorithms cannot coherently implement a controlled time evolution and are therefore not compatible with the implementation of \QETU. In this paper, we exploit certain anti-commutation relations and structures of the Hamiltonian to propose a new control-free implementation. In the context of \QETU, the algorithm does not introduce any ancilla qubit and requires a small number of two-qubit gates that scales linearly in $n$.
We demonstrate the optimized circuit implementation of the transverse field Ising model under the control-free setting. 
To the extent of our knowledge, this circuit is significantly simpler than all previous QSP-type circuits for simulating a physical Hamiltonian.  We show the numerical performance of our algorithm for estimating the ground energy estimation in the presence of tunable quantum error using \textsf{IBM Qiskit}.

\begin{table}[th]
\label{tab:energy_estimation}
    \centering
    \makegapedcells
        \begin{tabular}{p{4cm}|c|c||c|c|c}
        \hline
        \hline
                    & Query  & Query  & \# ancilla &  Need & Input\\
                    & depth & complexity  &  qubits & MQC? & model\\
        \hline
        This work (Theorem \ref{thm:ground_state_energy_woaa} )    & $\wt{\Or}(\epsilon^{-1})$ & $\wt{\Or}(\epsilon^{-1}\gamma^{-2})$ & $\Or(1)$ & No & HE\\
        \hline
        This work (Theorem \ref{thm:ground_state_energy})    & $\wt{\Or}(\epsilon^{-1}\gamma^{-1})$ & $\wt{\Or}(\epsilon^{-1}\gamma^{-1})$ & $\Or(1)$ & Low & HE\\
        \hline
        QPE (high confidence) \cite{knill2007optimal,poulin2009sampling,nagaj2009fast} & $\wt{\Or}(\epsilon^{-1})$ & $\wt{\Or}(\epsilon^{-1}\gamma^{-2})$ & $\Or(\polylog(\gamma^{-1}\epsilon^{-1}))$ & High & HE\\
        \hline
        QPE (semi-classical) \cite{HigginsBerryEtAl2007,BerryHiggins2009} & $\wt{\Or}(\epsilon^{-1}\gamma^{-2})$ &  $\wt{\Or}(\epsilon^{-1}\gamma^{-4})$ & $\Or(1)$ & No & HE\\
        \hline
        QEEA \cite{somma2019quantum,LinTong2022} & $\wt{\Or}(\epsilon^{-1})$ & $\wt{\Or}(\epsilon^{-4}\gamma^{-4})$ & $\Or(1)$ & No & HE \\
        \hline
        GTC19 (Theorem 4) \cite{ge2019faster} & $\wt{\Or}(\epsilon^{-3/2}\gamma^{-1})$ & $\wt{\Or}(\epsilon^{-3/2}\gamma^{-1})$ & $\Or(\log(\epsilon^{-1}))$ & High & HE \\
        \hline
        LT20 \cite{LinTong2020a} & $\wt{\Or}(\epsilon^{-1}\gamma^{-1})$ & $\wt{\Or}(\epsilon^{-1}\gamma^{-1})$ & $m+\Or(\log(\epsilon^{-1}))$ & High & BE\\
        \hline
        LT22 \cite{LinTong2022} & $\wt{\Or}(\epsilon^{-1})$ & $\wt{\Or}(\epsilon^{-1}\gamma^{-4})$ & $\Or(1)$ & No & HE \\
        \hline 
        \hline
        \end{tabular}
    \caption{Comparison of the performance of quantum algorithms for ground-state energy estimation in terms of the query complexity, query depth, number of ancilla qubits, and the level of multi-qubit control (abbreviated as ``MQC'' in the table) operation is needed. $\gamma$ is the overlap between the initial guess $\ket{\phi_0}$ and the ground state, and $\epsilon$ is the allowed error. ``HE'' stands for the Hamiltonian evolution model (assuming no ancilla qubits), and ``BE'' for the block encoding model. The sharper estimate for estimating the ground-state energy using the quantum eigenvalue estimation algorithm (QEEA) is given in~\cite[Appendix C]{LinTong2022}. 
We assume that Ref.~\cite{LinTong2020a} uses $m$ ancilla qubits, and the high-level MQC operation is due to the block encoding of $H$.}
    \label{tab:compare_algs_energy}
\end{table}

\begin{table}[th]%\fontsize{7}{8}
\label{tab:ground_state_prep}
    \centering
    \makegapedcells
        \begin{tabular}{p{4cm}|c|c||c|c|c}
        \hline
        \hline
                    & Query  & Query  & \# ancilla & Need & Input\\
                    & depth & complexity &  qubits & MQC? & model\\
        \hline
        This work (Theorem \ref{thm:ground_state_prep_with_bound_wo_AA})    & $\wt{\Or}(\Delta^{-1})$ & $\wt{\Or}(\Delta^{-1}\gamma^{-2})$ & $\Or(1)$ & No & HE\\
        \hline
        This work (Theorem \ref{thm:ground_state_prep_with_bound})    & $\wt{\Or}(\Delta^{-1}\gamma^{-1})$ & $\wt{\Or}(\Delta^{-1}\gamma^{-1})$ & $\Or(1)$ & Low & HE\\
        \hline
        QPE (high confidence) \cite{knill2007optimal,poulin2009sampling,nagaj2009fast} & $\wt{\Or}(\Delta^{-1})$ & $\wt{\Or}(\Delta^{-1}\gamma^{-2})$ & $\Or(\polylog(\Delta^{-1}\gamma^{-1}\epsilon^{-1}))$ & High & HE\\
        \hline
        QPE (semi-classical) \cite{HigginsBerryEtAl2007,BerryHiggins2009} & $\wt{\Or}(\Delta^{-1}\gamma^{-2})$ & $\wt{\Or}(\Delta^{-1}\gamma^{-4})$ & $\Or(1)$ & No & HE\\
        \hline
        GTC19 (Theorem 1) \cite{ge2019faster} & $\wt{\Or}(\Delta^{-1}\gamma^{-1})$ & $\wt{\Or}(\Delta^{-1}\gamma^{-1})$ & $\Or(\log(\Delta^{-1})+\log\log(\epsilon^{-1}))$  & High & HE\\
        \hline
        LT20 \cite{LinTong2020a} & $\wt{\Or}(\Delta^{-1}\gamma^{-1})$ & $\wt{\Or}(\Delta^{-1}\gamma^{-1})$ & $m$ & High & BE\\
        \hline 
        \hline
        \end{tabular}
    \caption{Comparison of the performance of quantum algorithms for ground-state preparation in terms of the query complexity, query depth, number of ancilla qubits, and the level of multi-qubit control (abbreviated as ``MQC'' in the table) operation is needed. $\gamma$ is the overlap between the initial guess $\ket{\phi_0}$ and the ground state, $\Delta$ is a lower bound of the spectral gap, and $1-\epsilon$ is the target fidelity. ``HE'' stands for the Hamiltonian evolution model (assuming no ancilla qubits), and ``BE'' the block encoding model. Here we assume that an upper bound of the ground-state energy is known ($\mu$ in Theorem \ref{thm:ground_state_prep_with_bound_wo_AA}). The algorithm in Ref. \cite{ge2019faster} (GTC19 in the table) requires precise knowledge of the ground-state energy. We assume that Ref.~\cite{LinTong2020a} uses $m$ ancilla qubits, and the high-level MQC operation is due to the block encoding of $H$.}
    \label{tab:compare_algs_state}
\end{table}

\section{Quantum eigenvalue transformation of unitary matrices}

Given the Hamiltonian evolution input model $U=e^{-i H}$, we first demonstrate that by slightly modifying the circuit for the Hadamard test in \cref{fig:main_circuits} (b), we can approximately prepare a target state $\ket{\psi_f}=f(H)\ket{\psi}/\norm{f(H)\ket{\psi}}$ efficiently and with controlled accuracy for a large class of real functions $f$. 
Specifically, this requires alternately applying the controlled forward time evolution operator $U$, a single qubit $X$ rotation in the ancilla qubit, and the controlled backward time evolution operator $U^{\dag}$ (see \cref{fig:main_circuits} (c)).  
This circuit does not store the eigenvalues of $H$ either in a classical or a quantum register, and the information of the function $f$ of interest is entirely stored in the adjustable parameters $\varphi_0,\varphi_1,\varphi_2,\ldots,\varphi_{d/2}$. 
These parameters form a set of symmetric phase factors $(\varphi_0,\varphi_1,\varphi_2,\ldots,\varphi_2,\varphi_1,\varphi_0)\in\RR^{d+1}$ used in the circuit \cref{fig:main_circuits} (c). The symmetry of the phase factors is the key to attaining the reality of the function $f$ of interest.

\begin{thm}[\QETU]
Let $U=e^{-i H}$ with an $n$-qubit Hermitian matrix $H$. 
For any even real polynomial $F(x)$ of degree $d$ satisfying  $|F(x)|\le 1, \forall x \in [-1, 1]$, we can find a sequence of symmetric phase factors $\Phi_z := (\varphi_0, \varphi_1, \cdots, \varphi_1,\varphi_0) \in \RR^{d+1}$,
such that the circuit in \cref{fig:main_circuits} (c) denoted by $\mc{U}$ satisfies $(\bra{0}\otimes I_n)\mc{U}(\ket{0}\otimes I_n)=F\left(\cos \frac{H}{2}\right)$. \label{thm:qet_unitary}
\end{thm}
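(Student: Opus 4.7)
The plan is to reduce the claim, inside each eigenspace of $H$, to the symmetric-phase-factor quantum signal processing (QSP) theorem with ``signal'' parameter $\cos(\lambda/2)$. First, I would diagonalize $H=\sum_k\lambda_k\ket{\psi_k}\bra{\psi_k}$ and observe that both controlled evolutions $cU$ and $cU^\dagger$ act diagonally on the system eigenbasis, while each ancilla rotation $e^{-i\varphi_j X}$ is trivially system-independent. Hence $\mc{U}$ is block-diagonal with respect to the $H$-eigenspaces, and it suffices to compute the $(0,0)$ entry (on the ancilla) of the induced $2\times 2$ unitary $\mc{U}|_\lambda$ in each $\lambda$-eigenspace.

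Second, in the $\lambda$-eigenspace the controlled evolutions restrict to $cU|_\lambda=\mathrm{diag}(1,e^{-i\lambda})=e^{-i\lambda/2}\,e^{i(\lambda/2)Z}$ and $cU^\dagger|_\lambda=e^{i\lambda/2}\,e^{-i(\lambda/2)Z}$, so $\cos(\lambda/2)$ appears naturally as the $(+,+)$ entry of the signal $e^{i(\lambda/2)Z}$. Since $d$ is even and the circuit alternates $d/2$ forward with $d/2$ backward controlled evolutions, the $e^{\pm i\lambda/2}$ global phases cancel exactly in pairs. What remains is a product of $d$ signal operators $e^{\pm i(\lambda/2)Z}$ interleaved with $d+1$ symmetric $X$-rotations $e^{-i\varphi_j X}$, which is (up to a Hadamard rebasing on the ancilla) the standard symmetric QSP ansatz evaluated at $x=\cos(\lambda/2)\in[0,1]$. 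The alternating $\pm$ sign of the signal can be unified using the identity $X e^{i(\lambda/2)Z}X=e^{-i(\lambda/2)Z}$; each such use deposits a pair of $X$'s that commute freely through the $e^{-i\varphi X}$ gates, and because $d$ is even all residual $X$'s cancel pairwise without modifying the $(0,0)$ entry.

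Third, I would invoke the symmetric-phase-factor QSP achievability theorem (Haah 2019; Chao--Ding--Gily\'en--Huang--Szegedy 2020; Dong--Meng--Whaley--Lin 2021): when $\Phi_z$ is symmetric, the $(0,0)$ entry is forced to be a real polynomial whose parity matches the parity of $d$, and conversely every even real polynomial $F$ of degree $\le d$ with $|F(x)|\le 1$ on $[-1,1]$ is realized by some symmetric $\Phi_z$. Assembling the eigenspace contributions via the spectral decomposition of $H$ then yields
\begin{equation*}
(\bra{0}\otimes I_n)\,\mc{U}\,(\ket{0}\otimes I_n)=\sum_k F(\cos(\lambda_k/2))\,\ket{\psi_k}\bra{\psi_k}=F\!\left(\cos\frac{H}{2}\right),
\end{equation*}
as claimed.

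The main obstacle I foresee is the bookkeeping in the second step: verifying that the alternating signal pattern, combined with the Hadamard rebasing and the symmetric phase factors, lands on exactly the symmetric ansatz for which the QSP achievability theorem is stated, rather than a mirrored or sign-twisted variant. Carefully tracking the $X$-conjugations induced by $XWX=W^{-1}$ through the phase gates and checking that the residual $X$'s at the boundary cancel (using that $d$ is even) closes this gap; afterwards, the existence of a symmetric $\Phi_z$ realizing any prescribed even real $F$ is immediate from the standard QSP theorem.
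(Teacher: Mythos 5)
Your overall strategy is the same as the paper's (\cref{sec:qetu}): restrict to the two-dimensional invariant subspace $\operatorname{span}\{\ket{0}\ket{v_j},\ket{1}\ket{v_j}\}$, identify the restricted controlled evolutions with the diagonal signal $W_z(x)=e^{\I\arccos(x)Z}$ at $x=\cos(\lambda_j/2)$, and reduce to a symmetric-phase-factor QSP statement (\cref{thm:qsp_real_Wz}, derived from \cref{thm:qsp_real_Wx}). The reduction to $2\times 2$ blocks and the reassembly via the spectral decomposition are exactly the paper's argument.

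However, the step you flag as the main obstacle is resolved incorrectly as written, and this is precisely where the paper does nontrivial work. First, after replacing each $W_z^*$ by $XW_zX$, the deposited $X$'s do \emph{not} cancel pairwise: any two of them are separated by at least one unstarred $W_z$, with which $X$ does not commute, so they can never be brought together. Instead each $X$ must be absorbed into an adjacent ancilla rotation via $e^{\I\varphi X}X=-\I\, e^{\I(\varphi+\pi/2)X}$; doing this naively shifts the interior phases by $\pi/2$ but treats the two boundary phases asymmetrically (one absorbs an $X$, the other does not), so the resulting phase sequence is no longer symmetric and the symmetric QSP achievability theorem cannot be invoked directly. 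Second, a Hadamard rebasing alone does not produce a real $(0,0)$ entry: for the standard ansatz of \cref{thm:qsp_real_Wx} one has $\bra{0}\mathrm{H}M\mathrm{H}\ket{0}=F(x)+\I Q(x)\sqrt{1-x^2}$, which is complex. The paper's proof of \cref{thm:qsp_real_Wz} fixes both issues simultaneously by conjugating with the extra boundary factors $e^{\mp\I\pi Z/4}$ and shifting the phases by $(2-\delta_{j0})\pi/4$ rather than a uniform $\pi/2$: this keeps the inner phase sequence symmetric \emph{and} rotates the off-diagonal contribution so that the $(0,0)$ entry becomes $\frac{1}{2}(M_{00}+M_{11})-\frac{\I}{2}(M_{01}-M_{10})=F(x)$ exactly. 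Your sketch would need this boundary correction spelled out; without it the argument lands on a ``sign-twisted variant'' of the ansatz, which is exactly the failure mode you anticipated.
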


The proof of \cref{thm:qet_unitary} is given in \cref{sec:qetu}.
It is worth mentioning that the concept of ``qubitization''~\cite{LowChuang2019,GilyenSuLowEtAl2019} appears very straightforwardly in \QETU.
Let the matrix function of interest be expressed as $f(H)=(f\circ g)(\cos \frac{H}{2})$, where $g(x)=2\arccos(x)$. 
Therefore we can find a polynomial approximation $F(x)$ so that 
\begin{equation}
\sup_{x\in[\sigma_{\min},\sigma_{\max}]}\abs{(f\circ g)(x)-F(x)}\le \epsilon.
\end{equation}
Here $\sigma_{\min}=\cos \frac{\lambda_{\max}}{2}, \sigma_{\max}=\cos \frac{\lambda_{\min}}{2}$, respectively (note that $\cos (x/2)$ is a monotonically \textit{decreasing} function on $[0,\pi]$).
This ensures that the operator norm error satisfies 
\[
\norm{(\bra{0}\otimes I_n)\mc{U}(\ket{0}\otimes I_n)-f(H)}\le \epsilon.
\]

The implementation of $U=e^{-i H}$ corresponds to a Hamiltonian simulation problem of $H$ at time $t=1$. In practice, we can use the Trotter decomposition to obtain an approximate implementation of $U$ without ancilla qubits, i.e., we can partition the time interval into $r$ steps with $\tau=r^{-1}$ and use a low order Trotter method to implement an approximation to $U_{\tau}\approx e^{-i H \tau}$. Then
\begin{equation}
U=e^{-i H }\approx (U_\tau)^r.
\end{equation}
In line with other works in analyzing the performance of quantum algorithms using the HE input model~\cite{HigginsBerryEtAl2007,BerryHiggins2009,somma2019quantum,LinTong2022},
in the discussion below, unless otherwise specified, we assume $U$ is implemented exactly, and the errors are due to other sources such as polynomial approximation, the binary search process etc.
We refer readers to \cref{sec:trotter} for the complexity analysis of \QETU when $U$ is implemented using a $p$-th order Trotter formula, as well as its implication in the ground-state energy estimation.

\section{ground-state energy estimation and ground-state preparation}

In this section we discuss how to estimate the ground-state energy  and to prepare the ground state within the \QETU framework. 
The  setup of the problems is as follows: \red{we assume that the Hamiltonian $H$ can be accessed through its time-evolution operator $e^{-iH}$. The goal is (1) to estimate the ground-state energy, and (2) to prepare the ground state. For the first task we assume that we have access to a good initial guess $\ket{\phi_0}$ of the ground state, i.e., $|\braket{\phi_0|\psi_0}|\geq \gamma$ where $\ket{\psi_0}$ is the ground state. For the second task, we need the additional assumption that the ground-state energy $\lambda_0$ is separated from the rest of the spectrum by a gap $\Delta$. These assumptions are stated more formally in the definitions below.}

\begin{defn}[ground-state energy estimation]
\label{defn:ground_state_energy}
Suppose we are given a Hamiltonian $H$ on $n$ qubits whose spectrum is contained in $[\eta,\pi-\eta]$ for some $\eta>0$. The Hamiltonian can be accessed through a unitary $U=e^{-iH}$. Also suppose we have an initial guess $\ket{\phi_0}$ of the ground state $\ket{\psi_0}$ satisfying $|\braket{\phi_0|\psi_0}|\geq \gamma$. This initial guess can be prepared by $U_I$. The oracles $U$ and $U_I$ are provided as black-box oracles. The goal is to estimate the ground-state energy $\lambda_0$ to within additive error $\epsilon$.
\end{defn}

\begin{defn}[ground-state preparation]
\label{defn:ground_state_prep}
Under the same assumptions as in Definition \ref{defn:ground_state_energy}, and the additional assumption that there is a spectral gap at least $\Delta$ separating the ground-state energy  $\lambda_0$ from the rest of the spectrum,
the goal is to prepare a quantum state $\ket{\wt{\psi}_0}$ such that $|\braket{\psi_0|\wt{\psi}_0}|\geq 1-\epsilon$.
\end{defn}

\red{We will primarily focus on ground-state energy estimation. This is because once we have the ground-state energy, preparing the ground state can be done by applying an approximate projection, which can be directly performed using \QETU. We will consider two settings: the short query depth setting and the near-optimal setting. In the first setting we prioritize lowering the query depth (and hence the circuit depth), and in the second setting we prioritize lowering the query complexity (and hence the total runtime). Our results for the two settings are stated in the following theorems:}
\red{
\begin{thm}[ground-state energy estimation using \QETU]
\label{thm:ground_state_energy_woaa}
Under the assumptions stated in Definition \ref{defn:ground_state_energy}, we can estimate the ground-state energy to within additive error $\epsilon$, with probability at least $1-\vartheta$, with the following cost: 
\begin{enumerate}
    \item $\wt{\Or}(\epsilon^{-1}\gamma^{-2}\log(\vartheta^{-1}))$ queries to (controlled-) $U$ and $\Or(\gamma^{-2}\polylog(\epsilon^{-1}\vartheta^{-1}))$ queries to $U_I$.
    \item One ancilla qubit.
    \item $\wt{\Or}(\epsilon^{-1}\gamma^{-2}\log(\vartheta^{-1}))$  additional one-qubit quantum gates.
    \item $\Or(\epsilon^{-1}\log(\gamma^{-1}))$ query depth of $U$.
\end{enumerate}
\end{thm}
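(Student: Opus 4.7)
My approach reduces ground-state energy estimation to a sequence of low-pass filter tests applied to $\ket{\phi_0}$, followed by a binary search over the threshold. For each candidate threshold $\mu\in[\eta,\pi-\eta]$, I fix a real even polynomial $F_\mu$ of degree $d=\Or(\epsilon^{-1}\log(\gamma^{-1}))$ with $|F_\mu(x)|\le 1$ on $[-1,1]$ that approximates a shifted sign function in the variable $y=\cos(x/2)$, with transition centered at $y=\cos(\mu/2)$. Since $\cos(x/2)$ is monotonically decreasing on $[0,\pi]$, this yields $F_\mu(\cos(\lambda/2))\ge 1-c\gamma$ when $\lambda\le \mu-\epsilon/2$ and $|F_\mu(\cos(\lambda/2))|\le c\gamma$ when $\lambda\ge \mu+\epsilon/2$, for a sufficiently small constant $c$. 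Such approximations are standard (e.g.\ via a polynomial approximation of $\mathrm{erf}$), and the $\log(\gamma^{-1})$ factor in $d$ is precisely the price of driving the tail error down to $\Or(\gamma)$. Feeding $F_\mu$ into \cref{thm:qet_unitary} produces a symmetric phase sequence $\Phi_\mu$ for which the \QETU circuit $\mc{U}_\mu$ of \cref{fig:main_circuits}(c) realises $(\bra{0}\otimes I_n)\mc{U}_\mu(\ket{0}\otimes I_n)=F_\mu(\cos(H/2))$, using one ancilla qubit, $d$ calls to controlled-$U$, and $d+1$ single-qubit $X$-rotations.

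Expanding $\ket{\phi_0}=\sum_j \alpha_j\ket{\psi_j}$ in the eigenbasis of $H$, the probability $p_\mu$ of observing $0$ on the ancilla after $\mc{U}_\mu$ acts on $\ket{0}\otimes\ket{\phi_0}$ equals $\sum_j |\alpha_j|^2\, F_\mu(\cos(\lambda_j/2))^2$. By the two-sided bound on $F_\mu$, if $\mu\ge \lambda_0+\epsilon/2$ then $p_\mu\ge (1-\Or(\gamma))\gamma^2$, while if $\mu\le \lambda_0-\epsilon/2$ then $p_\mu\le c^2\gamma^2$. Choosing $c$ small enough gives a constant multiplicative gap around the threshold $\gamma^2/2$, so a Chernoff bound shows that $N=\Or(\gamma^{-2}\log(T\vartheta^{-1}))$ independent Bernoulli samples of this measurement distinguish the two cases with confidence $1-\vartheta/T$; each sample costs one query to $U_I$ (to prepare $\ket{\phi_0}$) and $d$ queries to controlled-$U$ (inside $\mc{U}_\mu$).

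I then wrap the above in a binary search over $\mu$ on $[\eta,\pi-\eta]$, halving the bracket at each of $T=\Or(\log(\epsilon^{-1}))$ rounds. When the queried $\mu$ happens to land in the ambiguous window $[\lambda_0-\epsilon/2,\lambda_0+\epsilon/2]$, the test may go either way, but by centering the polynomial's transition zone inside this window the resulting bracket still contains $\lambda_0$, so after $T$ rounds the midpoint is $\epsilon$-close to $\lambda_0$ with probability $\ge 1-\vartheta$ by a union bound. Summing the costs yields $NT=\Or(\gamma^{-2}\polylog(\epsilon^{-1}\vartheta^{-1}))$ queries to $U_I$, $dNT=\wt{\Or}(\epsilon^{-1}\gamma^{-2}\log(\vartheta^{-1}))$ queries to controlled-$U$, query depth $d=\Or(\epsilon^{-1}\log(\gamma^{-1}))$, one ancilla qubit, and $\wt{\Or}(\epsilon^{-1}\gamma^{-2}\log(\vartheta^{-1}))$ additional one-qubit gates, matching the claimed bounds.

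The hard part is the tight interplay between the polynomial construction and the binary search: I need the transition width of $F_\mu$ to be at most a fraction of $\epsilon$ while keeping the sup-norm error $\Or(\gamma)$ and the degree only $\Or(\epsilon^{-1}\log\gamma^{-1})$, which requires the sharp sign/erf approximation bounds from the QSP literature; simultaneously, the binary-search geometry (step size, polynomial offset, and the choice of decision threshold $\gamma^2/2$) must be aligned with these polynomial parameters so that any ``ambiguous'' decision in the transition zone is still harmless and the final output is always within $\epsilon$ of $\lambda_0$.
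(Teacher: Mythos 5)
Your proposal is correct and follows essentially the same route as the paper: a binary search whose each round is a fuzzy sign-function test implemented by \QETU{} and resolved by Monte Carlo sampling of the ancilla with a Chernoff bound at the $\Theta(\gamma^2)$ probability scale. The only substantive difference is that you fix the transition width at $\Theta(\epsilon)$ and hence the degree at $\Or(\epsilon^{-1}\log\gamma^{-1})$ in every round, whereas the paper lets the fuzziness track the shrinking bracket so the degrees form a geometric series dominated by the last round; your version pays an extra $\log(\epsilon^{-1})$ factor in total queries, which is absorbed by the $\wt{\Or}$ in the statement.
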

Note that here, using the short query depth algorithm, we do not need to use extra two-qubit gates beyond what is needed in (controlled-) $U$.
\begin{thm}[near-optimal ground-state energy estimation with \QETU and improved binary amplitude estimation]
\label{thm:ground_state_energy}
Under the assumptions stated in Definition \ref{defn:ground_state_energy}, we can estimate the ground-state energy to within additive error $\epsilon$, with probability at least $1-\vartheta$, with the following cost:
\begin{enumerate}
    \item $\wt{\Or}(\epsilon^{-1}\gamma^{-1}\log(\vartheta^{-1}))$ queries to (controlled-) $U$ and $\Or(\gamma^{-1}\polylog(\epsilon^{-1}\vartheta^{-1}))$ queries to $U_I$.
    \item Three ancilla qubits.
    \item $\wt{\Or}(n\gamma^{-1}\log(\epsilon^{-1}\vartheta^{-1})+\epsilon^{-1}\gamma^{-1}\log(\vartheta^{-1}))$  additional one- and two-qubit quantum gates.
    \item $\wt{\Or}(\epsilon^{-1}\gamma^{-1}\log(\vartheta^{-1}))$ query depth of $U$.
\end{enumerate}
\end{thm}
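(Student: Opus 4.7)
The plan is to combine \QETU with a binary amplitude estimation subroutine and binary search on the energy. The first ingredient is, for any threshold $\mu\in[\eta,\pi-\eta]$, a \QETU circuit $\mc{U}_\mu$ that implements (on the system register, conditioned on the ancilla outcome $0$) a matrix $F_\mu(\cos(H/2))$, where $F_\mu$ is an even real polynomial of degree $d=\Or(\epsilon^{-1}\log(\gamma^{-1}\vartheta^{-1}))$ approximating a shifted sign function that is $\approx 1$ on eigenvalues $\le \mu-\epsilon/2$ and $\approx 0$ on eigenvalues $\ge \mu+\epsilon/2$. By \cref{thm:qet_unitary} this reduces to finding a bounded polynomial approximation of a mollified step function, which is standard. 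Applied to the state $\ket{0}\otimes U_I\ket{0^n}$, the amplitude of measuring $0$ on the ancilla is close to $\norm{P_{\lambda\le\mu}\ket{\phi_0}}$, which is $\ge \gamma$ when $\mu\ge \lambda_0+\epsilon$ and exponentially close to $0$ when $\mu\le \lambda_0-\epsilon$.

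The second ingredient is the binary amplitude estimation result (\cref{lem:binary_amplitude_estimation}), which, given an amplitude-preparation unitary $V$ with good-subspace amplitude $a$, can decide whether $a\ge a_{\text{high}}$ or $a\le a_{\text{low}}$ with $\Or(a_{\text{high}}^{-1}\log(\vartheta^{-1}))$ queries to $V$ and $V^\dagger$, using reflections about $\ket{0}$ that require one extra ancilla and $\Or(n)$ two-qubit gates each. Taking $V=\mc{U}_\mu(I\otimes U_I)$, setting $a_{\text{low}}=\gamma/4$ and $a_{\text{high}}=\gamma/2$, each binary-search step costs $\wt{\Or}(\epsilon^{-1}\gamma^{-1}\log(\vartheta^{-1}))$ queries to $U$ and $\Or(\gamma^{-1}\log(\vartheta^{-1}))$ queries to $U_I$. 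This is the place where $\gamma^{-2}\to\gamma^{-1}$ improvement originates, because amplitude amplification replaces probability estimation by amplitude estimation.

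With this decision oracle in hand, I perform a binary search on $\mu\in[\eta,\pi-\eta]$ to locate $\lambda_0$ to additive error $\epsilon$ in $\Or(\log(\epsilon^{-1}))$ rounds. Per-round success is boosted to $1-\vartheta/\Or(\log(\epsilon^{-1}))$ by taking $\log(\vartheta^{-1}\log\epsilon^{-1})$ inside the query count, and a union bound gives total success $\ge 1-\vartheta$. Summing yields the claimed $\wt{\Or}(\epsilon^{-1}\gamma^{-1}\log(\vartheta^{-1}))$ query complexity to $U$, $\Or(\gamma^{-1}\polylog(\epsilon^{-1}\vartheta^{-1}))$ to $U_I$, and query depth $\wt{\Or}(\epsilon^{-1}\gamma^{-1}\log(\vartheta^{-1}))$ (since amplitude amplification is done coherently). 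The ancilla count is $1$ for \QETU, $1$ for the amplitude-amplification flag, and $1$ for implementing the $\Or(n)$-qubit reflection about $\ket{0^n}$ in one extra workspace qubit, giving $3$ total. Additional gates are $\Or(n)$ per reflection, times the query depth, giving the stated $\wt{\Or}(n\gamma^{-1}\log(\epsilon^{-1}\vartheta^{-1})+\epsilon^{-1}\gamma^{-1}\log(\vartheta^{-1}))$ bound.

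The main obstacle is the coupled error analysis near the transition region of $F_\mu$. When $\mu$ is within $\Or(\epsilon)$ of $\lambda_0$, neither the ``above'' nor the ``below'' branch of binary amplitude estimation is guaranteed to trigger; one must show that \emph{any} answer returned in this ambiguous regime is still consistent with returning an estimate within $\epsilon$ of $\lambda_0$, so that binary search does not diverge. This is handled by a standard ``soft'' binary search that tolerates arbitrary decisions in an $\Or(\epsilon)$ window, combined with the fact that $F_\mu$ is monotonically biased in the correct direction outside this window. A secondary technical point is verifying the Trotter-level and polynomial-level errors do not spoil the amplitude comparison thresholds $a_{\text{high}},a_{\text{low}}$; this follows by choosing the approximation error smaller than a constant fraction of $\gamma$, which only inflates $d$ by a $\log(\gamma^{-1})$ factor already absorbed into the $\wt{\Or}$.
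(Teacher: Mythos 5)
Your proposal is correct and follows essentially the same route as the paper: \QETU-built approximate sign-function projectors, the \QETU-based binary amplitude estimation of Lemma~\ref{lem:binary_amplitude_estimation} to obtain the $\gamma^{-1}$ scaling, and a fuzzy binary search with a union bound over $\Or(\log(\epsilon^{-1}))$ rounds, arriving at the same three-ancilla accounting. The only (cosmetic) difference is that you fix the transition width of $F_\mu$ at $\epsilon$ in every round, incurring an extra $\log(\epsilon^{-1})$ factor absorbed into $\wt{\Or}$, whereas the paper shrinks the width $h=(r-l)/6$ geometrically with the search interval so that the total query count is a geometric sum dominated by the final round.
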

}
\REV{To the best of our knowledge, this is also the first algorithm that can estimate the ground-state energy with $\wt{\Or}(\gamma^{-1}\epsilon^{-1})$ query complexity using only a constant number of ancilla qubits.}

\red{
We can see from the two theorems stated above that the trade-off between the query depth and the query complexity, which is also shown in Table \ref{tab:compare_algs_energy}: the short query depth algorithm has a $\wt{\Or}(\gamma^{-2})$ dependence on $\gamma$, which is sub-optimal. This is compensated by the fact that the query depth is only logarithmic in $\gamma$, which can be significantly smaller than that required in the near-optimal algorithm. Similar trade-off exists for the ground-state preparation algorithms (Theorems \ref{thm:ground_state_prep_with_bound_wo_AA} and \ref{thm:ground_state_prep_with_bound}), as shown in Table \ref{tab:compare_algs_state}.
}

We first discuss in \cref{sec:the_short_depth_algorithms} the quantum algorithms to solve these tasks with short query depth. 
As a side note, when the initial state is indeed an eigenstate of $H$, \cref{thm:ground_state_energy_woaa} also directly gives rise to a new algorithm for performing QPE using \QETU that achieves the Heisenberg-limited precision scaling (see \cref{sec:qpe_revisit}). 
Finally, assuming access to $(n+1)$-bit Toffoli gates, \cref{sec:the_near_optimal_algorithms} describes the quantum algorithms for solving the ground-state preparation and energy estimation problems with near-optimal complexity. 

\red{In Sections \ref{sec:the_short_depth_algorithms}, \ref{sec:qpe_revisit}, and \ref{sec:the_near_optimal_algorithms}, we will mostly describe the algorithms to solve these tasks and state the results as lemmas and theorems along the way. We believe this can help readers better grasp the whole picture. A exception is the proofs of Theorems \ref{thm:ground_state_energy_woaa} and \ref{thm:ground_state_energy}, which are presented as formal proofs.}

\subsection{Algorithms with short query depths}\label{sec:the_short_depth_algorithms}

Let us first focus on the ground-state preparation problem. 
We first consider a simple setting in which we assume knowledge of a parameter $\mu$ such that
\begin{equation}
\label{eq:defn_mu}
    \lambda_0\leq \mu-\Delta/2<\mu+\Delta/2\leq \lambda_1,
\end{equation}
where $\lambda_1$ is the first excited state energy.
We need to find a polynomial approximation to the shifted sign function 
\[
\theta(x-\mu)=\begin{cases}
1, & x\le \mu,\\
0, & x > \mu,
\end{cases}
\]
and the polynomial should satisfy the requirement in \cref{thm:qet_unitary}.
To this end, given a number $0<c<1$, we would like to find a real polynomial $f(x)$ satisfying
\begin{equation}
\abs{f(x)-c}\le \epsilon, \quad \forall x\in [\eta,\mu-\Delta/2]; \quad \abs{f(x)}\le \epsilon, \quad \forall x\in [\mu+\Delta/2,\pi-\eta].
\end{equation}
As will be discussed \cref{sec:convex}, it is preferable to choose $c$ to be slighter smaller than $1$ to avoid numerical overshooting. Compared to $c=1$, this has a negligible effect in practice and does not affect the asymptotic scaling of the algorithm.
Taking the cosine transformation in \cref{thm:qet_unitary} into account, we need to find a real even polynomial satisfying 
\begin{equation}
\abs{F(x)-c}\le \epsilon, \quad x\in [\sigma_{+},\sigma_{\max}]; \quad \abs{F(x)}\le \epsilon, \quad x\in [\sigma_{\min},\sigma_{-}]; \quad \abs{F(x)}\le 1, \quad x\in[-1,1],
\label{eqn:Fground_requirement}
\end{equation}
where
\begin{equation}
\sigma_{\pm}=\cos \frac{\mu \mp \Delta/2}{2}, \quad 
\sigma_{\min}=\cos \frac{\pi-\eta}{2}, \quad \sigma_{\max}=\cos\frac{\eta}{2}.
\end{equation}
Here we have used the fact that $\cos(\cdot)$ is a monotonically decreasing function on $[0,\pi/2]$.

To find such a polynomial $F(x)$, we may use the result in \cite[Corollary 7]{LowChuang2017a}, which constructs a polynomial of degree $\Or(\Delta^{-1}\log \epsilon^{-1})$ for $c=1$ and any $\mu\in[\eta,\pi-\eta]$. 
This algorithm first replaces the discontinuous shifted sign function by a continuous approximation using error functions (need to shift both horizontally and vertically, and symmetrize to get an even polynomial), and then truncates a polynomial expansion of the resulting smooth function.
The construction is specific to the shifted sign function. Its implementation relies on modified Bessel functions of the first kind, which should be carefully treated to ensure numerical stability especially when $\Delta$ is small.
In \cref{sec:convex}, we introduce a simple convex-optimization-based method for generating a near-optimal approximation, which does not rely on any analytic computation. The convex optimization procedure can be used not only to approximate the shifted sign function, but also to find polynomial approximations in a wide range of settings.
The process of obtaining the phase factors can also be streamlined using QSPPACK~\cite{DongMengWhaleyEtAl2021}. The details of this procedure is described in \cref{sec:convex}, and an example of the optimal approximate polynomial is given in \cref{fig:poly_ground}.

We can then run the \QETU circuit to apply $f(H)=F(\cos(H/2))$ to an initial guess $\ket{\phi_0}$. If $\ket{\phi_0}$ has a non-zero component in the direction of the ground state $\ket{\psi_0}$, then $f(H)$ will preserve this component up to a factor $c\approx 1$, but will suppress the orthogonal component by a factor $\epsilon\approx 0$, thus giving us a quantum state close to the ground state. This procedure does not always succeed due to the non-unitary nature of $f(H)$, and consequently we need to repeat it multiple times until we get a success. The number of repetitions needed is $\Or(\gamma^{-2}\log(\vartheta^{-1}))$ to guarantee a success probability of at least $1-\vartheta$.
The result is summarized in \cref{thm:ground_state_prep_with_bound_wo_AA}:
\begin{thm}[ground-state preparation using \QETU]
\label{thm:ground_state_prep_with_bound_wo_AA}
Under the same assumptions as in Definition \ref{defn:ground_state_prep}, with the additional assumption that we have $\mu$ satisfying \cref{eq:defn_mu}, we can prepare the ground state up to fidelity $1-\epsilon$, with probability at least $2/3$, with the following cost: 
\begin{enumerate}
    \item $\wt{\Or}(\gamma^{-2}\Delta^{-1}\log(\epsilon^{-1}))$ queries to \red{(controlled-)} $U$ and $\Or(\gamma^{-2})$ queries to $U_I$. 
    \item One ancilla qubit.
    \item $\wt{\Or}(\gamma^{-2}\Delta^{-1}\log(\epsilon^{-1}))$ additional one-qubit quantum gates.
    \item  $\wt{\Or}(\Delta^{-1}\log(\epsilon^{-1}\gamma^{-1}))$ maximal query depth of $U$.
\end{enumerate}
\end{thm}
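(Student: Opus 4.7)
My plan is to apply \cref{thm:qet_unitary} with a polynomial that approximates the shifted sign function and to amplify the per-trial success by plain repetition. First I would choose a polynomial accuracy $\epsilon'$ and a ``hedge'' value $c\in(0,1)$ slightly less than $1$, and invoke the construction already cited in the excerpt (either \cite[Corollary 7]{LowChuang2017a} or the convex-optimization procedure of \cref{sec:convex}) to obtain a real, even polynomial $F$ of degree $d=\Or(\Delta^{-1}\log({\epsilon'}^{-1}))$ satisfying \cref{eqn:Fground_requirement}. Evenness is forced by the cosine transform in \QETU, and the slack $c<1$ absorbs the overshoot that is generic for any non-trivial polynomial approximation of a discontinuous target.

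Next I would feed the symmetric phase factors associated with $F$ into the circuit of \cref{fig:main_circuits}(c), prepare the input register as $\ket{\phi_0}$ using one call to $U_I$, run the \QETU unitary $\mc{U}$ of \cref{thm:qet_unitary}, and measure the ancilla. The outcome $0$ projects the system onto $f(H)\ket{\phi_0}/\norm{f(H)\ket{\phi_0}}$, where $f(x)=F(\cos(x/2))$, with probability $p=\norm{f(H)\ket{\phi_0}}^2$. Expanding $\ket{\phi_0}=\sum_j \alpha_j\ket{\psi_j}$ in the eigenbasis of $H$, the bounds $|f(\lambda_0)-c|\le \epsilon'$, $|f(\lambda_j)|\le \epsilon'$ for $j\ge 1$, and $|\alpha_0|\ge \gamma$ give $p\ge (\gamma c-\epsilon')^2=\Omega(\gamma^2)$ and, conditioned on success, a fidelity with $\ket{\psi_0}$ of at least $1-\Or({\epsilon'}^2/(\gamma c)^2)$. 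Setting $\epsilon'=\Theta(\gamma\sqrt{\epsilon})$ forces this conditional fidelity above $1-\epsilon$ and keeps $d=\wt{\Or}(\Delta^{-1}\log(\epsilon^{-1}\gamma^{-1}))$, which matches item 4 of the theorem.

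Finally, since each trial succeeds with probability $\Omega(\gamma^2)$, repeating the trial $K=\Or(\gamma^{-2})$ times (with a fresh preparation of $\ket{\phi_0}$ on each repetition) and accepting the first $0$-outcome gives overall success probability at least $2/3$ by a standard geometric argument. Tallying resources across $K$ trials yields $\wt{\Or}(\gamma^{-2}\Delta^{-1}\log(\epsilon^{-1}))$ queries to controlled-$U$ and the same count of extra single-qubit rotations (items 1 and 3), $\Or(\gamma^{-2})$ queries to $U_I$ (item 1), one ancilla qubit (item 2), and maximal coherent query depth $d=\wt{\Or}(\Delta^{-1}\log(\epsilon^{-1}\gamma^{-1}))$ (item 4).

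The main obstacle I anticipate is the fidelity-versus-precision bookkeeping in the second paragraph: the polynomial error $\epsilon'$ enters the numerator of the fidelity expression through leaked excited-eigenstate amplitudes and simultaneously enters the denominator via the shrinkage $c<1$ and the possibility that $\lambda_0$ sits near the boundary $\mu-\Delta/2$. Composing these bounds so that the final scaling loses only the advertised single $\log(\gamma^{-1})$ factor in the polynomial degree, rather than a full power of $\gamma^{-1}$, is the one step that needs careful attention; every other piece is a direct invocation of \cref{thm:qet_unitary} or an elementary repetition argument.
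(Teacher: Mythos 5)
Your proposal is correct and follows essentially the same route as the paper: apply \QETU{} with an even polynomial approximation to the shifted sign function of degree $\Or(\Delta^{-1}\log(\epsilon'^{-1}))$, post-select on the ancilla with per-trial success probability $\Omega(\gamma^2)$, and repeat $\Or(\gamma^{-2})$ times. Your choice $\epsilon'=\Theta(\gamma\sqrt{\epsilon})$ to control the conditional fidelity is precisely the ``subnormalization'' accounting the paper invokes to explain the $\log(\gamma^{-1})$ factor in the query depth, so the bookkeeping you flag as the main obstacle is handled exactly as the authors intend.
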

\REV{Again, using the short query depth algorithm, we do not need to use extra two-qubit gates beyond what is needed in (controlled-) $U$.}
We can repeat the procedure multiple times to make the success probability exponentially close to $1$.
In this algorithm, to be more precise than the $\wt{\Or}$ notation used in Theorem \ref{thm:ground_state_prep_with_bound_wo_AA}, we need $\Or(\gamma^{-2}\Delta^{-1}\log(\gamma^{-1}\epsilon^{-1}))$ queries to $U$. There is a logarithmic dependence on $\gamma^{-1}$ because we need to account for subnormalization that comes from post-selecting measurement results when analyzing the error. Success of the above procedure is flagged by the measurement outcome of the ancilla qubit.

For ground-state energy estimation, our strategy is to adapt the binary search algorithm in \cite[Theorem 8]{LinTong2020a} to the current setting. In order to estimate the ground-state energy with increasing precision, we need to repeatedly solve a decision problem:

\begin{defn}[The fuzzy bisection problem]
\label{defn:the_fuzzy_bisection_problem}
Under the same assumptions as in Definition \ref{defn:ground_state_energy}, we are asked to solve the following problem: output $0$ when $\lambda_0\leq x-h$, and output $1$ when $\lambda_0\geq x+h$.
\end{defn}

Here the fuzziness is in the fact that when $x-h<\lambda_0<x+h$ we are allowed to output either $0$ or $1$. This is in fact essential for making this problem efficiently solvable. \red{Solving the fuzzy bisection problem will enable us to find the ground-state energy through binary search. We will discuss the details in the proof of Theorem \ref{thm:ground_state_energy_woaa}.}

To solve the fuzzy bisection problem, we need a real even polynomial $F(x)$ satisfying the following: 
\begin{equation}
    \label{eq:F_poly_requirement}
    \begin{aligned}
    & c-\epsilon'\leq F(x)\leq c+\epsilon',\ & x\in[\cos((x-h)/2),1] \\
    & |F(x)|\leq \epsilon',\ & x\in[0,\cos((x+h)/2)]
    \end{aligned}
\end{equation}
For asymptotic analysis, we can use the approximate sign function from \cite[Corollary 6]{LowChuang2017a}, and the degree of $F(x)$ is $\Or(h^{-1}\log(\epsilon'^{-1}))$. With a choice of $F(x)$ that satisfies the above requirements, if $\lambda_0\geq x+h$, then $\|F(\cos(H/2))\ket{\phi_0}\|\leq\epsilon'$; if $\lambda_0\leq x-h$, then 
\[
\|F(\cos(H/2))\ket{\phi_0}\|\geq \|F(\cos(\lambda_0/2))\ket{\phi_0}\| \geq (c-\epsilon')\gamma.
\]
Therefore, after choosing $\epsilon'=\gamma c/(2(\gamma+1))$, to solve the fuzzy bisection problem, we only need to distinguish between the following two cases: $\|F(\cos(H/2))\ket{\phi_0}\|\leq\epsilon'=\gamma c/(2(\gamma+1))$ or $\|F(\cos(H/2))\ket{\phi_0}\|\geq(c-\epsilon')\gamma=(\gamma+2)\gamma c/(2(\gamma+1))$. 
\red{These two cases are well separated, because
\[
\frac{(\gamma+2)\gamma c}{2(\gamma+1)} - \frac{\gamma c}{2(\gamma+1)} = \frac{\gamma c}{2}.
\]
Hence these two quantities are separated by a gap of order $\Omega(\gamma)$, which enables us to distinguish between them using a modified version of amplitude estimation, as will be discussed later.
}
A block encoding of $F(\cos(H/2))$ can be constructed using \QETU, which we denote by $U_{\mathrm{proj}}$:
\begin{equation}
    \label{eq:defn_U_proj}
    (\bra{0}\otimes I_n) U_{\mathrm{proj}} (\ket{0}\otimes I_n) = F(\cos(H/2)).
\end{equation}
Because of the estimate of the degree of $F(x)$, $U_{\mathrm{proj}}$ here uses $\Or(h^{-1}\log(\gamma^{-1}))$ queries to $U=e^{-iH}$.
All we need to do is to distinguish between the following two cases:
\[
\|(\bra{0}\otimes I)U_{\mathrm{proj}}(I\otimes U_I)(\ket{0}\ket{0^n})\|\leq \frac{\gamma c}{2(\gamma+1)}\ \text{or}\ \|(\bra{0}\otimes I)U_{\mathrm{proj}}(I\otimes U_I)(\ket{0}\ket{0^n})\| \geq \frac{(\gamma+2)\gamma c}{2(\gamma+1)}.
\]

This problem can be generalized into the following binary amplitude estimation problem:
\begin{defn}[Binary amplitude estimation]
\label{defn:binary_amplitude_estimation}
Let $W$ be a unitary acting on two registers 
(one with one qubit and the other with $n$ qubits), with the first register indicating success or failure. Let $A=\|(\bra{0}\otimes I_n)W(\ket{0}\ket{0^n})\|$ be the success amplitude. Given $0\leq \gamma_1<\gamma_2$, provided that $A$ is either smaller than $\gamma_1$ or greater than $\gamma_2$, we want to correctly distinguish between the two cases, i.e. output 0 for the former and 1 for the latter. 
\end{defn}

In \red{the context of the fuzzy bisection problem in Definition \ref{defn:the_fuzzy_bisection_problem},} we need to choose $W=U_{\mathrm{proj}}(I\otimes U_I)$, $\gamma_1=\gamma c/(2(\gamma+1))$, $\gamma_2 = (\gamma+2)\gamma c/(2(\gamma+1))$. Note that $\gamma_2/\gamma_1 = \gamma+2\geq 2$, and therefore henceforth we only consider the case where for some constant $c'$ we have $\gamma_2/\gamma_1\geq c'$.

Now we can use Monte Carlo sampling to estimate $A=\|(\bra{0}\otimes I_n)W(\ket{0}\ket{0^n})\|$.  
We will estimate how many samples are needed to distinguish whether $A\geq \gamma_2$ or $A\leq \gamma_1$. 
We implement $W\ket{0}\ket{0^n}$ and measure the first qubit, and the output will be a random variable, taking value in $\{0,1\}$, following the Bernoulli distribution and its expectation value is $1-A^2$. We denote $p_1=1-\gamma_1^2$ and $p_2=1-\gamma_2^2$. We will generate $N_s$ samples and check whether the average is larger than $p_{1/2}=(p_1+p_2)/2$ (in which case we choose to believe that $A\leq \gamma_1$), or the average is smaller than $p_{1/2}$ (in which case we choose to believe that $A\geq \gamma_2$). By the Chernoff--Hoeffding Theorem, the error probability is upper bounded by
\begin{equation}
    \max\big\{e^{-D(p_{1/2}||p_1)N_s},e^{-D(p_{1/2}||p_2)N_s}\big\},
\end{equation}
where 
\[
D(x||y)=x\log(x/y)+(1-x)\log((1-x)/(1-y))
\]
is the Kullback--Leibler divergence between Bernoulli distributions. Direct calculation, using the fact that $\gamma_2\geq c'\gamma_1$, shows that $D(p_{1/2}||p_1),D(p_{1/2}||p_2)\geq \Omega(\gamma_1^2)$. Therefore to ensure that the error probability is below $\vartheta'$, we only need to choose $N_s$ such that $e^{-\Omega(\gamma_1^2 N_s)}\leq \vartheta'$. Thus we get the scaling of $N_s=\Or(\gamma_1^{-2}\log(\vartheta'^{-1}))$.
From the above analysis, we have the following lemma.
\begin{lem}[Monte Carlo method for solving binary estimation]
\label{lem:binary_amplitude_estimation_wo_AA}
The binary amplitude estimation problem in Definition~\ref{defn:binary_amplitude_estimation}, with the additional assumption that there exists constant $c'>0$ such that $\gamma_2/\gamma_1\geq c'$, can be solved correctly with probability at least $1-\vartheta'$ by querying $W$ $\Or(\gamma_1^{-2}\log(\vartheta'^{-1}))$ times, and this procedure does not require additional ancilla qubits besides the ancilla qubits already required in $W$. The maximal query depth of $W$ is $\Or(1)$.
\end{lem}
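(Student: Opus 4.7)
The plan is to estimate the success probability of $W$ by direct Monte Carlo sampling followed by a Chernoff--Hoeffding concentration bound. First I would observe that preparing $W\ket{0}\ket{0^n}$ and measuring the first register in the computational basis returns a Bernoulli random variable that equals $1$ with probability $1-A^2$. Repeating this $N_s$ times produces i.i.d.\ samples $X_1,\ldots,X_{N_s}$, each costing a single query to $W$ and using no qubits beyond those already internal to $W$. This step alone yields the $\Or(1)$ maximal query depth and the ancilla qubit claim, independently of the value of $N_s$.

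Next I would set $p_1=1-\gamma_1^2$, $p_2=1-\gamma_2^2$, and $p_{1/2}=(p_1+p_2)/2$, and adopt the threshold decision rule: output $0$ (i.e., ``$A\le\gamma_1$'') if the empirical mean $\bar X := N_s^{-1}\sum_i X_i$ exceeds $p_{1/2}$, and output $1$ otherwise. Under either hypothesis, the Chernoff--Hoeffding bound for Bernoulli tails gives a misclassification probability of at most
\[
\max\bigl\{e^{-N_s D(p_{1/2}\Vert p_1)},\; e^{-N_s D(p_{1/2}\Vert p_2)}\bigr\},
\]
where $D(\cdot\Vert\cdot)$ denotes the binary Kullback--Leibler divergence.

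The main technical step, and what I expect to be the main obstacle, is proving $D(p_{1/2}\Vert p_j)\ge \kappa\gamma_1^2$ for $j=1,2$ with a constant $\kappa>0$ depending only on $c'$. I would establish this by expanding $D$ directly using the identities $1-p_{1/2}=(\gamma_1^2+\gamma_2^2)/2$, $1-p_1=\gamma_1^2$, and $1-p_2=\gamma_2^2$. The dominant contribution comes from a term of the form $(1-p_{1/2})\log\bigl((1-p_{1/2})/(1-p_j)\bigr)$, whose argument is bounded away from $1$ by a constant function of $c'$ (using the hypothesis $\gamma_2\ge c'\gamma_1$ with $c'$ bounded away from $1$) while the prefactor is $\Theta(\gamma_1^2)$; a Taylor estimate of the remaining $p_{1/2}\log(p_{1/2}/p_j)$ term shows it contributes only an $\Or(\gamma_1^2)$ correction of smaller magnitude, so a positive lower bound of the form $\kappa\gamma_1^2$ survives. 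Pinsker's inequality alone would give only $D=\Omega(\gamma_1^4)$, so it is essential to use a refined, ``reverse-Pinsker-like'' bound that exploits the closeness of $p_{1/2}$ to $1$.

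Substituting this lower bound into the exponent and solving $e^{-\Omega(\gamma_1^2)\cdot N_s}\le \vartheta'$ yields $N_s=\Or(\gamma_1^{-2}\log(\vartheta'^{-1}))$, which gives the stated total query complexity to $W$ and completes all three claims of the lemma.
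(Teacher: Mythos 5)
Your proposal is correct and follows essentially the same route as the paper: sample the Bernoulli outcome of measuring the first register of $W\ket{0}\ket{0^n}$ (mean $1-A^2$), threshold the empirical mean at $p_{1/2}=(p_1+p_2)/2$, and apply the Chernoff--Hoeffding bound with the exponent controlled by $D(p_{1/2}\|p_j)=\Omega(\gamma_1^2)$. The paper dismisses the divergence lower bound as a ``direct calculation'' while your expansion actually supplies it; just note that your bookkeeping of which term dominates is accurate only for $j=1$ --- for $j=2$ the term $(1-p_{1/2})\log\bigl((1-p_{1/2})/(1-p_2)\bigr)$ is \emph{negative} and of magnitude up to $\Theta(\gamma_2^2)$, so there the positive contribution must come from $p_{1/2}\log(p_{1/2}/p_2)$, and the two cases need to be handled separately (both still yield $\Omega(\gamma_1^2)$ with a constant depending on $c'>1$).
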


\red{
Lemma \ref{lem:binary_amplitude_estimation_wo_AA} enables us to solve the fuzzy bisection problem stated in Definition \ref{defn:the_fuzzy_bisection_problem}.
With the tools introduced above, we can now prove Theorem \ref{thm:ground_state_energy_woaa}. }

\begin{proof}[Proof of Theorem \ref{thm:ground_state_energy_woaa}]
\red{We solve the ground-state energy estimation problem by performing a binary search, and at each search step we need to solve a fuzzy bisection problem, which we know can be done from the above discussion. Below we will discuss how the binary search works, i.e., why repeatedly solving the fuzzy bisection problem can help us find the ground-state energy.
For simplicity of the discussion we set $\eta=\pi/4$ in Definition \ref{defn:ground_state_energy}, i.e., the spectrum of the Hamiltonian is contained in the interval $[\pi/4,3\pi/4]$. 
In each iteration of the binary search we have $l$ and $r$ such that $l\leq \lambda_0 \leq r$. In the first iteration we choose $l=\pi/4$ and $r=3\pi/4$. With $l$ and $r$ we want to solve the following fuzzy bisection problem: output $0$ when $\lambda_0\leq (2l+r)/3$, and output $1$ when $\lambda_0\geq (l+2r)/3$. In other words we let $x=(l+r)/2$ and $h=(r-l)/6$ in Definition~\ref{defn:the_fuzzy_bisection_problem}. After solving this fuzzy bisection problem, if the output is $0$, then we know $l\leq \lambda_0\leq  (l+2r)/3$, and therefore we can update $r$ to be $(l+2r)/3$. Similarly if the output is $1$ we can update $l$ to be $(2l+r)/3$. In this way we get a new pair of $l$ and $r$ such that $l\leq \lambda_0\leq r$ and $r-l$ shrinks by $2/3$.}

\red{
The values of $l$ and $r$ will converge to $\lambda_0$ from both sides, and therefore when $r-l\leq 2\epsilon$, $\lambda_0$ will be within $\epsilon$ distance from $(l+r)/2$, thus giving us the ground-state energy estimate we want. This will take $\lceil\log_{3/2}(\pi\epsilon^{-1}/2)\rceil$ iterations because $r-l$ shrinks by a factor $2/3$ in each iteration and the initial value is $\pi/2$.}

\red{
In our context, we need to perform a binary amplitude estimation at each of the $\Or(\log(\epsilon^{-1}))$ steps to solve the fuzzy bisection problem, and therefore to ensure a final success probability of at least $\vartheta$ we need to choose $\vartheta'=\Theta(\vartheta/\log(\epsilon^{-1}))$ in Lemma \ref{lem:binary_amplitude_estimation_wo_AA}. Since $\gamma_1=\gamma c/(2(\gamma+1))=\Omega(\gamma)$, as discussed immediately after we introduced Definition \ref{defn:binary_amplitude_estimation}, each time we solve the binary amplitude estimation problem we need to use $W=U_{\mathrm{proj}}(I\otimes U_I)$ for $\Or(\gamma^{-2}(\log(\vartheta^{-1})+\log\log(\epsilon^{-1})))$ times. Note that each $U_{\mathrm{proj}}$ requires using $U=e^{-iH}$ for $\Or(h^{-1}\log(\gamma^{-1}))$ times. Adding up for $h$ that decreases exponentially until it is of order $\epsilon$, we can get the estimate for the cost of estimating the ground-state energy, as stated in Theorem \ref{thm:ground_state_energy_woaa}.}
\end{proof}

\subsection{Quantum phase estimation revisited}\label{sec:qpe_revisit}

As an application of the ground-state energy estimation algorithm using \QETU, let us revisit the task of performing quantum phase estimation (QPE). Assuming access to a unitary $U=e^{-iH}$ and an eigenstate $\ket{\psi}$ such that $U\ket{\psi}=e^{-i\lambda}\ket{\psi}$, the goal of the phase estimation is to estimate $\lambda$ to precision $\epsilon$. 
This can be viewed \REV{as} the ground-state energy estimation \REV{problem with an} initial overlap $\gamma=1$. Although $\lambda$ may not be the ground-state energy of $H$, other eigenvalues of $H$ do not matter because \REV{the initial state has zero overlap with other eigenstates.}

In order to estimate $\lambda$, we can repeatedly solve the fuzzy bisection problem in \cref{defn:the_fuzzy_bisection_problem}, and this gives us an algorithm that is essentially identical to the one described in the proof of Theorem \ref{thm:ground_state_energy_woaa}.
As a corollary of \cref{thm:ground_state_energy_woaa}, the phase estimation problem can be solved with the following cost that achieves the Heisenberg limit:
\begin{cor}[Phase estimation]
\label{cor:phase_estimation}
Suppose we are given $U=e^{-iH}$, a quantum state $\ket{\psi}$ such that $U\ket{\psi}=e^{-i\lambda}\ket{\psi}$, where $\lambda\in[\eta,\pi-\eta]$ for some constant $\eta>0$. We can estimate $\lambda$ to precision $\epsilon$ with probability at least $1-\vartheta$ using $\wt{\Or}(\epsilon^{-1}\log(\vartheta^{-1}))$ applications to (controlled) $U$ and its inverse, a single copy of $\ket{\psi}$, and $\Or(\epsilon^{-1}(\log(\vartheta^{-1})+\log\log(\epsilon^{-1})))$ additional one qubit gates.
 \end{cor}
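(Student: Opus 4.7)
The plan is to specialize Theorem \ref{thm:ground_state_energy_woaa} to the case $\gamma=1$ and explain why the eigenstate hypothesis collapses the $U_I$ cost to a single state preparation.

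First I would set up the binary search exactly as in the proof of Theorem \ref{thm:ground_state_energy_woaa}: maintain an interval $[l,r]\ni\lambda$, initialized to $[\eta,\pi-\eta]$, and at each of $\lceil\log_{3/2}((\pi-2\eta)/(2\epsilon))\rceil=\Or(\log(\epsilon^{-1}))$ rounds solve the fuzzy bisection problem (Definition \ref{defn:the_fuzzy_bisection_problem}) with $x=(l+r)/2$ and $h=(r-l)/6$. Each fuzzy bisection is reduced to a binary amplitude estimation on $W=U_{\mathrm{proj}}$, where $U_{\mathrm{proj}}$ is the \QETU circuit built from an even real polynomial $F$ of degree $\Or(h^{-1})$ approximating the shifted sign function (via \cite[Corollary 6]{LowChuang2017a}). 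Since $\gamma=1$, the separation parameter $\gamma_1$ in \cref{lem:binary_amplitude_estimation_wo_AA} is an absolute constant, so the Monte Carlo sample count per fuzzy bisection is merely $\Or(\log(\vartheta'^{-1}))$ with $\vartheta'=\Theta(\vartheta/\log(\epsilon^{-1}))$ absorbing the union bound.

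Next I would justify that a single copy of $\ket{\psi}$ suffices. Because $\ket{\psi}$ is an eigenvector of $H$, it is also an eigenvector of every function of $H$; in particular the \QETU circuit $\mathcal{U}_{\mathrm{proj}}$ acts on $\ket{0}\ket{\psi}$ as
\begin{equation*}
\mathcal{U}_{\mathrm{proj}}\ket{0}\ket{\psi}=\alpha\ket{0}\ket{\psi}+\beta\ket{1}\ket{\psi},
\end{equation*}
with $|\alpha|^2+|\beta|^2=1$, because every block of $\mathcal{U}_{\mathrm{proj}}$ is a polynomial in $\cos(H/2)$ and hence leaves $\ket{\psi}$ invariant up to a scalar. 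Measuring the ancilla therefore leaves the system register in $\ket{\psi}$ (up to an irrelevant global phase) regardless of outcome. The same copy of $\ket{\psi}$ can be fed back through every Monte Carlo sample and every round of the binary search, so one preparation is enough.

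Summing the query costs, $h$ decreases geometrically from $\Or(1)$ to $\Or(\epsilon)$, so $\sum h^{-1}=\Or(\epsilon^{-1})$; multiplying by the per-round sample count $\Or(\log(\vartheta^{-1})+\log\log(\epsilon^{-1}))$ yields a total of $\wt{\Or}(\epsilon^{-1}\log(\vartheta^{-1}))$ calls to (controlled-) $U$, matching the Heisenberg limit up to logarithmic factors. Each layer of the \QETU circuit contributes a single ancilla $X$-rotation per $U$-query, which gives the stated one-qubit gate count $\Or(\epsilon^{-1}(\log(\vartheta^{-1})+\log\log(\epsilon^{-1})))$. The only delicate point is the single-copy claim: one must use that $\ket{\psi}$ is an \emph{exact} eigenstate rather than merely having large overlap, so that the post-measurement state is exactly $\ket{\psi}$ and no fidelity loss compounds over the $\Or(\log(\epsilon^{-1}))$ binary-search iterations. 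This is precisely what distinguishes \cref{cor:phase_estimation} from a literal $\gamma=1$ substitution into \cref{thm:ground_state_energy_woaa}.
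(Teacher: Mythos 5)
Your proposal is correct and follows essentially the same route as the paper: specialize the binary-search/fuzzy-bisection algorithm of Theorem \ref{thm:ground_state_energy_woaa} to $\gamma=1$ (so the Monte Carlo sample count per round becomes $\Or(\log(\vartheta'^{-1}))$ and the polynomial degree loses its $\log(\gamma^{-1})$ factor), and justify the single copy of $\ket{\psi}$ by noting that an exact eigenstate is preserved by the \QETU circuit, so the system register factors out and survives every ancilla measurement. Your explicit write-out of the post-measurement state and the geometric summation of query costs is a faithful elaboration of the paper's two-sentence argument.
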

 
It may require some explanation as to why we only need a single copy of $\ket{\psi}$ rather than repeatedly apply a circuit that prepares $\ket{\psi}$. This is because $\ket{\psi}$ is an eigenstate and consequently will be preserved up to a phase factor in the circuit depicted in Figure \ref{fig:main_circuits} (c). Therefore it can be reused throughout the algorithm and there is no need to prepare it more than once.

\subsection{Algorithms with near-optimal query complexities}
\label{sec:the_near_optimal_algorithms}

With a block encoding input model, Theorems 6 and 8 in Ref.~\cite{LinTong2020a} \REV{are} near-optimal algorithms for preparing the ground state and for estimating the ground-state energy, respectively. 
In this section, we combine \QETU with amplitude amplification and a new binary amplitude estimation method to yield quantum algorithms with the same near-optimal query complexities. For amplitude estimation we avoid using quantum Fourier transform as it would require an additional register of qubits. Instead we use a procedure described in Appendix \ref{sec:binary_amplitude_estimation} based on \QETU.
Unlike previous near-term methods for amplitude estimation \cite{wang2019accelerated,WangKohEtAl2021minimizing} that typically rely on Bayesian inference techniques, and thus require knowledge of a prior distribution, our method does not require such prior knowledge. One could also adapt the QFT-free approximate counting algorithms in \cite{Wie2019simpler,AaronsonRall2020quantum} to the amplitude estimation problem, but our approach in Appendix \ref{sec:binary_amplitude_estimation} is better tailored for the \QETU framework.

We need to use amplitude amplification to quadratically improve the $\gamma$ dependence in Theorem \ref{thm:ground_state_prep_with_bound_wo_AA}, and to achieve the near-optimal query complexity for preparing the ground state in Definition \ref{defn:ground_state_prep} (assuming knowledge of $\mu$). Let us first study the number of ancilla qubits needed for this task.
For amplitude amplification we need to construct a reflection operator around the initial guess $\ket{\phi_0}$. This requires implementing $2\ket{0^n}\bra{0^n}-I$, which is equivalent, using phase kickback, to implementing an $(n+1)$-bit Toffoli gate:
\[
\ket{1^n}\bra{1^n}\otimes \sigma_x + (I-\ket{1^n}\bra{1^n})\otimes I.
\]
This $(n+1)$-bit Toffoli gate can be implemented using $\Or(n)$ elementary one- or two-qubit gates, on $n+2$ qubits \cite[Corollary 7.4]{BarencoBennettEtAl1995elementary}.
Note that this can be a relatively costly operation on early fault-tolerant quantum devices. We need two ancilla qubits to implement the reflection operator, but one of them can be reused for other purposes. The reason is as follows: one ancilla qubit is the one that $\sigma_x$ acts on conditionally in the $(n+1)$-bit Toffoli gate. This one cannot be reused because it needs to start from $\ket{0}$ and will be returned to $\ket{0}$. The other qubit, however, can start from any state and will be returned to the original state, as discussed in \cite[Corollary 7.4]{BarencoBennettEtAl1995elementary}, and therefore we can use any qubit in the circuit for this task, except for the $n+1$ qubits already involved in the $(n+1)$-bit Toffoli gate. Note in Theorem \ref{thm:ground_state_prep_with_bound_wo_AA} we have one ancilla qubit that is used for \QETU. This qubit can therefore serve as the ancilla qubit needed in implementing the $(n+1)$-bit Toffoli gate. Thus we only need two ancilla qubits in the whole procedure.

To summarize the cost, we have the following theorem: 
\begin{thm}[near-optimal ground-state preparation with \QETU and amplitude amplification]
\label{thm:ground_state_prep_with_bound}
Under the same assumptions as in Definition \ref{defn:ground_state_prep}, with the additional assumption that we have $\mu$ satisfying \eqref{eq:defn_mu}, we can prepare the ground state, with probability $2/3$, up to fidelity $1-\epsilon$ with the following cost:
\begin{enumerate}
    \item $\wt{\Or}(\gamma^{-1}\Delta^{-1}\log(\epsilon^{-1}))$ queries to \red{(controlled-)} $U$ and $\Or(\gamma^{-1})$ queries to $U_I$.
    \item Two ancilla qubits.
    \item $\wt{\Or}(n\gamma^{-1}\Delta^{-1}\log(\epsilon^{-1}))$ additional one- and two-qubit quantum gates.
    \item $\wt{\Or}(\gamma^{-1}\Delta^{-1}\log(\epsilon^{-1}))$ query depth for $U$.
\end{enumerate}
\end{thm}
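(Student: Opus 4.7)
The plan is to take the short-depth ground-state preparation scheme from Theorem~\ref{thm:ground_state_prep_with_bound_wo_AA} and wrap it in amplitude amplification so as to replace the $\Or(\gamma^{-2})$ repetitions with an $\Or(\gamma^{-1})$ coherent amplification. Concretely, following the construction in \cref{sec:the_short_depth_algorithms}, I would first use the convex-optimization-based design together with \cref{thm:qet_unitary} to obtain a symmetric phase sequence implementing a real even polynomial $F$ of degree $d=\Or(\Delta^{-1}\log(\gamma^{-1}\epsilon^{-1}))$ satisfying \eqref{eqn:Fground_requirement}. The resulting \QETU circuit $U_{\mathrm{proj}}$ is a block encoding of $F(\cos(H/2))$ in the sense of \eqref{eq:defn_U_proj}, and I define
\begin{equation}
W \;=\; U_{\mathrm{proj}}\,(I\otimes U_I),
\end{equation}
so that $W\ket{0}\ket{0^n}$ has a component of norm $A\ge (1-\Or(\epsilon))\gamma$ in the ``success'' subspace where the \QETU ancilla reads $0$, and this component is $\epsilon$-close (in fidelity) to the ground state $\ket{\psi_0}$.

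Next I would apply fixed-point amplitude amplification in the form used for QSVT-based block encodings (equivalently, oblivious/robust amplitude amplification with the known lower bound $\gamma$ on the initial overlap) to $W$, aiming for success probability at least $2/3$. Each amplification round costs one call to $W$ and one to $W^{\dag}$, together with a reflection $R_{\mathrm{succ}} = (2\ket{0}\bra{0}-I)\otimes I_n$ on the \QETU ancilla (essentially free), and a reflection $R_{\mathrm{init}} = W\bigl(I\otimes(2\ket{0^{n+1}}\bra{0^{n+1}}-I)\bigr)W^{\dag}$ around the input $\ket{0}\ket{0^{n+1}}$. The nontrivial ingredient inside $R_{\mathrm{init}}$ is the multi-controlled reflection around $\ket{0^{n+1}}$, which I implement via phase-kickback from an $(n{+}1)$-bit Toffoli using $\Or(n)$ two-qubit gates and one extra ancilla qubit that is clean only at the boundary, exactly as in \cite[Corollary 7.4]{BarencoBennettEtAl1995elementary}. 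Crucially, the \QETU ancilla qubit can double as the ``borrowed'' qubit needed by that Toffoli decomposition, so the total ancilla count stays at $2$, matching the theorem statement.

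For the complexity accounting, I would multiply the per-round cost of $W$ by the number of rounds: the number of amplification rounds is $\Or(\gamma^{-1})$, each $W$ uses $\Or(d)=\wt{\Or}(\Delta^{-1}\log(\epsilon^{-1}))$ queries to controlled-$U$ and $\Or(1)$ queries to $U_I$, so the totals are $\wt{\Or}(\gamma^{-1}\Delta^{-1}\log(\epsilon^{-1}))$ queries to $U$ and $\Or(\gamma^{-1})$ queries to $U_I$. The additional one- and two-qubit gates are dominated by the $\Or(n)$-gate Toffoli in each round plus the single-qubit $X$-rotations from the phase factors in each \QETU call, giving $\wt{\Or}(n\gamma^{-1}\Delta^{-1}\log(\epsilon^{-1}))$. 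Since amplification is coherent, query depth equals query complexity in the $U$ count, yielding $\wt{\Or}(\gamma^{-1}\Delta^{-1}\log(\epsilon^{-1}))$. The final fidelity estimate follows from the usual block-encoding/amplification error analysis, propagating the $\epsilon$-error in \eqref{eqn:Fground_requirement} through a bounded number of $W, W^{\dag}$ applications, possibly after a mild rescaling of $c$ in the polynomial design.

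The main obstacle I anticipate is handling the fact that $\gamma$ is only known as a lower bound, which rules out vanilla Grover-style amplification (whose phase schedule depends on the exact amplitude). I would address this by using fixed-point amplitude amplification with phase factors chosen via \QETU applied to the odd polynomial approximation of the sign function, which converts a lower bound $A\ge\gamma$ into amplified amplitude $\ge \sqrt{2/3}$ with $\Or(\gamma^{-1}\log(1))=\Or(\gamma^{-1})$ rounds and no knowledge of the exact $A$; a secondary concern is that the reflection $R_{\mathrm{init}}$ must be implemented with exactly $2$ ancilla qubits total, which is why the reuse of the \QETU ancilla inside the $(n{+}1)$-bit Toffoli needs to be stated carefully rather than hand-waved.
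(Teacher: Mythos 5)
Your proposal matches the paper's own argument essentially step for step: wrap the \QETU-based projector $W=U_{\mathrm{proj}}(I\otimes U_I)$ in amplitude amplification, implement the reflection about $\ket{0^{n+1}}$ by phase kickback through an $(n{+}1)$-bit Toffoli costing $\Or(n)$ two-qubit gates per round, and reuse the \QETU ancilla as the borrowable qubit of \cite[Corollary 7.4]{BarencoBennettEtAl1995elementary} to keep the ancilla count at two. Your explicit handling of the fact that $\gamma$ is only a lower bound (via fixed-point amplification) is a detail the paper leaves implicit in its citation of \cite{BrassardHoyerMoscaEtAl2002}, but it does not change the route or the stated costs.
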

Note that we can repeat this procedure multiple times to make the success probability exponentially close to $1$.

For ground-state energy estimation, in the algorithm described in the proof of Theorem \ref{thm:ground_state_energy_woaa}, our short query depth algorithm has a $\wt{\Or}(\gamma^{-2})$ scaling because of the Monte Carlo sampling in the binary amplitude estimation (Definition \ref{defn:binary_amplitude_estimation}) procedure. Here we will improve the scaling to $\wt{\Or}(\gamma^{-1})$ using the technique developed in \cite[Lemma 7]{LinTong2020a}.
The technique in \cite[Lemma 7]{LinTong2020a} uses phase estimation, and requires $\Or(\log((\gamma_2-\gamma_1)^{-1}))=\Or(\log(\gamma^{-1}))$ ancilla qubits.
In Appendix \ref{sec:binary_amplitude_estimation}, we propose a method to solve the binary amplitude estimation problem using \QETU, which reduces the number of additional ancilla qubits down from $\Or(\log(\gamma^{-1}))$ to two. The result is summarized here:
\begin{lem}[Binary amplitude estimation]
\label{lem:binary_amplitude_estimation}
The binary amplitude estimation problem in Definition~\ref{defn:binary_amplitude_estimation} can be solved correctly with probability at least $1-\vartheta'$ by querying $W$ $\Or((\gamma_2-\gamma_1)^{-1}\log(\vartheta'^{-1}))$ times, and this procedure requires one additional ancilla qubit (besides the ancilla qubits already required in $W$).
\end{lem}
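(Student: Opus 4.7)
The plan is to replace the QFT-based phase estimation subroutine used in~\cite[Lemma 7]{LinTong2020a} with a QETU-based thresholding test on a Grover walk operator, so that only one additional ancilla is needed. First I would construct the standard Grover iterate
\begin{equation*}
Q \;=\; -\,W\,S_{\phi}\,W^{\dagger}\,S_{\chi},\qquad
S_{\phi} = I - 2\ket{0^{n+1}}\!\bra{0^{n+1}},\qquad
S_{\chi} = I - 2(\ket{0}\!\bra{0}\otimes I_n),
\end{equation*}
so that each application of $Q$ costs two queries to (controlled-) $W$. A standard amplitude-amplification calculation shows that on the two-dimensional $Q$-invariant subspace containing $W\ket{0^{n+1}}$, the operator $Q$ acts as a rotation by angle $2\theta$ with $\sin\theta = A$; writing $Q = e^{-iH_Q}$, the restriction of $\cos(H_Q/2)$ to this subspace is the scalar $\cos\theta$. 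The multi-qubit reflections inside $Q$ can be realized using the forthcoming QETU ancilla together with a dirty qubit borrowed from the $n$-qubit register of $W$ (returned to its original state after each reflection), so no further ancilla is introduced.

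Next I would instantiate Theorem~\ref{thm:qet_unitary} with the input unitary taken to be $Q$ and a real even polynomial $F$ of degree $d$ chosen to separate the two hypotheses, namely $F(\cos\theta_1) \geq 1-\epsilon_0$ and $|F(\cos\theta_2)| \leq \epsilon_0$ for a fixed constant $\epsilon_0$ (say $1/4$), where $\theta_j = \arcsin\gamma_j$. The QETU circuit then block-encodes $F(\cos(H_Q/2))$ using one new ancilla qubit. The procedure is: prepare $W\ket{0^{n+1}}\otimes\ket{0}_{\mathrm{anc}}$ (one query to $W$), run the QETU circuit built from $Q$, measure the QETU ancilla, and repeat $\Or(\log(\vartheta'^{-1}))$ times with a majority vote. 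Because $W\ket{0^{n+1}}$ lies entirely in the 2D invariant subspace of $Q$ and this subspace is preserved by $F(\cos(H_Q/2))$, the probability of outcome $\ket{0}$ on each run equals exactly $|F(\cos\theta)|^2$, so the two hypotheses yield measurement biases separated by the fixed constant $(1-\epsilon_0)^2 - \epsilon_0^2$, and the majority vote pushes the failure probability below $\vartheta'$ via the Chernoff--Hoeffding inequality.

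The main technical obstacle is bounding the degree $d$ correctly. A naive approximation of the shifted sign function with jump at $x=(\cos\theta_1+\cos\theta_2)/2$ in the $x=\cos\theta$ variable would use the gap $\cos\theta_1 - \cos\theta_2$, which can be as small as $\gamma_1(\gamma_2-\gamma_1)$ when $\gamma_j \ll 1$, giving a suboptimal query complexity. The key observation is that every degree-$d$ polynomial in $\cos\theta$ is a degree-$d$ cosine polynomial in $\theta$, so the natural resolution for such approximations is in the arccosine variable $\theta$ rather than in $\cos\theta$. Using the approximate sign function of~\cite[Corollary~6]{LowChuang2017a} (or the convex-optimization construction of \cref{sec:convex}) in the $\theta$-variable, one obtains an even polynomial $F$ of degree
\begin{equation*}
d \;=\; \Or\!\bigl((\theta_2-\theta_1)^{-1}\log(\epsilon_0^{-1})\bigr) \;=\; \Or\!\bigl((\gamma_2-\gamma_1)^{-1}\bigr),
\end{equation*}
where the second equality uses the elementary inequality $\theta_2-\theta_1 \geq \gamma_2-\gamma_1$ (since $\arcsin$ has derivative at least $1$ on $[0,1]$). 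Multiplying $d$ by the per-iterate cost of $Q$ and the repetition count yields the advertised $\Or((\gamma_2-\gamma_1)^{-1}\log(\vartheta'^{-1}))$ total queries to $W$, and the only additional ancilla throughout the construction is the single qubit used by QETU.
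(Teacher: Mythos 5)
Your construction is, at its core, the same as the paper's: treat the amplitude-amplification walk operator built from $W$ and the two reflections as $e^{-iH}$ for an effective two-dimensional Hamiltonian whose eigenphases encode $A$, apply \QETU with a step-like even polynomial to block-encode a thresholding function of $A$, and finish with Monte Carlo sampling plus majority voting over $\Or(\log(\vartheta'^{-1}))$ repetitions. The one substantive difference is a sign convention: the paper uses $R_0R_1$ with $R_0=(2\ket{0}\bra{0}-I)\otimes I_n$ and $R_1=W(2\ket{0^{n+1}}\bra{0^{n+1}}-I)W^\dagger$, whose eigenphases are $\mp 2\arccos(A)$, so that $\cos(L/2)$ restricted to the invariant subspace is the scalar $A$ itself and the polynomial gap is $\gamma_2-\gamma_1$ on the nose, with no further argument needed. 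Your $Q=-WS_\phi W^\dagger S_\chi$ carries an extra global minus sign, shifting the eigenphases to $\mp 2\arcsin(A)$ so that $\cos(H_Q/2)=\sqrt{1-A^2}$; this forces you into the arccosine-variable resolution argument. That argument is correct (Bernstein's inequality shows resolution $\Or(1/d)$ in $\theta$ is both necessary and achievable near $x=\pm 1$, and $\theta_2-\theta_1\geq\gamma_2-\gamma_1$), and it is a genuinely useful observation in its own right, but you could have avoided it entirely by dropping the minus sign, which is what the paper implicitly does.

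There is one concrete flaw: the ancilla accounting. You claim the multi-qubit reflection $S_\phi=I-2\ket{0^{n+1}}\bra{0^{n+1}}$ can be realized using the \QETU ancilla together with ``a dirty qubit borrowed from the $n$-qubit register of $W$.'' That borrowing is not possible: all $n+1$ qubits of $W$'s registers are controls of this reflection (and, inside the \QETU circuit, the reflection must additionally be controlled on the \QETU ancilla, since \QETU queries controlled-$Q$), so none of them is available as the dirty qubit in the Barenco et al.\ decomposition. One genuinely needs an extra qubit for the phase-kickback target of the resulting multi-qubit Toffoli. Indeed, the paper's own proof in Appendix \ref{sec:binary_amplitude_estimation} concludes that \emph{two} additional ancilla qubits are required (one for \QETU, one for the Toffoli), consistent with the count of three total ancillas in Theorem \ref{thm:ground_state_energy}; the ``one additional ancilla qubit'' in the main-text statement of Lemma \ref{lem:binary_amplitude_estimation} is at odds with this. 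So your query-complexity and correctness analysis stands, but the one-ancilla claim as justified does not.
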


The key idea is to treat the walk operator in amplitude estimation as a time evolution operator corresponding to a Hamiltonian, and this allows us to apply \QETU to extract information about that Hamiltonian.

As a result of this new method to solve the binary amplitude estimation problem, the ground-state energy estimation problem can now be solved using only three ancilla qubits: one for \QETU, and two others for binary amplitude estimation.

\red{With these results we can now analyze the cost of ground-state energy estimation in the near-optimal setting, and thereby prove Theorem \ref{thm:ground_state_energy}.}
\begin{proof}[Proof of Theorem \ref{thm:ground_state_energy}]
\red{We adopt the same strategy of performing a binary search to locate the ground-state energy, as used in Theorem \ref{thm:ground_state_energy_woaa}. The main difference is that instead of using Monte Carlo sampling to solve the binary amplitude estimation problem, we now use \QETU to do so, with the complexity stated in Lemma \ref{lem:binary_amplitude_estimation}.}

\red{We now count how many times we need to query $U=e^{-iH}$ in this approach. Each time we perform binary amplitude estimation, we need to use $U_{\mathrm{proj}}$ for $\Or(\gamma^{-1}\log(\vartheta'^{-1}))$ times to have at least $1-\vartheta'$ success probability each time. We need to perform binary amplitude estimation for each step of the binary search, and there are in total $\Or(\log(\epsilon^{-1}))$ steps. Therefore to ensure a final success probability of at least $1-\vartheta$ we need to choose $\vartheta'=\Theta(\vartheta/\log(\epsilon^{-1}))$. At the $k$-th binary search step, $U_{\mathrm{proj}}$ uses $U=e^{-iH}$ for $\Or((3/2)^k \log(\gamma^{-1}))$ times. Therefore in total we need to query $U$, up to a constant factor
\[
\sum_{k=0}^{\lceil\log_{3/2}(\pi\epsilon^{-1}/2)\rceil} (3/2)^k \log(\gamma^{-1})\gamma^{-1}\log(\vartheta'^{-1}) = \Or\Big(\epsilon^{-1}\gamma^{-1}\log(\gamma^{-1})\big(\log(\vartheta^{-1})+\log\log(\epsilon^{-1})\big)\Big)
\]
times. This query complexity agrees with that in \cite[Theorem 8]{LinTong2020a} up to a logarithmic factor.
}

\red{When we count the number of additional quantum gates needed, there will be an $n$ dependence, which comes from the fact that we need to implement a reflection operator $2\ket{0^{n+1}}\bra{0^{n+1}}-I$ each time we implement $U_{\mathrm{proj}}$ in the binary amplitude estimation procedure (see Appendix \ref{sec:binary_amplitude_estimation}). These reflection operators require $\wt{\Or}(n\gamma^{-1}\log(\epsilon^{-1}\vartheta^{-1}))$ gates. We also need $\Or(\epsilon^{-1}\gamma^{-1}\log(\vartheta^{-1}))$ additional quantum gates that come from implementing \QETU. Combining the two numbers we get the number of gates as shown in the theorem.}
\end{proof}

\red{When preparing the ground state, the parameter $\mu$ in Theorem \ref{thm:ground_state_prep_with_bound} is generally not known \textit{a priori}. For exactly solvable models or small quantum systems, despite the ability to simulate them classically, one might still want to prepare the ground state, and in such cases $\mu$ is available. In the general case, to} prepare the ground state without knowing a parameter $\mu$ as in Theorem \ref{thm:ground_state_prep_with_bound}, we can first estimate the ground-state energy to within additive error $\Or(\Delta)$, and then run the algorithm in Theorem \ref{thm:ground_state_prep_with_bound}. This results in an algorithm with the following costs:
\begin{thm}\label{thm:ground_state_prep_full}
Under the assumptions stated in Definition \ref{defn:ground_state_prep}, we can prepare the ground state to fidelity at least $1-\epsilon$, with probability at least $1-\vartheta$, with the following cost:
\begin{enumerate}
    \item $\wt{\Or}(\Delta^{-1}\gamma^{-1}\polylog(\epsilon^{-1}\vartheta^{-1}))$ queries to \red{(controlled-)} $U$ and $\Or(\gamma^{-1}\polylog(\Delta^{-1}\epsilon^{-1}\vartheta^{-1}))$ queries to $U_I$.
    \item Three ancilla qubits.
    \item $\wt{\Or}(n\gamma^{-1}\log(\Delta^{-1}\epsilon^{-1}\vartheta^{-1})+\Delta^{-1}\gamma^{-1}\log(\epsilon^{-1}\vartheta^{-1}))$ additional one- and two-qubit quantum gates. %$\wt{\Or}(n\Delta^{-1}\gamma^{-1}\polylog(\epsilon^{-1}))$ 
    \item $\wt{\Or}(\Delta^{-1}\gamma^{-1}\polylog(\epsilon^{-1}\vartheta^{-1}))$ query depth of $U$.
\end{enumerate}
\end{thm}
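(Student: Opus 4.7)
The plan is to follow the two-stage strategy alluded to immediately before the theorem: first produce a sufficiently accurate estimate $\wt{\lambda}_0$ of the ground-state energy, and then use it to construct a valid shift parameter $\mu$ so that the hypothesis of \cref{thm:ground_state_prep_with_bound} is met. Because the two stages are executed sequentially, ancilla qubits allocated in stage one can be released and reused in stage two, so the overall ancilla count is $\max\{3,2\}=3$ as claimed.

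For the first stage, I would invoke \cref{thm:ground_state_energy} with additive error $\epsilon_1 := \Delta/5$ and failure probability $\vartheta/2$, yielding an estimate $\wt{\lambda}_0$ with $|\wt{\lambda}_0 - \lambda_0| \le \Delta/5$ conditional on success. Its costs are read directly off the theorem with $\epsilon$ replaced by $\Delta/5$: in particular, $\wt{\Or}(\Delta^{-1}\gamma^{-1}\log(\vartheta^{-1}))$ queries to (controlled-) $U$ and $\Or(\gamma^{-1}\polylog(\Delta^{-1}\vartheta^{-1}))$ queries to $U_I$, along with the corresponding gate, depth, and ancilla bounds.

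For the second stage, I would set $\mu := \wt{\lambda}_0 + 2\Delta/5$. A short calculation using the gap assumption $\lambda_1 - \lambda_0 \ge \Delta$ then verifies that
\begin{equation*}
\lambda_0 \le \mu - \Delta'/2 < \mu + \Delta'/2 \le \lambda_1
\end{equation*}
with effective gap $\Delta' := 2\Delta/5 = \Omega(\Delta)$, so \eqref{eq:defn_mu} holds and \cref{thm:ground_state_prep_with_bound} is applicable. I would then run that algorithm with target infidelity $\epsilon$, boosting the success probability from the default $2/3$ to $1-\vartheta/2$ by $\Or(\log(\vartheta^{-1}))$ independent repetitions. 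Because $\Delta' = \Omega(\Delta)$, its costs are, up to constants and logarithms, the same as if one had run with gap $\Delta$ directly.

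Summing the two stages and taking a union bound over the two $\vartheta/2$ failure events yields the totals in the theorem. The only subtle piece of bookkeeping is item~3: the $n\gamma^{-1}\log(\Delta^{-1}\epsilon^{-1}\vartheta^{-1})$ contribution counts the $(n+1)$-qubit reflection operators used for amplitude amplification and binary amplitude estimation across both stages, while the $\Delta^{-1}\gamma^{-1}\log(\epsilon^{-1}\vartheta^{-1})$ contribution counts the phase-factor rotations inside the \QETU calls of the second stage; keeping these as an additive bound, rather than a crude product $n\gamma^{-1}\Delta^{-1}\log(\cdots)$, requires separating the per-amplitude-amplification-iteration reflection cost from the per-iteration \QETU-polynomial cost. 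The main obstacle is therefore not conceptual but this resource accounting; in particular, one must verify that the qubit flagging ``success'' of \QETU in stage two, together with the qubit carrying the $\sigma_x$ in the $(n+1)$-bit Toffoli reflection, can be reused from stage one so that the ancilla count remains at three rather than drifting upward.
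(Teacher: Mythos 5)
Your proposal is correct and is essentially the paper's own argument: the paper proves this theorem only by the one-sentence remark preceding it, namely first estimate $\lambda_0$ to additive error $\Or(\Delta)$ via Theorem \ref{thm:ground_state_energy} and then invoke Theorem \ref{thm:ground_state_prep_with_bound} with the resulting $\mu$. Your explicit choice $\mu=\wt{\lambda}_0+2\Delta/5$ with effective gap $2\Delta/5$, the sequential reuse of ancillas to keep the count at three, and the separated accounting of reflection versus \QETU-rotation gates in item~3 all check out and in fact supply details the paper leaves implicit.
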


\section{Convex-optimization-based method for constructing approximating polynomials}
\label{sec:convex}

To approximate an even target function using an even polynomial of degree $d$, we can express the target polynomial as the linear combination of Chebyshev polynomials with some unknown coefficients $\{c_k\}$:
\begin{equation}
F(x)=\sum_{k=0}^{d/2} T_{2k}(x) c_k.
\end{equation}
To formulate this as a discrete optimization problem, we first discretize $[-1,1]$ using $M$ grid points (e.g., roots of Chebyshev polynomials $\{x_j=-\cos\frac{j\pi}{M-1}\}_{j=0}^{M-1}$). One can also restrict them to $[0,1]$ due to symmetry).  We define the coefficient matrix, $A_{jk}=T_{2k}(x_j), \quad k=0,\ldots,d/2$. Then the coefficients for approximating the shifted sign function can be found by solving the following optimization problem
\begin{equation}
\begin{split}
\min_{\{c_k\}} \quad& \max\left\{\max_{x_j\in[\sigma_{\max},\sigma_{+}]} \abs{F(x_j)-c},\max_{x_j\in[\sigma_{\min},\sigma_{-}]} \abs{F(x_j)}\right\}\\
\text{s.t.} \quad & F(x_j)=\sum_{k} A_{jk} c_k, \quad \abs{F(x_j)}\le c, \quad \forall j=0,\ldots,M-1.
\end{split}
\label{eqn:opt_ground}
\end{equation}
This is a convex optimization problem and can be solved using software packages such as CVX~\cite{cvx}. 
The norm constraint $\abs{F(x)}\le 1$ is relaxed to $\abs{F(x_j)}\le c$ to take into account that the constraint can only be imposed on the sampled points, and the values of $|F(x)|$ may slightly overshoot on $[-1,1]\backslash \{x_j\}_{j=0}^{M-1}$. The effect of this relaxation is negligible in practice and we can choose $c$ to be sufficiently close to $1$ (for instance, $c$ can be $0.999$). Since \cref{eqn:opt_ground} approximately solves a min-max problem, it achieves the near-optimal solution (in the sense of the $L^{\infty}$ norm) by definition both in the asymptotic and pre-asymptotic regimes.

Once the polynomial $F(x)$ is given, the Chebyshev coefficients can be used as the input to find the symmetric phase factors using an optimization-based method. 
\REV{Due to the parity constraint, the number of degrees of freedom in the target polynomial $F(x)$ is $\wt{d} := \lceil \frac{d+1}{2} \rceil$. Hence $F(x)$ is entirely determined by its values on $\wt{d}$ distinct points.
We may choose these points to be $x_k=\cos\left(\frac{2k-1}{4\wt{d}}\pi\right)$, $k=1,...,\wt{d}$, which are the positive nodes of the Chebyshev polynomial $T_{2 \wt{d}}(x)$.
The problem of finding the symmetric phase factors can be equivalently solved via the following optimization problem
\begin{equation}\label{eqn:optprob}
    \Phi^* = \argmin_{\substack{\Phi \in [-\pi,\pi)^{d+1},\\
    \text{symmetric.}}} \mc{F}(\Phi),\ \mc{F}(\Phi) := \frac{1}{\wt{d}} \sum_{k=1}^{\wt{d}} \abs{g(x_k, \Phi) - F(x_k)}^2,
\end{equation}
where
\[
g(x,\Phi):=\Re[\braket{0|e^{\I \phi_0 Z} e^{\I \arccos(x) X} e^{\I \phi_1 Z} e^{\I \arccos(x) X} \cdots e^{\I \phi_{d-1} Z} e^{\I \arccos(x) X} e^{\I \phi_d Z}|0}].
\]
The desired phase factor achieves the global minimum of the cost function with $\mc{F}(\Phi^*)=0$. 
It has been found that a quasi-Newton method to solve \cref{eqn:optprob} with a particular symmetric initial guess 
\begin{equation}\label{eqn:phi0}
\Phi^0=(\pi/4,0,0,\ldots, 0,0,\pi/4),
\end{equation}
can robustly find the symmetric phase factors. 
Although the optimization problem is highly nonlinear, the success of the optimization based algorithm can be explained in terms of the strongly convex energy landscape near $\Phi^0$~\cite{WangDongLin2021}. 
Numerical results indicate that on a laptop computer, CVX can find the near-optimal polynomials for $d\sim 5000$. 
Given the target polynomial, the optimization based algorithm can  find phase factors for $d\sim 10000$~\cite{DongMengWhaleyEtAl2021}.
This should be more than sufficient for most QSP-based applications on early fault-tolerant quantum computers.}
The streamlined process of finding near-optimal polynomials and the associated phase factors has been implemented in QSPPACK~\footnote{\url{https://github.com/qsppack/QSPPACK}}.

As an illustrative example of the numerically optimized min-max polynomials, we set $\eta=0.1,\mu=1.0,\Delta=0.4,M=400,c=0.999$. This corresponds to $\sigma_{\min}=0.0500,\sigma_{-}=0.8253,\sigma_{+}=0.9211,\sigma_{\max}=0.9997$.
The resulting polynomial and the pointwise errors with $d=20$ and $d=80$ are shown in \cref{fig:poly_ground}. 
We remark that the polynomial has been reconstructed using the numerically optimized phase factors using QSPPACK, and hence has taken the error in the entire process into account.
We find that the pointwise error of the polynomial approximation satisfies the equioscillation property on each of the intervals $[\sigma_{\min},\sigma_{-}]$,$[\sigma_{+},\sigma_{\max}]$. This resembles the Chebyshev equioscillation theorem of the best polynomial approximation on a single interval (see e.g. ~\cite[Chapter 10]{Trefethen2019Approximation}).
\cref{fig:poly_ground_deg_conv} shows that the maximum pointwise error on the desired intervals converges exponentially with the increase of the polynomial degree. 

\begin{figure}
\begin{center}
\subfloat[]{\includegraphics[width=0.3\textwidth]{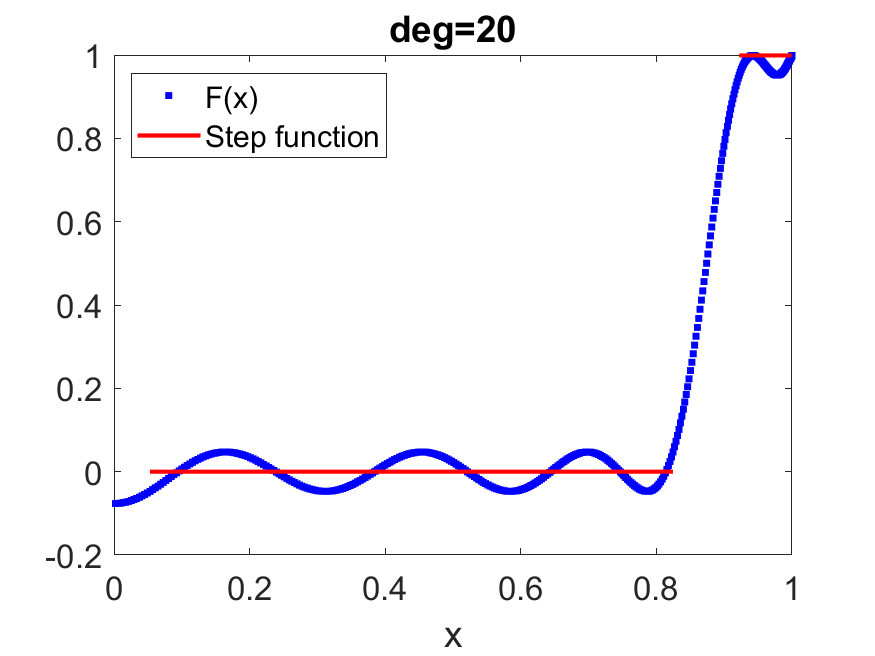}}
\subfloat[]{\includegraphics[width=0.3\textwidth]{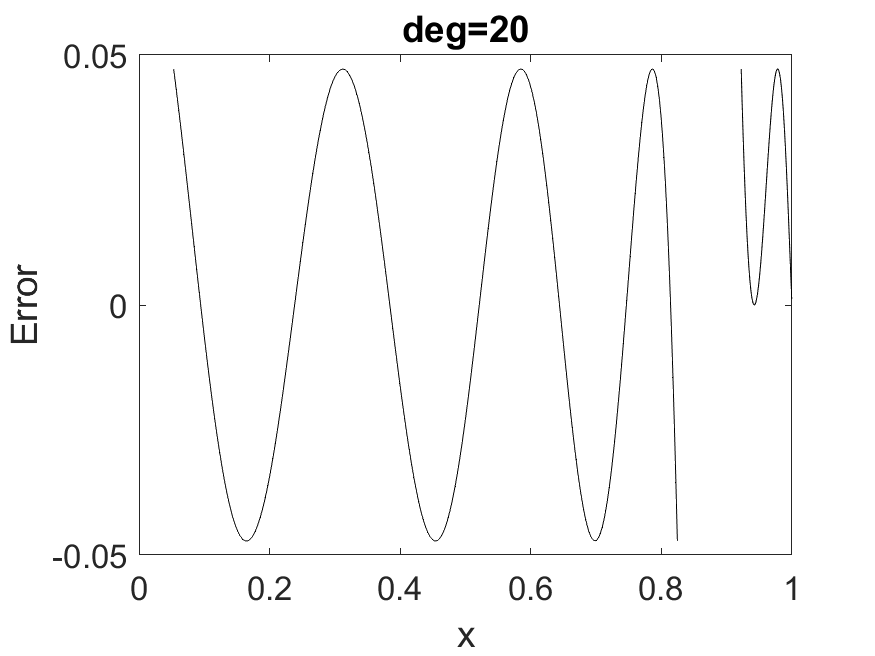}}

\subfloat[]{\includegraphics[width=0.3\textwidth]{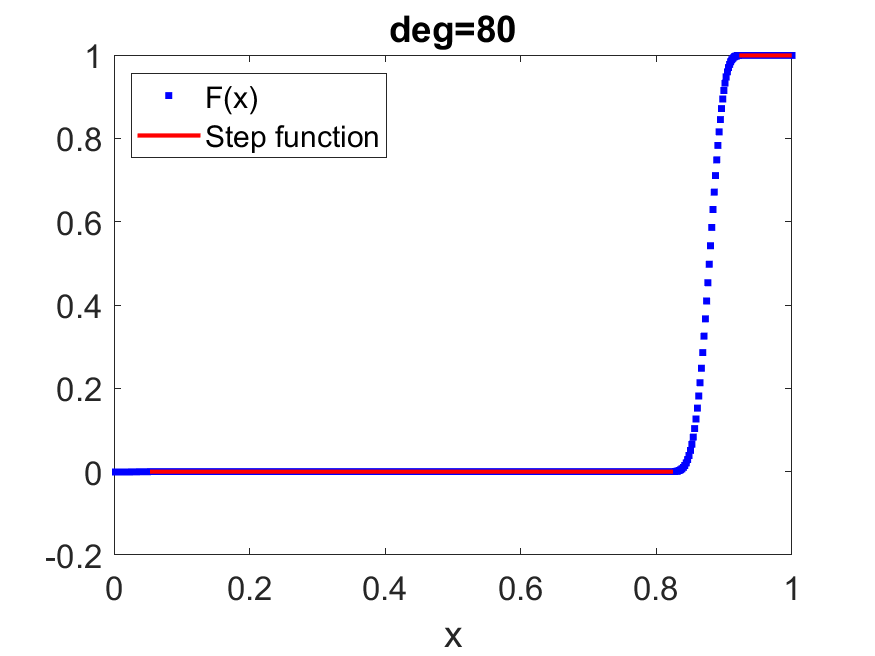}}
\subfloat[]{\includegraphics[width=0.3\textwidth]{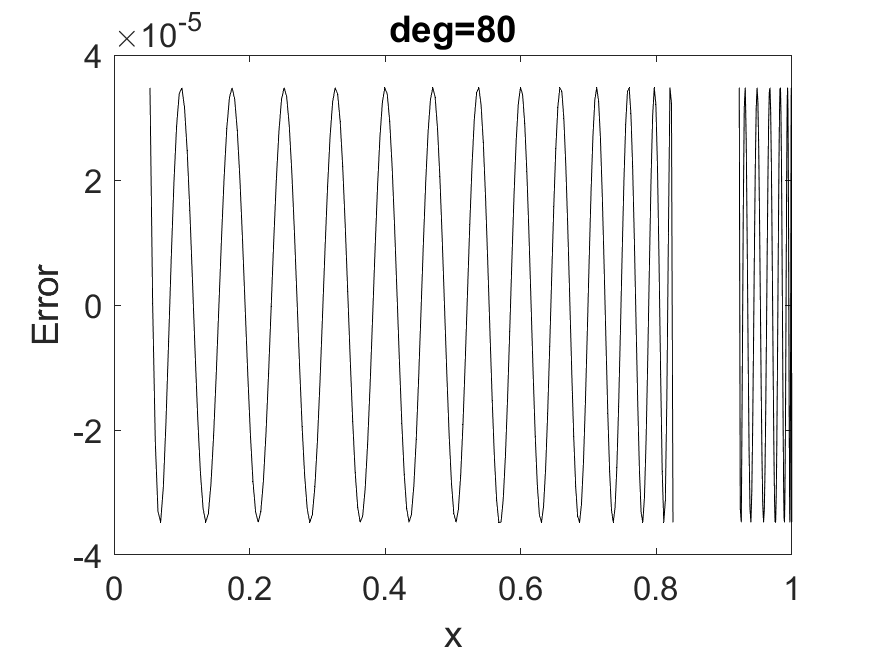}}
\end{center}
\caption{The polynomial obtained by convex optimization approximating the shifted sign function and the pointwise error on $[\sigma_{\min},\sigma_{-}]\cup[\sigma_{+},\sigma_{\max}]$ with $d=20$ (a,b) and $d=80$ (c,d).}
\label{fig:poly_ground}
\end{figure}

\begin{figure}
\begin{center}
\includegraphics[width=0.5\textwidth]{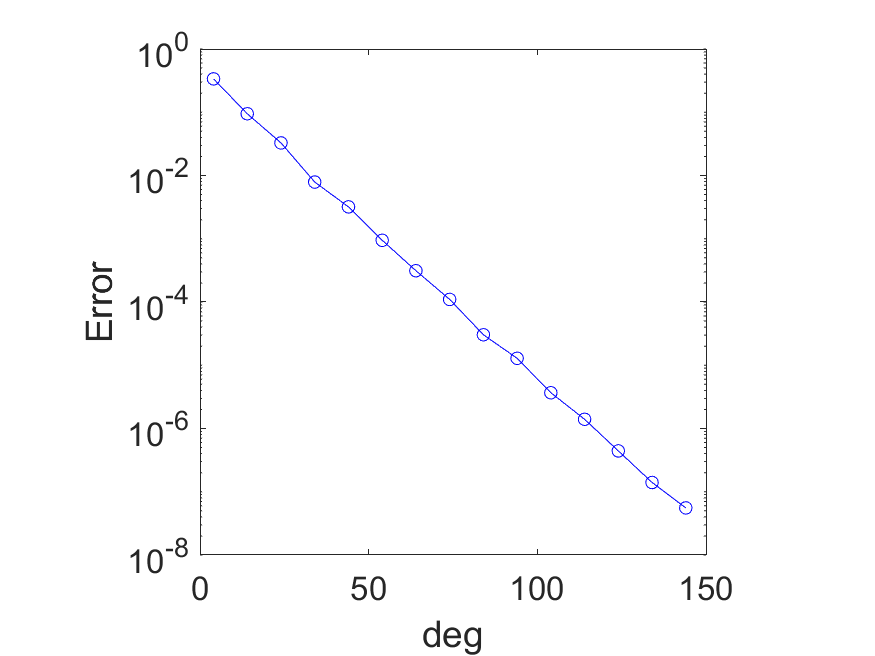}
\end{center}
\caption{Exponential convergence of the maximum pointwise error on $[\sigma_{\min},\sigma_{-}]\cup[\sigma_{+},\sigma_{\max}]$ with respect to the increase of the polynomial degree obtained by the convex optimization method.}
\label{fig:poly_ground_deg_conv}
\end{figure}

More generally, to find a min-max polynomial approximation to a general even target function $h(x)$ on a set $\mc{I}\subseteq[-1,1]$ satisfying $\abs{h(x)}\le c<1,x\in\mc{I}$ we may solve the optimization problem
\begin{equation}
\begin{split}
\min_{\{c_k\}} \quad& \max_{x_j\in \mc{I}} \abs{F(x_j)-h(x_j)}\\
\text{s.t.} \quad & F(x_j)=\sum_{k} A_{jk} c_k, \quad \abs{F(x_j)}\le c, \quad \forall j=0,\ldots,M-1.
\end{split}
\label{eqn:opt_general}
\end{equation}
We also remark that even though \QETU only concerns even polynomials, the same strategy can be applied if the target function $h$ is odd, or does not have a definite parity.

\section{Numerical comparison with QPE for ground-state energy estimation}

In this section we compare the numerical performance of our ground-state energy estimation algorithm in Theorem \ref{thm:ground_state_energy_woaa} with the quantum phase estimation algorithm implemented using semi-classical Fourier transform \cite{griffiths1996semiclassical} to save the number of ancilla qubits, as done in Refs. \cite{HigginsBerryEtAl2007,BerryHiggins2009}. We will evaluate how many queries to $U=e^{-iH}$ are needed in both algorithms to reach the target accuracy $\epsilon\le 10^{-3}$. The Hamiltonian $H$ used here has a randomly generated spectrum and is a $200\times 200$ matrix. The initial state $\ket{\phi_0}$ is guaranteed to satisfy $|\braket{\phi_0|\psi_0}|\geq \gamma$ with a tunable value of $\gamma$.  In our algorithm, the number of queries is counted by adding up the degrees of all the polynomials we need to implement using \QETU. 

\begin{figure}[htbp]
    \centering
    \subfloat[]{
    \includegraphics[width=0.514\textwidth]{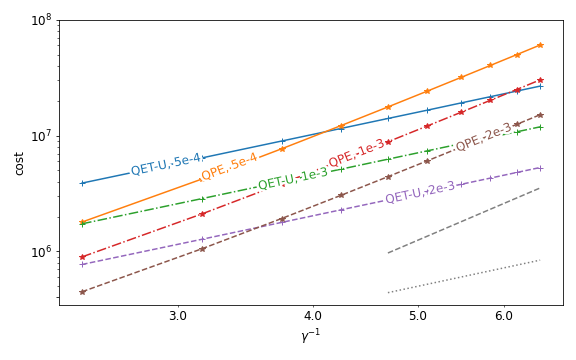}
    }
    \subfloat[]{
    \includegraphics[width=0.386\textwidth]{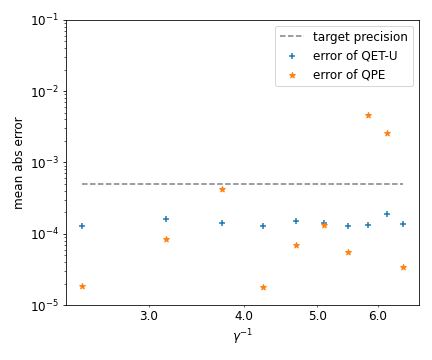}
    }
    \caption{Comparing the performance of the algorithm in Theorem \ref{thm:ground_state_energy_woaa} and the single-ancilla qubit quantum phase estimation using semi-classical Fourier transform \cite{HigginsBerryEtAl2007,BerryHiggins2009,griffiths1996semiclassical}. (a) The number of queries of $U$ needed to reach target precision \red{$\epsilon=5\times 10^{-4},10^{-3},2\times 10^{-3}$} for different values of $\gamma$. \red{The gray dashed line and dotted line show $\gamma^{-4}$ scaling and $\gamma^{-2}$ scaling respectively. Both axes are in logarithmic scale.} (b) The mean absolute error achieved by the two algorithms \COR{for target accuracy $\epsilon=5\times 10^{-4}$.}}
    \label{fig:compare_performance}
\end{figure}

\cref{fig:compare_performance} shows that to achieve comparable accuracy, our algorithm uses significantly fewer queries than quantum phase estimation, in terms of both \red{the} asymptotic scaling (improves from $\gamma^{-4}$ to $\gamma^{-2}$) as well as the \red{actual number of queries} \COR{for moderately small values of $\gamma$}. In \cref{fig:compare_performance}, the error of our method is computed by running the algorithm in Theorem \ref{thm:ground_state_energy_woaa} on a classical computer and comparing the output with the exact ground-state energy, and we show in the figure that the mean of the absolute error in multiple trials. In our method we need to determine the polynomial degree needed for each binary search step (or each time we solve the fuzzy bisection problem in Definition \ref{defn:the_fuzzy_bisection_problem}). This polynomial degree is determined by running the algorithm in \cref{sec:convex}, and selecting the smallest degree that provides an error below the target accuracy. In our numerical tests we require the approximation error to be below $10^{-3}$ so that it is much smaller than the squared overlap. The error of QPE is computed by sampling from the exact energy measurement output distribution, which is again simulated on a classical computer, and comparing the output with the exact ground-state energy. We also compute the absolute error for QPE in multiple trials and take the mean in Figure \ref{fig:compare_performance}. \red{The mean absolute errors in \cref{fig:compare_performance} (b) show that the advantage of our algorithm does not come from a loose error estimate for QPE, since \COR{our algorithm} reaches the target precision (\COR{$\epsilon=5\times 10^{-4}$} in this case) consistently and QPE does not achieve a higher precision than our algorithm.}

\section{Control-free implementation of quantum spin models}\label{sec:tfim}
In this section we demonstrate that for certain quantum spin models, the \QETU circuit can be simplified without the need of accessing the controlled Hamiltonian evolution.
\REV{
Consider a Hamiltonian $H$ that is a linear combination of $\poly(n)$ terms of Pauli operators. Note that two Pauli operators either commute or anti-commute. Hence for each term in the Hamiltonian, we can easily find another Pauli operator $K$ that anti-commutes with this term.
More generally, we assume that $H$ admits a grouping
\begin{equation}
    H = \sum_{j = 1}^\ell H^{(j)}, \quad H^{(j)} = \sum_{s=1}^{d_j} h_s^{(j)},
    \label{eqn:ham_grouping}
\end{equation}
where each $h_s^{(j)}$ is a weighted Pauli operator. For each $j$, we assume that there exists a single Pauli operator $K_j$ which anti-commutes with $H^{(j)}$, i.e.,  $K_j H^{(j)} K_j = - H^{(j)}$.  The number of groups $\ell$ is $\poly(n)$ in the worst case, but for many Hamiltonians in practice, $\ell$ may be much smaller. For example, it may be upper bounded by a constant (see examples below).} 

\REV{Conjugating the Pauli string on the time evolution operator flips the sign of the evolution time, i.e., $K_j e^{-i \tau H^{(j)}} K_j = e^{i \tau H^{(j)}}$. Since the time evolution of each Hamiltonian component is the building block of the Trotter splitting algorithm, the time flipping gives a simple implementation of the controlled time evolution without controlling the Hamiltonian components or their time evolution. To implement the controlled time evolution, it suffices to conjugate the circuit implementing the time evolution with the corresponding controlled Pauli string. Suppose $W_j(\tau) \approx U_j (\tau) := e^{-\I\tau H^{(j)}}$ is the quantum circuit approximately implementing the time evolution using Trotter splitting. Then, $K_j W_j(\tau) K_j = W_j(-\tau)$ approximates the reversed time evolution using the same splitting algorithm. This allows us to use \cref{cor:qet_unitary_2} (see \cref{sec:qetu}) which simplifies the circuit implementation of \QETU.}

To illustrate the control-free implementation, let us consider the transverse field Ising model (TFIM) for instance, whose Hamiltonian takes the form
\begin{equation}
    H_\TFIM = \underbrace{-\sum_{j=1}^{n-1} Z_j Z_{j+1}}_{H_\TFIM^{(1)}} \underbrace{- g \sum_{j=1}^n X_j}_{H_\TFIM^{(2)}}.
\end{equation}
Here $g > 0$ is the coupling constant. Note that a Pauli string 
\begin{equation}
    K := Y_1 \otimes Z_2 \otimes Y_3 \otimes Z_4 \otimes \cdots
\end{equation}
anti-commutes with both components of the Hamiltonian, namely $K H_\TFIM^{(j)} K = - H_\TFIM^{(j)}$ for $j = 1$ and $2$. Therefore, conjugating the Pauli string $K$ on the time evolution operator flips the sign of the evolution time, i.e., $K e^{-i \tau H_\TFIM^{(j)}} K = e^{i \tau H_\TFIM^{(j)}}$ for $j = 1$ and $2$. \REV{In the sense of \cref{eqn:ham_grouping}, we have $\ell=1$.} As a consequence, for TFIM, \cref{fig:main_circuits}(c) is equivalent to the circuit in \cref{fig:TFIM_circuits} in which the controlled time evolution is implicitly implemented by inserting controlled Pauli strings. 

It should be noted that the controlled Pauli string only requires implementing controlled single-qubit gates, rather than the controlled \REV{two-qubit} gates of the form $e^{-i \tau Z_j Z_{j+1}}$ (note that the Hamiltonian involves two qubit terms of the form $Z_j Z_{j+1}$).  If the quantum circuit conceptually queries the controlled time evolution $d$ times, the simplified circuit only inserts $2d$ controlled Pauli strings in the circuit. 
In this case, when implementing $W(\frac12)$ using \REV{several} Trotter steps, the controlled Pauli strings only need to be inserted before and after each $W(\frac12)$, but not between the Trotter layers. 
This \REV{simplified implementation gives the quantum circuit in \cref{fig:TFIM_circuits}(b). Note that in the general case where controlled Pauli strings are inserted in between Trotter layers, the number of controlled Pauli strings required for the implementation is $\Or(d \ell r)$ where $r$ is the number of Trotter steps to implement each time evolution operator. The simplified quantum circuit in \cref{fig:TFIM_circuits}(b) only uses $2d$ controlled Pauli strings. Therefore, this simplified implementation significantly reduces the cost when the number of Trotter step $r$ is large.}

\begin{figure}[htbp]
    \centering
\subfloat[]{
\includegraphics[width=0.9\textwidth]{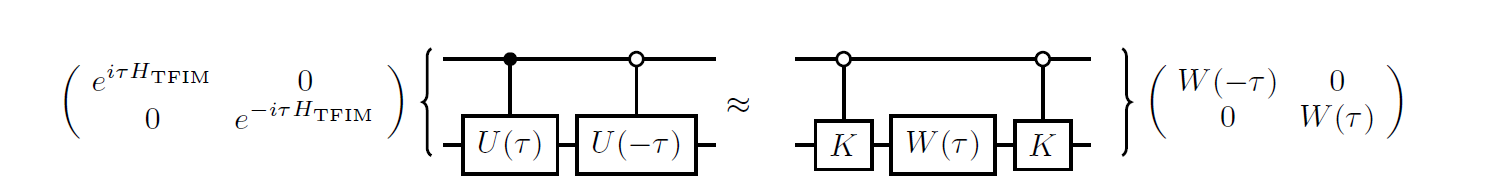}
}

\subfloat[]{
\includegraphics[width=0.9\textwidth]{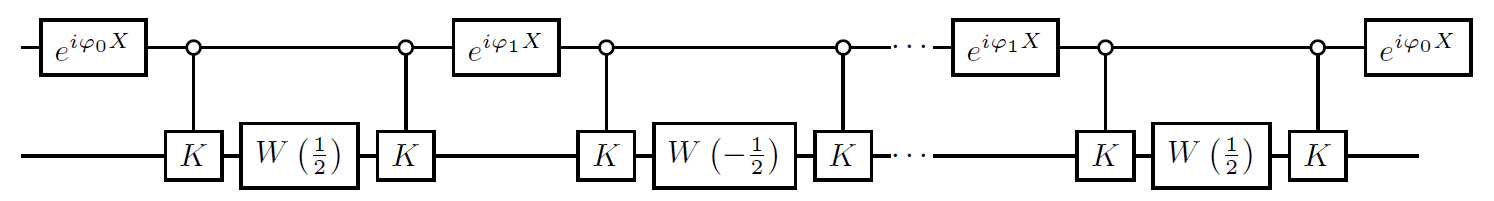}
}
    \caption{Simplified quantum circuit for simulating the \TFIM\ Hamiltonian using \QETU. (a) Controlled time evolution of the \TFIM\ Hamiltonian without directly controlling the Hamiltonian. (b) Simplified circuit for implementing \QETU. \REV{The evolution operator $W\left(-\frac{1}{2}\right)$ can also be implemented as $W\left(\frac{1}{2}\right)$ conjugated by Pauli $X$ operators (see \cref{fig:qep_unitary_circuit_2}).\label{fig:TFIM_circuits}}}
\end{figure}

\REV{In contrast to TFIM in which all Hamiltonian components share the same anti-commuting Pauli string, the control-free implementation of general spin Hamiltonian may require more Pauli strings.} For example, the Hamiltonian of the Heisenberg model takes the form
\begin{equation*}
    H_\mathrm{Heisenberg} = \underbrace{-\sum_{j=1}^{n-1} J_x X_j X_{j+1}-\sum_{j=1}^{n-1} J_y Y_j Y_{j+1}}_{H_\mathrm{Heisenberg}^{(1)}} \underbrace{- \sum_{j=1}^{n-1} J_z Z_j Z_{j+1}}_{H_\mathrm{Heisenberg}^{(2)}}.
\end{equation*}
Let us consider two Pauli strings $K_1 = Z_1\otimes I_2 \otimes Z_3 \otimes I_4 \otimes \cdots$ and $K_2 = X_1\otimes I_2 \otimes X_3 \otimes I_4 \otimes \cdots$. Then, we have the anti-commutation relations $K_1 H_\mathrm{Heisenberg}^{(1)} K_1 = - H_\mathrm{Heisenberg}^{(1)}$ and $K_2 H_\mathrm{Heisenberg}^{(2)} K_2 = - H_\mathrm{Heisenberg}^{(2)}$. Therefore, conjugating each basic time evolution $e^{-i \tau H_\mathrm{Heisenberg}^{(j)}}$ $j = 1, 2$ by controlled $K_1$ or $K_2$ respectively, we can implement the controlled time evolution without directly controlling Hamiltonians or \REV{the corresponding time evolution operators. This corresponds to $\ell=2$ in \cref{eqn:ham_grouping}.} 
Unlike the implementation for the \TFIM, the controlled Pauli strings cannot be \REV{canceled} between each Trotter layer. Therefore, simulating \REV{a} Heisenberg model requires additional controlled Pauli gates compared to that in the \TFIM\ simulation. 

\REV{
Other types of quantum Hamiltonians may also be mapped to spin Hamiltonians to perform control-free time evolution using the anti-commutation relation. Consider the 1D Fermi-Hubbard model of interacting fermions in a lattice:
\begin{equation*}
    H_\text{FH} = - \mu \sum_{j=1}^n \sum_{\sigma \in \{\uparrow, \downarrow\}} c_{j,\sigma}^\dagger c_{j,\sigma} + u \sum_{j=1}^n c_{j,\uparrow}^\dagger c_{j,\uparrow}c_{j,\downarrow}^\dagger c_{j,\downarrow} - t\sum_{j=1}^{n-1} \sum_{\sigma \in \{\uparrow, \downarrow\}} \left(c_{j,\sigma}^\dagger c_{j+1,\sigma} + c_{j+1,\sigma}^\dagger c_{j,\sigma}\right)
\end{equation*}
where $c_{j,\sigma}^\dagger$ and $c_{j,\sigma}$ ($\sigma \in \{\uparrow, \downarrow\}=\{0,1\}$) are creation and annihilation operators for different fermionic mode, $\mu$ is the chemical potential, $u$ is the on-site Coulomb repulsion energy, and $t$ is the hopping energy. The equivalent spin Hamiltonian can be derived by applying Jordan-Wigner transformation (see e.g., \cite{ReinerMarthalerBraumullerEtAl2016}), which gives (up to a global constant)
\begin{equation*}
    H_\text{FH,qubits} = \underbrace{\frac{1}{2}\left(\frac{1}{2}u-\mu\right) \sum_{j=1}^n\sum_{\sigma\in\{0,1\}} Z_{j,\sigma}}_{H_\text{FH,qubits}^{(1)}} + \underbrace{\frac{1}{4}u \sum_{j=1}^n Z_{j,0}Z_{j,1}}_{H_\text{FH,qubits}^{(2)}} \underbrace{- t \sum_{j=1}^{n-1} \sum_{\sigma \in \{0,1\}} \left(\Sigma^+_{j,\sigma}\Sigma^-_{j+1,\sigma} + \Sigma^+_{j+1,\sigma}\Sigma^-_{j,\sigma}\right)}_{H_\text{FH,qubits}^{(3)}}
\end{equation*}
where the subscript $(j,\sigma)$ stands for the $j$-th qubit on the $\sigma$-th chain, and $\Sigma^\pm_{j,\sigma} := X_{j,\sigma} \pm i Y_{j,\sigma}$. Let $K_1 = \otimes_{j, \sigma} X_{j,\sigma}$, $K_2 = \left(\otimes_{j=1}^n X_{j,0}\right)\otimes\left(\otimes_{j=1}^n I_{j,1}\right)$, and $K_3 = \left(\otimes_{j=\mathrm{even}} (Z_{j,0} \otimes Z_{j,1})\right)\otimes\left(\otimes_{j=\mathrm{odd}} (I_{j,0} \otimes I_{j,1})\right)$ be three Pauli strings. Then, we have the anti-commutation relations $K_j H_\text{FH,qubits}^{(j)}K_j = - H_\text{FH,qubits}^{(j)}$ for $j = 1, 2, 3$. Thus, controlling the time evolution of the spin-1/2 Fermi-Hubbard model can be implemented by controlling these Pauli strings. 
The construction above can also be generalized to 2D Fermi-Hubbard models. Note that direct Trotterization of the 2D Fermi-Hubbard model following the Jordan-Wigner transformation leads to non-optimal complexities, and the complexity can be improved via a fermionic swap networks~\cite{KivlichanMcCleanWiebeEtAl2018}. The control-free implementation of these more complex instances will be our future work.}

\REV{For the simplicity of implementation,} the energy estimation can be derived from the measurement frequencies of bit-strings using the standard variational quantum eigensolver (VQE) algorithm (see e.g., \cite{McCleanRomeroBabbushEtAl2016}). We state the algorithm for deriving energy estimation from measurement results in \cref{sec:num-detail} for completeness. \REV{Using the control-free implementation and VQE-type energy estimation, the implementation of the ground-state preparation using \QETU can be carried out efficiently on quantum hardware.}

\section{Numerical results for TFIM}\label{sec:numer}

\begin{figure}[htbp]
    \centering
    \includegraphics[width=\textwidth]{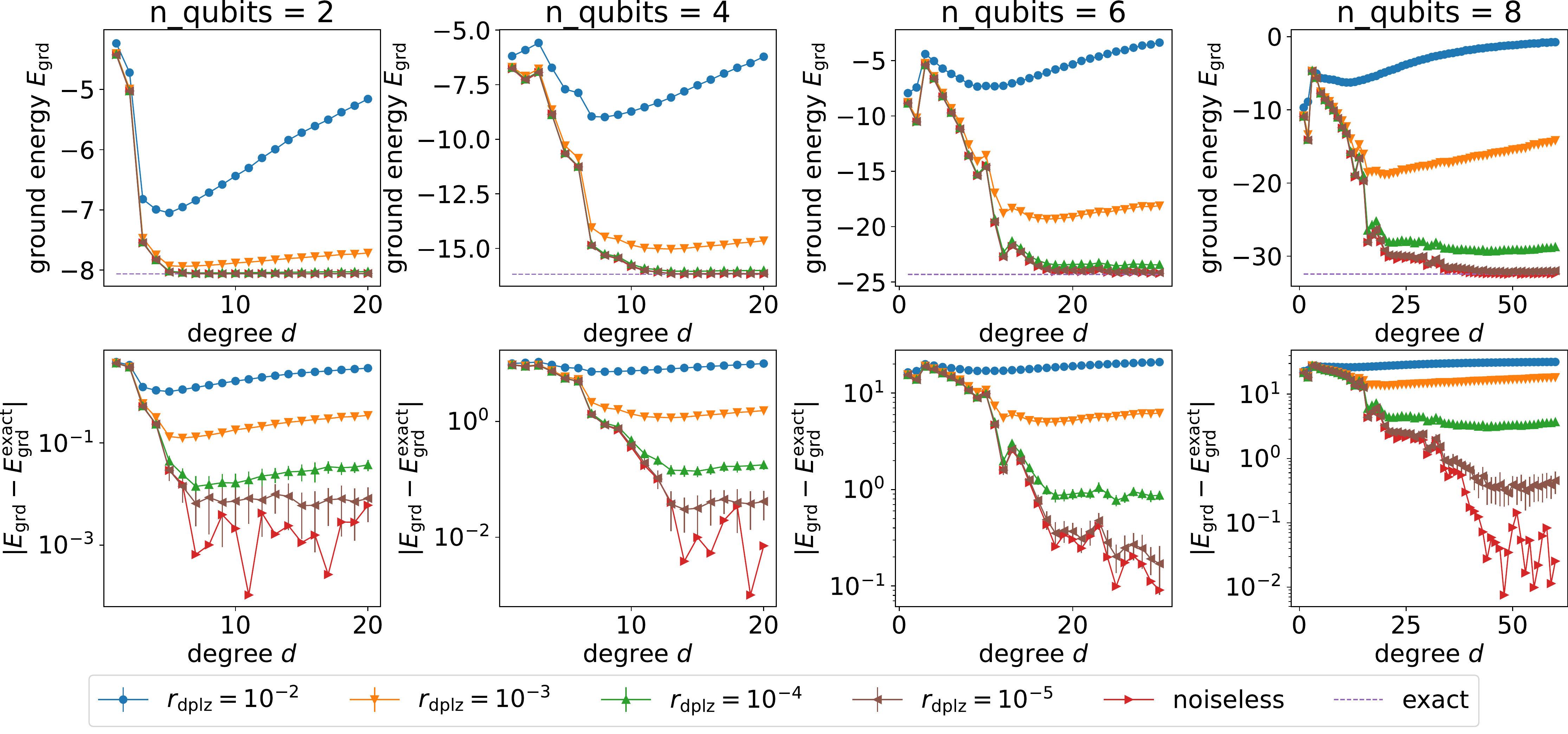}
    \caption{Estimating the ground-state energy of the TFIM model using \QETU. Each marker labels the data simulated with a given depolarizing error rate $r_\mathrm{dplz}$. The dashed line is the exact ground-state energy $E_\mathrm{grd}^\mathrm{exact}$ of the spin system. The (red) right triangles denote the data computed from the best polynomial approximation by the convex optimization solver, which only include the approximation error and are independent \REV{of} quantum noise. The error bar stands for the standard deviation estimated from $30$ repetitions.}
    \label{fig:ground_energy}
\end{figure}

\REV{Despite the potentially wide range of applications of quantum signal processing (QSP) and quantum singular value transformation (QSVT), their implementation has been limited by the large resource overhead needed to implement the block encoding of the input matrix. 
To our knowledge, QSP based quantum simulation has only been implemented for matrices encoded by random circuits~\cite{DongLin2021,DongWhaleyLin2021,CornelissenBauschGilyen2021}.
Using the \QETU and the control-free implementation in the previous section, we show that the short depth version of our algorithm for ground-state preparation and energy estimation can be readily implemented for certain physical Hamiltonians. Our implementation has a very small overhead compared to the Trotter based Hamiltonian simulation, and the circuit uses only one- and two-qubit gate operations. We demonstrate this for the TFIM via \textsf{IBM Qiskit}, and the source code is available in the Github repository~\footnote{\url{https://github.com/qsppack/QETU}}. } To demonstrate the algorithm, we prepare the ground state of the Ising model with \REV{a varying} number of qubits $n$, and coupling strength is set to $g = 4$. In the quantum circuit, we set the initial state to $\ket{0}\ket{\psi_\mathrm{in}}$ where $\ket{\psi_\mathrm{in}} = \ket{0^n}$ and the additional one qubit is the ancilla qubit on which $X$ rotations are applied. 
To simplify the numerical test, we compute the value of $\mu$ and $\Delta$ by explicitly diagonalizing the Hamiltonian. For each Hamiltonian evolution $U$ in the \QETU circuit, the number of Trotter steps is set to $r = 3$. We list \REV{system-dependent} parameters and the initial overlap $\gamma = \abs{\braket{\psi_\mathrm{in} | \psi_0}}$  for different number\REV{s} of system qubits $n$ in \cref{tab:system_params}.   According to the analysis in \cref{sec:tfim}, it is sufficient to measure two quantum circuits to estimate the ground-state energy of the TFIM. Each quantum circuit in the numerical experiment is measured with $10^5$ measurement shots, and we independently repeat the numerical test $30$ times to estimate the statistical fluctuation. 
To emulate the noisy quantum operation, we add a depolarizing error channel to each gate operation in the quantum circuit where the error of single-qubit gate operation and that of the two-qubit gate operation are set to $\frac{r_\mathrm{dplz}}{10}$ and $r_\mathrm{dplz}$ respectively. Assuming the digital error model (DEM), the total effect of the noise can be written as 
\[
    \varrho_{\mathrm{exp}} = \alpha_\mathrm{DEM} \varrho_\mathrm{exact}  + (1-\alpha_\mathrm{DEM}) \mc{E}(\varrho_\mathrm{input}),
\]
and the noise channel $\mc{E}(\varrho_\mathrm{input})$ can be modeled as a global depolarized error channel~\cite{BoixoIsakovSmelyanskiyEtAl2018} with circuit fidelity $\alpha_\mathrm{DEM}$. \REV{Given $n$ and $d$, the number of single- and two-qubit gates involved in the quantum circuit are
\begin{equation}
    \label{eqn:gate_counts}
    n_{g,1} = d(nr+1)+1 \quad \mathrm{and} \quad n_{g,2} = d\left((n-1)r+2n\right).
\end{equation}}
Therefore, the circuit fidelity can be modeled as \[
\alpha_\mathrm{DEM} = \left(1-\frac{r_\mathrm{dplz}}{10}\right)^{n_{g,1}} \left(1-r_\mathrm{dplz}\right)^{n_{g,2}}.
\] 
% The quantum circuit simulation is deployed using \textsf{python3} and \textsf{IBM Qiskit}. The source code is available in the Github repository~\footnote{\url{https://github.com/qsppack/QETU}}.

The numerical result is presented in \cref{fig:ground_energy}. The convergence of the noiseless data to the exact ground-state energy suggests that the energy can be computed accurately when $d$ is \REV{modest} ($10\sim 30$).  The statistical fluctuation, quantified by the standard deviation derived from $30$ repetitions, is on the order of $10^{-2}$ and is not visible in the top panels.
When simulating in the presence of the depolarizing noise, numerical results suggest that accurate estimation of the energy requires $r_\mathrm{dplz}$ to be $10^{-4}$ or less. \REV{This requirement is beyond the noise level that can achieved by current NISQ devices. Therefore we expect that QET-U based algorithms are more suited for early fault-tolerant quantum devices. For TFIM, the spectral gap $\Delta$  decreases as the number of qubits  increases. Therefore the degree of the polynomial also needs to be increased to approximate the shifted sign function and to prepare the ground state to a fixed precision (see \cref{fig:ground_energy}).}

\section{Conclusion}\label{sec:conclusion}

In this work, we develop algorithms for preparing the ground state and for estimating the ground-state energy of a quantum Hamiltonian suitable on early fault-tolerant quantum computers.
The early fault-tolerant setting limits the number of qubits, the circuit depth, and the type of multi-qubit control operations that can be employed. 
While block encoding is an elegant technique for abstractly encoding the information of an input Hamiltonian, existing block encoding strategies (such as those for $s$-sparse matrices~\cite{ChildsKothariSomma2017,GilyenSuLowEtAl2019}) can lead to a large resource overhead and cannot meet the stringent requirements of early fault-tolerant devices. The resource overhead for approximately implementing a Hamiltonian evolution input model is much lower, and can be a suitable starting point for constructing more complex quantum algorithms.

Many computational tasks can be expressed in the form of applying a matrix function $f(H)$ to a quantum state $\ket{\psi}$. 
We develop a tool called quantum eigenvalue transformation of unitary matrices with real polynomials (\QETU), which performs this task using the controlled Hamiltonian evolution as the input model (similar to that in quantum phase estimation), only one ancilla qubit and no multi-qubit control operations.
Combined with a fuzzy bisection procedure, the total query complexity of the resulting algorithm to estimate the ground-state energy scales as $\wt{\Or}(\epsilon^{-1}\gamma^{-2})$, which saturates the Heisenberg limit with target precision $\epsilon$.
The scaling with the initial overlap $\gamma$ is not optimal, but this result already outperforms all previous quantum algorithms for estimating the ground-state energy using a comparable circuit structure (see \cref{tab:compare_algs_energy}).

The \QETU technique, and the new convex-optimization-based technique for streamlining the process of finding phase factors, could readily be useful in many other contexts, such as preparing the Gibbs state. 
It is worth mentioning that other than using shifted sign functions, one can also use the exponential function $e^{-\beta (H-\mu I)}$ (the same as that needed for preparing Gibbs states, with an appropriate choice of $\beta,\mu$) to approximately prepare the ground state. This gives rise to the imaginary time evolution method. Unlike the quantum imaginary time evolution (QITE) method~\cite{MottaSunTanEtAl2020} which performs both real\REV{-}time evolution and a certain quantum state tomography procedure, \QETU  only queries the time evolution with performance guarantees and therefore can be significantly more advantageous in the early-fault-tolerant regime.

If we are further allowed to use the $(n+1)$-bit Toffoli gates (which is a relatively low-level multi-qubit operation, as the additional two\REV{-}qubit operations scale linearly in $n$), we can develop a new binary amplitude estimation algorithm that is also based on \QETU. 
The total query complexity for estimating the ground-state energy can be improved to the near-optimal scaling of $\wt{\Or}(\epsilon^{-1}\gamma^{-1})$,
at the expense of increasing the circuit depth from $\wt{\Or}(\epsilon^{-1})$ to $\wt{\Or}(\epsilon^{-1}\gamma^{-1})$. This matches the results in Ref.~\cite{LinTong2020a} with a block encoding input model.  This also provides an answer to a question raised in Ref.~\cite{LinTong2022}, i.e.,  whether it is possible to have a quantum algorithm that does not use techniques such as LCU or block encoding, with a short query depth that scales as $\wt{\Or}(\epsilon^{-1})$, and with a total query complexity that scales better than $\Or(\gamma^{-4})$. Our short query depth algorithm shows that it is possible to improve the total query complexity to $\Or(\gamma^{-2})$ while satisfying all other constraints. The construction of our near-optimal algorithm (using binary amplitude estimation) indicates that it is unlikely that one can improve the total query complexity to $\Or(\gamma^{-1})$ without introducing a factor that scales with $\gamma^{-1}$ in the circuit depth.

The improvements \REV{in} circuit depth and query complexity for preparing the ground state are similar to that of the ground-state energy estimation (see \cref{tab:compare_algs_state}). 
It is worth mentioning that many previous works using a single ancilla qubit cannot be easily modified to prepare the ground state. 
It is currently an open question whether the query complexity can be reduced to the near-optimal scaling without using any multi-qubit controlled operation (specifically, whether the additional one- and two-qubit quantum gates can be independent of the system size $n$).

In practice, the cost \REV{of} implementing the controlled Hamiltonian evolution can still be high.
By exploiting certain anti-commutation relations, we develop a new control\REV{-}free implementation of \QETU for a class of quantum spin Hamiltonians.
The results on quantum simulators using \textsf{IBM Qiskit} indicate that relatively accurate estimates to the ground-state energy can be obtained already with a \REV{modest} polynomial degree ($10\sim 30$). 
However, the results of the \QETU can be sensitive to quantum noises (such as gate-wise depolarizing noises). 
On one hand, while the \QETU circuit (especially, the control-free variant) may be simple enough to fit on a NISQ device, the error on the NISQ devices may be too large to obtain meaningful results.
On the other hand, it may be possible to combine \QETU with randomized compilation~\cite{WallmanEmerson2016} and/or error mitigation techniques~\cite{CaiXuBenjamin2020} to significantly reduce the impact of the noise, which may then enable us to obtain qualitatively meaningful results on near\REV{-}term devices~\cite{TazhigulovSunHaghshenasEtAl2022}. These will be our future works.

\vspace{1em}
\textbf{Acknowledgment}

This work was supported by the U.S. Department of Energy under the Quantum Systems Accelerator program under Grant No. DE-AC02-05CH11231 (Y.D.), by the NSF Quantum Leap Challenge Institute (QLCI) program under Grant number OMA-2016245 (Y.T.), and by the Google Quantum Research Award (L.L.). L.L. is a Simons Investigator. The authors thank \COR{Zhiyan Ding and} Subhayan Roy Moulik for discussions, \REV{and the anonymous referees for helpful suggestions}.

\widetext
\clearpage
\appendix

\begin{center}
    {\Large \bf {Supplementary Information}}
\end{center}

\section{Brief summary of polynomial matrix transformations}
\label{sec:related_matrixtrans}

In the past few years, there have been significant algorithmic advancements in efficient representation of certain polynomial matrix transformations on quantum computers~\cite{LowChuang2017,LowChuang2019,GilyenSuLowEtAl2019}, which finds applications in Hamiltonian simulation, solving linear systems of equations, eigenvalue problems, to name a few.
The commonality of these approaches is to (1) encode a certain polynomial using a product of parameterized SU(2) matrices, and (2) lift the SU(2) representation to matrices of arbitrary dimensions (a procedure called ``qubitization''~\cite{LowChuang2019} which is  related to quantum walks~\cite{Szegedy2004,Childs2010}).
This framework often leads to a very concise quantum circuit, and can unify a large class of quantum algorithms that have been developed in the literature~\cite{GilyenSuLowEtAl2019,MartynRossiTanEtAl2021}.
For clarity of the presentation, the term quantum signal processing (QSP) will specifically refer to the SU(2) representation. 
It is worth noting that the depending on the structure of the matrix and the input model, the resulting quantum circuits can be different. Block encoding~\cite{LowChuang2019,GilyenSuLowEtAl2019} is a commonly used input model for representing non-unitary matrices on a quantum computer.

\vspace{1em}

\begin{defn}[Block encoding] Given an $n$-qubit matrix $A$ ($N=2^n$), if we can find $\alpha, \epsilon \in \mathbb{R}_+$, and an $(m+n)$-qubit unitary matrix $U_A$ so that 
\begin{equation}
\Vert A - \alpha \left(\langle 0^m | \otimes I_N\right) U_A \left( | 0^m \rangle \otimes I_N \right) \Vert_2 \leq \epsilon,
\end{equation}
then $U_A$ is called an $(\alpha, m, \epsilon)$-block-encoding of $A$.
\label{def:blockencode}
\end{defn}

\vspace{1em}

When a polynomial of interest is represented by QSP, we can use the block encoding input model to implement the polynomial transformation of a Hermitian matrix, which gives the quantum eigenvalue transformation (QET)~\cite{LowChuang2019}.
Similarly the polynomial transformation of a general matrix (called singular value transformation) gives the quantum singular value transformation (QSVT)~\cite{GilyenSuLowEtAl2019}. In fact, for a Hermitian matrix with a block encoding input model, the quantum circuits of QET and QSVT can be the same.
%, but the derivation of QET is conceptually simpler~\cite{Lin_qasc}.

It is worth noting that the original presentation of QSP~\cite{LowChuang2017} combines together the SU(2) representation and a trigonometric polynomial transformation of a Hermitian matrix $H$, and the input model is provided by a quantum walk operator~\cite{Childs2010}. If $H$ is a $s$-sparse matrix, the use of a walk operator is actually not necessary, and QET/QSVT gives a more concise algorithm than that in Ref.~\cite{LowChuang2017}. 

Using the Hamiltonian evolution input model, our \QETU algorithm provides a circuit structure that is similar to that in~\cite[Figure 1]{LowChuang2017}, and the derivation of \QETU is both simpler and more constructive. Note that Ref.~\cite{LowChuang2017} only states the existence of the parameterization without providing an algorithm to evaluate the phase factors, and the connection with the more explicit parameterization such as those in~\cite{GilyenSuLowEtAl2019,Haah2019} has not been shown in the literature.
Our \QETU algorithm in~\cref{thm:qet_unitary} directly connects to the parameterization in~\cite{GilyenSuLowEtAl2019}, and in particular, QSP with symmetric phase factors~\cite{DongMengWhaleyEtAl2021,WangDongLin2021}. This gives rise to a concise way for representing real polynomial transformations that is encountered in most applications.

The \QETU technique is also related to QSVT. From the Hamiltonian evolution input model $U=e^{-iH}$, we can first use one ancilla qubit and linear combination of unitaries to implement a block encoding of $\cos(H)=(U+U^{\dag})/2$. Using another ancilla qubit, we can use another ancilla qubit and QET/QSVT to implement $H=\arccos(\cos(H))$ approximately.
In other words, from the Hamiltonian evolution $U$ we can implement the matrix logarithm of $U$ to approximately block encode $H$. 
Then we can implement a matrix function $f(H)$ using the block encoding above and another layer of QSVT. \QETU simplifies the procedure above by directly querying $U$. 
The concept of ``qubitization''~\cite{LowChuang2019} appears very straightforwardly in \QETU (see \cref{sec:qetu}). 
It also saves one ancilla qubit and gives perhaps a slightly smaller circuit depth.

\section{Quantum eigenvalue transformation for unitary matrices}
\label{sec:qetu}

Let 
\begin{equation}
W(x)=e^{i \arccos(x) X}=\begin{pmatrix}
x & \I\sqrt{1-x^2}\\
\I\sqrt{1-x^2} & x
\end{pmatrix}, \quad x\in[-1,1].
\end{equation}
We first state the result of quantum signal processing for real polynomials~\cite[Corollary 10]{GilyenSuLowEtAl2018}, and specifically the symmetric quantum signal processing~\cite[Theorem 10]{WangDongLin2021} in \cref{thm:qsp_real_Wx}.

\begin{thm}[Symmetric quantum signal processing, $W$-convention]\label{thm:qsp_real_Wx}
Given a real polynomial $F(x)\in\RR[x]$, and $\deg F=d$, satisfying
\begin{enumerate}

\item $F$ has parity $d \bmod 2$,

\item $|F(x)|\le 1, \forall x \in [-1, 1]$,
\end{enumerate}
then there exists polynomials $G(x),Q(x)\in\RR[x]$ and a set of symmetric phase factors $\Phi := (\phi_0, \phi_1, \cdots, \phi_1,\phi_0) \in \RR^{d+1}$ such that the following QSP representation holds:
\begin{equation}
\begin{split}
         e^{i \phi_0 Z} \prod_{j=1}^{d} \left[ W(x) e^{i \phi_j Z} \right] = \left( \begin{array}{cc}
        F(x)+i G(x) & i Q(x) \sqrt{1 - x^2}\\
        i Q(x) \sqrt{1 - x^2} & F(x)-i G(x)
        \end{array} \right)
\end{split},
  \label{eqn:QSP_representation}
\end{equation}
\end{thm}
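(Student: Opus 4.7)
The plan is to prove this by induction on the degree $d$, constructing the symmetric phase sequence $\Phi$ explicitly by a simultaneous peel-off of $\phi_0$ and $\phi_d$ from both ends at every inductive step.

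First, I would observe the transpose symmetry of the QSP product. Since both $W(x)$ and $e^{i\phi Z}$ are complex symmetric matrices (equal to their own transpose), the product $U_\Phi(x) := e^{i\phi_0 Z}\prod_{j=1}^d [W(x) e^{i\phi_j Z}]$ satisfies $U_\Phi(x)^T = U_{\widetilde\Phi}(x)$, where $\widetilde\Phi = (\phi_d,\phi_{d-1},\ldots,\phi_0)$. Hence whenever $\Phi$ is symmetric, $U_\Phi(x)$ is itself complex symmetric, so its two off-diagonal entries automatically coincide. Writing the $(1,1)$ entry as $F(x)+iG(x)$ with $F,G$ real and the $(1,2)$ entry as $iQ(x)\sqrt{1-x^2}$ (the factor $\sqrt{1-x^2}$ arises because off-diagonal entries of any such product are divisible by $i\sqrt{1-x^2}$ with polynomial quotient, a quick induction in $d$), unitarity of $U_\Phi$ yields the normalization identity $F^2+G^2+(1-x^2)Q^2\equiv 1$. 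Parity of the entries can also be read off: $F,G$ must have parity $d\bmod 2$ and $Q$ must have parity $(d-1)\bmod 2$.

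Next I would set up the induction. The base cases $d=0$ (where $F$ is a real constant with $|F|\le 1$ and $\phi_0$ is chosen so that $\cos\phi_0=F$) and $d=1$ (a direct computation of $e^{i\phi_0 Z} W(x) e^{i\phi_0 Z}$) are immediate. For the inductive step, given $F$ of degree exactly $d\ge 2$ with parity $d\bmod 2$ and $|F(x)|\le 1$ on $[-1,1]$, I would first invoke a polynomial completion lemma (Fejer–Riesz applied to the nonnegative polynomial $1-F(x)^2$ on $[-1,1]$) to produce real polynomials $G$ and $Q$ with the right parities satisfying $F^2+G^2+(1-x^2)Q^2=1$. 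The heart of the argument is then the simultaneous peel-off: seek $\phi_0\in\RR$ and a shorter symmetric sequence $\Phi^{(1)}\in\RR^{d-1}$ such that
\begin{equation*}
U_\Phi(x) = e^{i\phi_0 Z}\, W(x)\, U_{\Phi^{(1)}}(x)\, W(x)\, e^{i\phi_0 Z}.
\end{equation*}
Inverting this algebraically gives $U_{\Phi^{(1)}}(x) = W(x)^{-1} e^{-i\phi_0 Z} U_\Phi(x) e^{-i\phi_0 Z} W(x)^{-1}$, and I would extract the $(1,1)$ entry $F^{(1)}+iG^{(1)}$ of the right-hand side as a closed-form expression in $F,G,Q$ and $e^{2i\phi_0}$. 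Matching the coefficients of the leading two monomials (the terms that must cancel for the result to have degree at most $d-2$) yields one equation for $\phi_0$ which is solvable precisely because $F$ is real. Once $\phi_0$ is fixed, the reduced $F^{(1)}$ inherits parity $(d-2)\bmod 2 = d\bmod 2$ from the parity bookkeeping, and $|F^{(1)}(x)|\le 1$ follows automatically from the normalization identity (which is preserved since the peel-off is a product of unitaries). Applying the inductive hypothesis to $F^{(1)}$ produces a symmetric $\Phi^{(1)}$, and prepending and appending $\phi_0$ yields the desired symmetric $\Phi\in\RR^{d+1}$.

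The main obstacle is the degree-reduction step: verifying that a single real parameter $\phi_0$, appearing on both ends simultaneously, suffices to cancel the top two monomials of $F+iG$ and yield a polynomial $F^{(1)}$ of degree at most $d-2$ (rather than $d-1$, which a one-sided peel-off would give). This two-step reduction is what preserves the parity constraint and is enabled precisely by the reality of $F$ together with the Fejer–Riesz completion: the consistency equation for $\phi_0$ from the $x^d$ and $x^{d-1}$ coefficients admits a real solution exactly when the complementary polynomial $G$ can be chosen real with the matching parity. Ensuring that $F^{(1)}\in\RR[x]$ (and not merely in $\CC[x]$) and that its degree drops by two is the primary technical burden; all remaining inductive hypotheses then follow from the unitarity bookkeeping described above.
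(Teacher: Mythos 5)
The paper does not actually prove this statement: it is imported verbatim from the literature, citing \cite[Corollary 10]{GilyenSuLowEtAl2018} and the symmetric version in \cite[Theorem 10]{WangDongLin2021}, so there is no in-paper proof to compare against. Your reconstruction nevertheless follows essentially the same route as the cited works: the observation that a palindromic phase sequence forces $U_\Phi(x)^T=U_\Phi(x)$ (hence $Q\in\RR[x]$), the determinant identity $F^2+G^2+(1-x^2)Q^2\equiv 1$, a parity-respecting Fej\'er--Riesz completion of $1-F^2$, and an inductive peel-off of phase factors are exactly the ingredients used there.

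There is, however, a genuine gap in how you close the induction. You apply the inductive hypothesis ``to $F^{(1)}$'' alone, i.e., you only ask for a symmetric $\Phi^{(1)}$ whose QSP matrix has $(1,1)$-entry with real part $F^{(1)}$. But when you reassemble, the real part of the $(1,1)$-entry of $e^{i\phi_0 Z}W(x)\,U_{\Phi^{(1)}}(x)\,W(x)e^{i\phi_0 Z}$ is
$x^2\bigl[\cos(2\phi_0)F^{(1)}-\sin(2\phi_0)G'\bigr]-(1-x^2)\bigl[\cos(2\phi_0)F^{(1)}+\sin(2\phi_0)G'\bigr]-2x(1-x^2)\cos(2\phi_0)Q'$,
which depends on the imaginary part $G'$ and off-diagonal polynomial $Q'$ actually realized by $\Phi^{(1)}$, not only on $F^{(1)}$. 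Since the weak hypothesis gives you no control over $G'$ and $Q'$, the reassembled real part need not equal $F$. The fix is standard but changes the statement you induct on: perform the Fej\'er--Riesz completion once at the outset to fix a full admissible pair $(P,Q)=(F+iG,\,Q)$ with $Q$ real and the stated degrees and parities, and run the induction on the claim that \emph{every} such pair is realized by a symmetric phase sequence; the two-sided peel-off then hands the next level exactly the triple it must realize. Two further points you defer are not automatic: the completion lemma must produce $G$ and $Q$ with the prescribed parities (vanilla Fej\'er--Riesz on $[-1,1]$ gives nonnegativity but the parity bookkeeping requires a root-pairing argument), and the claim that the single condition $\Re(e^{-2i\phi_0}p_d)=q_{d-1}$ on $\phi_0$ suffices to cancel \emph{all} coefficients of degree $>d-2$ in the peeled entry (not just the top one) needs to be derived from the normalization identity rather than asserted.
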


In order to derive \QETU, we define
\begin{equation}
W_z(x)=e^{i \arccos(x) Z}= \begin{pmatrix}
e^{i \arccos(x)} & 0 \\
0 & e^{-i \arccos(x)}
\end{pmatrix},  \quad x\in[-1,1].
\end{equation}
Then \cref{thm:qsp_real_Wz} is equivalent to \cref{thm:qsp_real_Wx}, but uses the variable $x$ is encoded in the $W_z$ matrix instead of the $W$ matrix.

\begin{thm}[Symmetric quantum signal processing, $W_z$-convention]\label{thm:qsp_real_Wz}
Given a real, even polynomial $F(x)\in\RR[x]$, and $\deg F=d$, satisfying $|F(x)|\le 1, \forall x \in [-1, 1]$, then there exists polynomials $G(x),Q(x)\in\RR[x]$ and a symmetric phase factors $\Phi_z := (\varphi_0, \varphi_1, \cdots, \varphi_1,\varphi_0) \in \RR^{d+1}$ such that the following QSP representation holds:  
\begin{equation}
\begin{split}
        U_{\Phi_z}(x) &= e^{i \varphi_0 X} W_z^*(x) e^{i \varphi_1 X} W_z(x) e^{i \varphi_2 X} \cdots e^{i \varphi_2 X} W_z^*(x) e^{i \varphi_1 X}  W_z(x) e^{i \varphi_0 X}\\
       &= \left( \begin{array}{cc}
        F(x) & -Q(x) \sqrt{1 - x^2}+i G(x)\\
        Q(x) \sqrt{1 - x^2}+i G(x)  & F(x)
        \end{array} \right).
\end{split}
  \label{eqn:QSP_Wz_real}
\end{equation}
\end{thm}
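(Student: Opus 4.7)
My plan is to derive \cref{thm:qsp_real_Wz} from \cref{thm:qsp_real_Wx} via a basis-change argument. The key intertwining identities are $H X H = Z$, which gives $H e^{i\phi X} H = e^{i\phi Z}$ and $H W_z(x) H = W(x)$, together with $W_z^*(x) = X W_z(x) X$ and the commutation $[X, e^{i\varphi_k X}] = 0$. These identities let me rewrite the alternating $W_z^*, W_z$ pattern of \cref{eqn:QSP_Wz_real}, after insertion of $X \cdot X = I$ pairs and Hadamard conjugation, as a symmetric product in the $W$-convention form of \cref{eqn:QSP_representation}.

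Concretely, given a real even polynomial $F(x)$ of degree $d$ with $|F(x)| \le 1$ on $[-1,1]$, I would first apply \cref{thm:qsp_real_Wx} in its symmetric-phase strengthened form (see \cite[Theorem 10]{WangDongLin2021}) to produce symmetric phase factors $\Phi = (\phi_0, \phi_1, \dots, \phi_1, \phi_0)$ realizing $F$ with the auxiliary polynomial $G \equiv 0$; this yields the $W$-convention matrix $F(x)\,I + i Q(x)\sqrt{1-x^2}\,X$. I would then invert the basis change to obtain a corresponding symmetric sequence $\Phi_z = (\varphi_0, \varphi_1, \dots, \varphi_1, \varphi_0) \in \RR^{d+1}$ for the $W_z$-convention alternating product, and verify by direct matrix multiplication that the resulting $U_{\Phi_z}(x)$ takes the form claimed in \cref{eqn:QSP_Wz_real}, with $F(x)$ on both diagonal entries and the off-diagonal entries determining real polynomials $G$ and $Q$.

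The main obstacle is the careful sign-tracking needed to show that the alternating $W_z^*, W_z$ pattern produces the specific off-diagonal form $-Q(x)\sqrt{1-x^2} + iG(x)$ and $Q(x)\sqrt{1-x^2} + iG(x)$ of \cref{eqn:QSP_Wz_real}, which differs from the symmetric $iQ(x)\sqrt{1-x^2}$ off-diagonal of \cref{eqn:QSP_representation}. In particular, one must check that the Hadamard conjugation combined with the $X$-absorption identity $W_z^*(x) = X W_z(x) X$ produces the anti-symmetric $\pm Q\sqrt{1-x^2}$ structure rather than the symmetric one, and that the symmetry of the phase factors is preserved under this correspondence. An alternative route is a direct induction on $d$: the base case $d = 0$ gives $U_{\Phi_z}(x) = e^{i\varphi_0 X} = \cos\varphi_0 \cdot I + i\sin\varphi_0\, X$, matching the claimed form with $F = \cos\varphi_0$, $Q = 0$, $G = \sin\varphi_0$, and each inductive step appends a factor $W_z^*(x)\,e^{i\varphi X}\,W_z(x)$ which raises the degrees of $F, G, Q$ by two while preserving the claimed matrix structure.
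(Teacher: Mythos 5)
Your approach is essentially the paper's own: both reduce to the $W$-convention symmetric QSP result (\cref{thm:qsp_real_Wx}) by Hadamard conjugation and then absorb the residual Pauli factors arising from $W_z^*$ into shifted phase angles (your identity $W_z^*(x)=XW_z(x)X$ together with $HXH=Z$ is exactly the paper's $HW_z^*(x)H=-e^{-i\frac{\pi}{2}Z}W(x)e^{-i\frac{\pi}{2}Z}$, and the paper likewise finishes by setting $\varphi_j=\phi_j+(2-\delta_{j0})\pi/4$ and checking the off-diagonal form by direct computation). One small correction: you cannot in general demand $G\equiv 0$ in the $W$-convention step, since that would force $1-F(x)^2$ to be divisible by $1-x^2$ (e.g.\ it fails already for $F=T_2/2$); this costs you nothing, however, because the statement of \cref{thm:qsp_real_Wz} permits an arbitrary real $G$, so you should simply carry the nonzero $G$ through the basis change.
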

\begin{proof}
Using \begin{equation}
e^{\I\varphi X}=\mathrm{H}e^{i \varphi Z}\mathrm{H}, 
\end{equation}
and
\begin{equation}
\mathrm{H}W_z(x)\mathrm{H}=W(x), \quad \mathrm{H}W^*_z(x)\mathrm{H}=-e^{-i \frac{\pi}{2}Z}W(x)e^{-i \frac{\pi}{2}Z},
\end{equation}
we have
\begin{equation}
\begin{split}
U_{\Phi_z}(x)
=&(-1)^{\frac{d}{2}}\mathrm{H} \Big\{e^{i (\varphi_0-\frac{\pi}{2}) Z} W(x)e^{i (\varphi_1-\frac{\pi}{2}) Z} W(x) e^{i (\varphi_2-\frac{\pi}{2}) Z}\cdots \\
&e^{i (\varphi_2-\frac{\pi}{2})Z} W(x) e^{i (\varphi_1-\frac{\pi}{2}) Z} W(x) e^{i \varphi_0 Z}\Big\}\mathrm{H}\\
=&(-1)^{\frac{d}{2}} \mathrm{H}e^{-i \frac{\pi}{4} Z}
\Big\{e^{i (\varphi_0-\frac{\pi}{4}) Z} W(x)e^{i (\varphi_1-\frac{\pi}{2}) Z} W(x) e^{i (\varphi_2-\frac{\pi}{2}) Z}\cdots \\
&e^{i (\varphi_2-\frac{\pi}{2})Z} W(x) e^{i (\varphi_1-\frac{\pi}{2}) Z} W(x) e^{i (\varphi_0-\frac{\pi}{4}) Z}\Big\}e^{i \frac{\pi}{4} Z}\mathrm{H}.
\end{split}
\end{equation}
The term in the parenthesis satisfies the condition of \cref{thm:qsp_real_Wx}. We may choose a symmetric phase factor $(\phi_0,\phi_1,\ldots,\phi_1,\phi_0)$, so that
\begin{equation}
\begin{split}
&e^{i \phi_0 Z} W(x)e^{i \phi_1 Z} W(x) e^{i \phi_2 Z}\cdots e^{i \phi_2 Z} W(x) e^{i \phi_1 Z} W(x) e^{i \phi_0 Z}\\
=& (-1)^{\frac{d}{2}} \left( \begin{array}{cc}
        F(x)+i G(x) & i Q(x) \sqrt{1 - x^2}\\
        i Q(x) \sqrt{1 - x^2} & F(x)-i G(x)
        \end{array} \right).
\end{split}
\end{equation} 
Then define $\varphi_j=\phi_j+(2-\delta_{j0})\pi/4$ for $j=0,\ldots,d/2$, direct computation shows
\begin{equation}
U_{\Phi_z}(x) =\left( \begin{array}{cc}
        F(x) & -Q(x) \sqrt{1 - x^2}+i G(x)\\
        Q(x) \sqrt{1 - x^2}+i G(x) & F(x)
        \end{array} \right),
\end{equation}
which proves the theorem. 
\end{proof}

\begin{proof}[Proof of \cref{thm:qet_unitary}]
For any eigenstate $\ket{v_j}$ of $H$ with eigenvalue $\lambda_j$,
note that $\opr{span}\{\ket{0}\ket{v_j},\ket{1}\ket{v_j}\}$ is an invariant subspace of $U,U^{\dag},X\otimes I_n$, and hence of $\mc{U}$. 
Together with the fact that for any phase factors $\varphi,\varphi'$,
\begin{equation}
e^{i \varphi X} W_z^*(x) e^{i \varphi' X}  W_z(x)=
e^{i \varphi X} \begin{pmatrix}
1 & 0 \\
0 & e^{2i \arccos(x)}
\end{pmatrix} 
e^{i \varphi' X}  
\begin{pmatrix}
1 & 0 \\
0 & e^{-2i \arccos(x)}
\end{pmatrix},
\end{equation}
we have 
\begin{equation}
\mc{U}\ket{0}\ket{v_j}=(U_{\Phi_z}(\cos(\lambda_j/2))\ket{0}) \ket{v_j}=F(\cos (\lambda_j/2)) \ket{0}\ket{v_j} + \alpha_j\ket{1}\ket{v_j}.
\end{equation}
Here we have used $x=\cos(\lambda_j/2)$, and
\begin{equation}
\alpha_j=Q(\cos(\lambda_j/2))\sin(\lambda_j/2)+i G(\cos(\lambda_j/2))
\end{equation}
is an irrelevant constant according to \cref{eqn:QSP_Wz_real}.

Since any state $\ket{\psi}$ can be expanded as the linear combination of eigenstates $\ket{v_j}$ as 
\begin{equation}
\ket{\psi}=\sum_{j} c_j\ket{v_j},
\end{equation}
we have
\begin{equation}
\begin{split}
\mc{U}\ket{0}\ket{\psi}=&\sum_{j}c_j \mc{U}\ket{0}\ket{v_j}=\ket{0}\sum_{j}c_j F(\cos (\lambda_j/2)) \ket{v_j} +\ket{1}\ket{\perp}\\
=&\ket{0} F(\cos (H/2)) \ket{\psi} + \ket{1}\ket{\perp},
\end{split}
\end{equation}
where $\ket{\perp}$ is some unnormalized quantum state. 
This proves the theorem.
\end{proof}

Sometimes instead of controlled $U$, we have direct access to an oracle that simultaneously implements a controlled forward and backward time evolution:
\begin{equation}
V=\begin{pmatrix}
e^{i H} & 0 \\
0 & e^{-i H}
\end{pmatrix}.
\label{eqn:control_U}
\end{equation}
This is the case, for instance, in certain implementation of \QETU in a control-free setting. \cref{cor:qet_unitary_2} describes this version of \QETU.

\begin{figure}[H]
\begin{center}
  \includegraphics[width=0.7\linewidth]{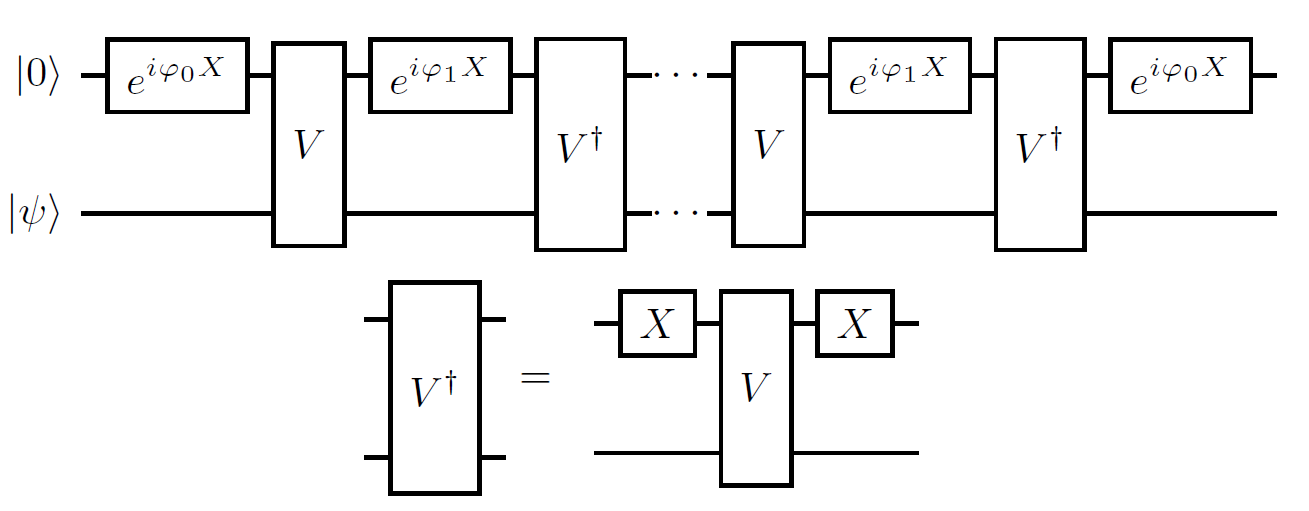}
\end{center}
  \caption{Variant of \QETU with an oracle implementing the controlled forward and backward time evolution. \REV{The implementation of $V^{\dag}$ can be carried out by conjugating $V$ with Pauli $X$ gates acting on the first qubit, and the Pauli $X$ gate can be first combined with the phase rotation as $e^{i \varphi X}X=ie^{i (\varphi+\pi/2) X}$}.}
  \label{fig:qep_unitary_circuit_2}
\end{figure}

\begin{cor}[\QETU with forward and backward time evolution]
Let $V$ be the unitary matrix given in \cref{eqn:control_U} corresponding to an $n$-qubit Hermitian matrix $H$. For any even real polynomial $F(x)$ of degree $d$ satisfying  $|F(x)|\le 1, \forall x \in [-1, 1]$, we can find a sequence of symmetric phase factors $\Phi_z := (\varphi_0, \varphi_1, \cdots, \varphi_1,\varphi_0) \in \RR^{d+1}$,
such that the circuit in \cref{fig:qep_unitary_circuit_2} denoted by $\mc{U}$ satisfies $\bra{0}\otimes I_n)\mc{U}(\ket{0}\otimes I_n)=F\left(\cos H\right)$.
\label{cor:qet_unitary_2}
\end{cor}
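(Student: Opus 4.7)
The plan is to run the same reduction to one-qubit QSP that was used in the proof of \cref{thm:qet_unitary}, but with a simpler identification of the qubitized block. The crucial observation is that $V=\mathrm{diag}(e^{\I H},e^{-\I H})$ and its adjoint $V^{\dagger}=\mathrm{diag}(e^{-\I H},e^{\I H})$ both preserve, for each eigenpair $(\lambda_j,\ket{v_j})$ of $H$, the two-dimensional invariant subspace $\mathrm{span}\{\ket{0}\ket{v_j},\ket{1}\ket{v_j}\}$. The single-qubit rotations $e^{\I\varphi X}$ act only on the ancilla and therefore also preserve this subspace, so the entire circuit $\mc{U}$ in \cref{fig:qep_unitary_circuit_2} decomposes blockwise over $j$.

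Next I would check that the restriction of $V$ to the invariant block is exactly the signal operator $W_z$ evaluated at $x=\cos\lambda_j$. Writing
\[
V\big|_{j}=\begin{pmatrix} e^{\I\lambda_j} & 0\\ 0 & e^{-\I\lambda_j}\end{pmatrix},\qquad V^{\dagger}\big|_{j}=\begin{pmatrix} e^{-\I\lambda_j} & 0\\ 0 & e^{\I\lambda_j}\end{pmatrix},
\]
and using $\arccos(\cos\lambda_j)=\lambda_j$ (valid on the branch $[0,\pi]$, and consistent via functional calculus in general since only $\cos\lambda_j$ enters the final polynomial), we recognize $V|_{j}=W_z(\cos\lambda_j)$ and $V^{\dagger}|_{j}=W_z^{*}(\cos\lambda_j)$. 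Unlike the proof of \cref{thm:qet_unitary}, no scalar phase needs to be factored out here, because $V$ is already traceless on each invariant block; this is the reason that the corollary produces $F(\cos H)$ rather than $F(\cos(H/2))$.

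Given a real even polynomial $F$ of degree $d$ with $|F(x)|\le 1$ on $[-1,1]$, I would now invoke \cref{thm:qsp_real_Wz} to obtain symmetric phase factors $\Phi_z=(\varphi_0,\varphi_1,\ldots,\varphi_1,\varphi_0)\in\RR^{d+1}$ such that
\[
U_{\Phi_z}(x)=\begin{pmatrix} F(x) & \ast\\ \ast & F(x)\end{pmatrix}.
\]
Since the alternating product of $V$, $V^{\dagger}$ and the $e^{\I\varphi_k X}$ rotations in \cref{fig:qep_unitary_circuit_2} restricts on the $j$-th invariant block exactly to $U_{\Phi_z}(\cos\lambda_j)$, one obtains
\[
\mc{U}\,\ket{0}\ket{v_j}=F(\cos\lambda_j)\,\ket{0}\ket{v_j}+\ket{1}\ket{\perp_j},
\]
for some unnormalized $\ket{\perp_j}$. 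Expanding an arbitrary input $\ket{\psi}=\sum_j c_j\ket{v_j}$ in the eigenbasis of $H$ and using linearity together with the functional calculus identity $\sum_j c_j F(\cos\lambda_j)\ket{v_j}=F(\cos H)\ket{\psi}$ gives
\[
(\bra{0}\otimes I_n)\,\mc{U}\,(\ket{0}\otimes I_n)=F(\cos H),
\]
which is the claim. The main (and only non-routine) step is the qubitization check in the second paragraph: once it is observed that $V$ already coincides with $W_z$ on each invariant block without an extra global phase, the rest of the argument is a verbatim copy of the proof of \cref{thm:qet_unitary}.
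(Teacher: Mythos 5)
Your proposal is correct and follows essentially the same route as the paper's own proof: identify each two-dimensional invariant block of $V$ with $W_z(\cos\lambda_j)$ (and of $V^{\dagger}$ with $W_z^{*}(\cos\lambda_j)$), invoke \cref{thm:qsp_real_Wz}, and repeat the eigenbasis-expansion argument from \cref{thm:qet_unitary}. Your remark that $V$ is ``traceless on each invariant block'' is imprecise (the trace is $2\cos\lambda_j$; what you mean is that $V|_j=e^{\I\lambda_j Z}$ has unit determinant, so no scalar phase must be extracted), but the underlying point---and hence the replacement of $F(\cos(H/2))$ by $F(\cos H)$---is exactly right.
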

\begin{proof}
Let $\ket{v_j}$ be an eigenstate of $H$ with eigenvalue $\lambda_j$.
For any phase factors $\varphi,\varphi'$, using
\begin{equation}
e^{i \varphi X} W_z^*(x) e^{i \varphi' X}  W_z(x)=
e^{i \varphi X} \begin{pmatrix}
e^{-i \arccos(x)} & 0 \\
0 & e^{i \arccos(x)}
\end{pmatrix} 
e^{i \varphi' X}  
\begin{pmatrix}
e^{i \arccos(x)} & 0 \\
0 & e^{-i \arccos(x)}
\end{pmatrix},
\end{equation}
we have 
\begin{equation}
\mc{U}\ket{0}\ket{v_j}=(U_{\Phi_z}(\cos\lambda_j)\ket{0}) \ket{v_j}=F(\cos \lambda_j) \ket{0}\ket{v_j} + \alpha_j\ket{1}\ket{v_j}.
\end{equation}
Here $x=\cos\lambda_j$, and $\alpha_j=Q(\cos\lambda_j)\sin\lambda_j+i G(\cos\lambda_j)$.
The rest of the proof follows that of \cref{thm:qet_unitary}.
\end{proof}

\section{Cost of \QETU using Trotter formulas}
\label{sec:trotter}

If the time evolution operator $U=e^{-iH}$ is implemented using Trotter formulas, we can directly analyze the circuit depth and gate complexity of estimating the ground-state energy in the setting of Theorems \ref{thm:ground_state_energy_woaa} and \ref{thm:ground_state_energy}.

We suppose the Hamiltonian $H$ can be decomposed into a sum of terms $H=\sum_{\gamma=1}^L H_\gamma$, where each term $H_{\gamma}$ can be efficiently exponentiated, i.e. with gate complexity independent of time. In other words we assume each $H_{\gamma}$ can be fast-forwarded \cite{AtiaAharonov2017,GuSommaSahinoglu2021fast,Su2021fast}. We assume that the gate complexity for implementing a single Trotter step is $G_{\mathrm{Trotter}}$, and the circuit depth required is $D_{\mathrm{Trotter}}$. For initial state preparation, we assume we need gate complexity $G_{\mathrm{initial}}$ and circuit depth $D_{\mathrm{initial}}$. 
A $p$-th order Trotter formula applied to $U=e^{-i H}$ with $r$ Trotter steps gives us a unitary operator $U_{\mathrm{HS}}$ with error
\[
\|U_{\mathrm{HS}}-U\|\leq C_{\mathrm{Trotter}} r^{-p},
\]
where $C_{\mathrm{Trotter}}$ is a prefactor, for which the simplest bound is $C_{\mathrm{Trotter}}=\Or((\sum_{\gamma}\|H\|_{\gamma})^{p+1})$. Tighter bounds in the form of a sum of commutators are proved in Refs.~\cite{ChildsSuTranEtAl2021,SuHuangCampbell2020nearly}, and there are many works on how to decompose the Hamiltonian to reduce the resource requirement \cite{McArdleCampbellSu2022exploiting,LeeBerryEtAl2021even,BerryGidneyEtAl2019qubitization,vonBurgLowEtAl2021quantum}.
If the circuit queries $U,U^{\dag}$ for $d$ times and the desired precision is $\delta$, then we can choose
\[
d\times C_{\mathrm{Trotter}} r^{-p}=\delta,
\]
or equivalently
\begin{equation}\label{eqn:qetu_trotter_r}
r= \Or\left(\max\{d^{1/p} C_{\mathrm{Trotter}}^{1/p}\delta^{-1/p},1\}\right).
\end{equation}

As an example, let us now analyze the number of Trotter steps needed in the context of estimating the ground-state energy in Theorem \ref{thm:ground_state_energy_woaa} without amplitude amplification. When replacing the exact $U$ with $U_{\mathrm{HS}}$, we only need to ensure that the resulting error in $U_{\mathrm{proj}}$ defined in \cref{eq:defn_U_proj} is  $\Or(\gamma)$. This will enable us to solve the binary amplitude estimation problem (see Definition \ref{defn:binary_amplitude_estimation}) with the same asymptotic query complexity. Since $U_{\mathrm{proj}}$ uses $U$ (and therefore $U_{\mathrm{HS}}$) at most $\wt{\Or}(\epsilon^{-1}\log(\gamma^{-1}))$ times, we only need to ensure
\begin{equation}
\label{eq:trotter_err_condition}
    \wt{\Or}(\epsilon^{-1}\log(\gamma^{-1})) \times C_{\mathrm{Trotter}} r^{-p} = \Or(\gamma).    
\end{equation}
Consequently we need to choose
\begin{equation}
\label{eq:number_trotter_steps}
    r= \wt{\Or}\left(\max\{C_{\mathrm{Trotter}}^{1/p}\epsilon^{-1/p}\gamma^{-1/p},1\}\right).
\end{equation}
The query depth of $U$ is $\wt{\Or}(\epsilon^{-1}\log(\gamma^{-1}))$, and therefore the circuit depth is
\[
\wt{\Or}\left(\max\{C_{\mathrm{Trotter}}^{1/p}\epsilon^{-1/p}\gamma^{-1/p},\log(\gamma^{-1})\}\epsilon^{-1}D_{\mathrm{Trotter}}+D_{\mathrm{initial}}\right).
\]
Similarly the total number of queries to $U$ is $\wt{\Or}(\epsilon^{-1}\gamma^{-2}\log(\vartheta^{-1}))$ times, and the total number of queries to $U_I$ is $\Or(\gamma^{-2}\polylog(\epsilon^{-1}\vartheta^{-1}))$, and the resulting total gate complexity is
\[
\wt{\Or}\left(\max\{C_{\mathrm{Trotter}}^{1/p}\epsilon^{-1/p}\gamma^{-1/p},1\}\epsilon^{-1}\gamma^{-2}G_{\mathrm{Trotter}}\log(\vartheta^{-1})+\gamma^{-2}G_{\mathrm{initial}}\log(\vartheta^{-1})\right).
\]

The analysis of the number of Trotter steps in the setting of Theorem \ref{thm:ground_state_energy} is similar. We still want to ensure \eqref{eq:trotter_err_condition}, which results in the same choice of the number of Trotter steps $r$ as in \eqref{eq:number_trotter_steps}. Combined with the query complexity in Theorem \ref{thm:ground_state_energy}, the total gate complexity is
\begin{equation}
    \label{eq:near_optimal_gate_complexity}
    \wt{\Or}\left(\max\{C_{\mathrm{Trotter}}^{1/p}\epsilon^{-1/p}\gamma^{-1/p},1\}\epsilon^{-1}\gamma^{-1}G_{\mathrm{Trotter}}\log(\vartheta^{-1})+\gamma^{-1}G_{\mathrm{initial}}\log(\vartheta^{-1})\right).
\end{equation}

\section{Binary amplitude estimation with a single ancilla qubit and \QETU}
\label{sec:binary_amplitude_estimation}

In this section we discuss how to solve the binary amplitude estimation problem in Definition \ref{defn:binary_amplitude_estimation} using a single ancilla qubit. We restate the problem here.
 
\begin{defn*}[Binary amplitude estimation]
Let $W$ be a unitary acting on two registers, with the first register indicating success or failure. Let $A=\|(\bra{0}\otimes I_n)W(\ket{0}\ket{0^n})\|$ be the success amplitude. Given $0\leq\gamma_1<\gamma_2$, provided that $A$ is either smaller than $\gamma_1$ or greater than $\gamma_2$, we want to correctly distinguish between the two cases, i.e. output 0 for the former and 1 for the latter. 
\end{defn*}
In the following we will describe an algorithm to solve this problem and thereby prove Lemma \ref{lem:binary_amplitude_estimation}.
We can find quantum states $\ket{\Phi}$ and $\ket{\perp}$ such that
\[
W(\ket{0}\ket{0^n}) = A\ket{0}\ket{\Phi} + \sqrt{1-A^2}\ket{\perp},
\]
and $(\bra{0}\otimes I)\ket{\perp}=0$. We also define
\[
\ket{\perp'} = -\sqrt{1-A^2}\ket{0}\ket{\Phi} + A\ket{\perp}.
\]

As in amplitude amplification, we define two reflection operators: 
\[
R_0 = (2\ket{0}\bra{0}-I)\otimes I_n,\quad R_1=W(2\ket{0^{n+1}}\bra{0^{n+1}}-I_{n+1})W^{\dagger}.
\]

Relative to the basis $\{W(\ket{0}\ket{0^n}),\ket{\perp'}\}$ the two reflection operators can be represented by the matrices
\[
\begin{pmatrix}
2A^2-1 & -2A\sqrt{1-A^2} \\
-2A\sqrt{1-A^2} & 1-2A^2
\end{pmatrix},
\quad
\begin{pmatrix}
1 & 0 \\
0 & -1
\end{pmatrix}.
\]
Therefore we can verify that $\ket{\Psi_{\pm}}=(W\ket{0}\ket{0^n}\pm i\ket{\perp'})/\sqrt{2}$ are eigenvectors of $R_0 R_1$:
\[
R_0 R_1 \ket{\Psi_{\pm}} = e^{\mp i 2\arccos(A)}  \ket{\Psi_{\pm}}.
\]

If we use the usual amplitude estimation algorithm to estimate $A$ we can simply perform phase estimation with $R_0 R_1$ on the quantum state $W(\ket{0}\ket{0^n})$, which is an equal superposition of $\ket{\Psi_{\pm}}$:
\[
W(\ket{0}\ket{0^n}) = \frac{1}{\sqrt{2}}(\ket{\Psi_{+}}+\ket{\Psi_{-}}).
\]
However, here we will do something different. We view $R_0 R_1$ has a time-evolution operator corresponding to some Hamiltonian $L$:
\[
R_0 R_1 = e^{-iL},
\]
where, in the subspace spanned by $\{W(\ket{0}\ket{0^n}),\ket{\perp'}\}$ we have
\[
L = 2\arccos(A)\ket{\Psi_+}\bra{\Psi_+} - 2\arccos(A)\ket{\Psi_-}\bra{\Psi_-}.
\]
Then, using \QETU in Theorem \ref{thm:qet_unitary} we can implement a block encoding, which we denote by $\mathcal{U}$, of $P(\cos(L/2))$ for any suitable polynomial $P$, and in the same subspace we have
\[
P(\cos(L/2)) = P(A)\big(\ket{\Psi_+}\bra{\Psi_+}+\ket{\Psi_-}\bra{\Psi_-}\big) = P(A)\big(W\ket{0}\ket{0^n}\bra{0}\bra{0^n}W^{\dagger}+\ket{\perp'}\bra{\perp'}\big).
\]
Using \QETU we can use Monte Carlo sampling to estimate the quantity
\begin{equation}
\label{eq:prob_getting_0_binary_amplitude_estimation}
    \|(\bra{0}\otimes I_{n+1})\mathcal{U}\big(\ket{0}\otimes (W\ket{0}\ket{0^n})\big)\|^2 = |P(A)|^2.
\end{equation}
To be more precise, we can start from the state $\ket{0}\ket{0}\ket{0^n}$ on $n+2$ qubits, apply $W$ to the last $n+1$ qubits, then $\mathcal{U}$ to all $n+2$ qubits, and in the end measure the first qubit. The probability of obtaining $0$ in the measurement outcome is exactly \cref{eq:prob_getting_0_binary_amplitude_estimation}.

Now let us consider an even polynomial $P(x)$ such that $|P(x)|\leq 1$ for $x\in[-1,1]$, and
\[
P(x)\geq 1-\delta,\ x\in[\gamma_2,1],\quad |P(x)|\leq \delta,\ x\in[0,\gamma_1].
\]
Such a polynomial of degree $\Or((\gamma_2-\gamma_1)^{-1}\log(\delta^{-1}))$ can be constructed using the approximate sign function in \cite{LowChuang2017a} (if we take this approach we need to symmetrize the polynomial through $P(x)=(Q(x)+Q(-x))/2$) or the optimization procedure described in \cref{sec:convex}. Using this polynomial, we can then use Monte Carlo sampling to distinguish two cases, which will solve the binary amplitude estimation problem:
\[
\|(\bra{0}\otimes I_{n+1})\mathcal{U}\big(\ket{0}\otimes (W\ket{0}\ket{0^n})\big)\|^2 \geq (1-\delta)^2,\ \text{or}\ \|(\bra{0}\otimes I_{n+1})\mathcal{U}\big(\ket{0}\otimes (W\ket{0}\ket{0^n})\big)\|^2\leq \delta^2.
\]
We can choose $\delta=1/4$ and it takes running $W$ and $\mathcal{U}$ and measuring the first qubit each $\Or(\log(\vartheta^{-1}))$ times to successfully distinguish between the above two cases with probability at least $1-\vartheta$. In this we use the standard majority voting procedure to boost the success probability. Each single run of $\mc{U}$ requires $\Or((\gamma_2-\gamma_1)^{-1})$ applications of $W$, which corresponds to the polynomial degree. Therefore in total we need to apply $W$ $\Or((\gamma_2-\gamma_1)^{-1}\log(\vartheta^{-1}))$ times.

In this whole procedure we need one additional ancilla qubit for \QETU. 
Note that the $n+1$ qubits reflection operator in $R_1$ can be implemented using the $(n+2)$-qubit Toffoli gate and phase kickback. Using \cite[Corollary 7.4]{BarencoBennettEtAl1995elementary} we can implement the $(n+2)$-qubit Toffoli gate on $(n+3)$ qubits. As a result another ancilla qubit is needed. We have proved Lemma \ref{lem:binary_amplitude_estimation}, which we restate here:  
\begin{lem*}
The binary amplitude estimation problem in Definition~\ref{defn:binary_amplitude_estimation} can be solved correctly with probability at least $1-\vartheta$ by querying $W$ $\Or((\gamma_2-\gamma_1)^{-1}\log(\vartheta^{-1}))$ times, and this procedure requires two additional ancilla qubits (besides the ancilla qubits already required in $W$).
\end{lem*}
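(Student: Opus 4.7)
The plan is to avoid ordinary amplitude estimation (which reads out $A$ through a quantum Fourier transform and therefore needs logarithmically many ancilla qubits) and instead treat the Grover-type walk operator as an abstract Hamiltonian evolution, then apply \QETU directly so that only one additional ancilla is needed to carry out the polynomial transformation. The target polynomial will collapse the two regimes $A\leq\gamma_1$ and $A\geq\gamma_2$ to two well-separated measurement probabilities, which are then distinguished by majority voting.

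First I would write $W(\ket{0}\ket{0^n})=A\ket{0}\ket{\Phi}+\sqrt{1-A^2}\ket{\perp}$ with $(\bra{0}\otimes I)\ket{\perp}=0$, complete this to an orthonormal pair by $\ket{\perp'}=-\sqrt{1-A^2}\ket{0}\ket{\Phi}+A\ket{\perp}$, and define the reflections $R_0=(2\ket{0}\bra{0}-I)\otimes I_n$ and $R_1=W(2\ket{0^{n+1}}\bra{0^{n+1}}-I)W^\dag$. A direct $2\times 2$ calculation in the basis $\{W\ket{0}\ket{0^n},\ket{\perp'}\}$ shows that $R_0R_1$ restricts to a rotation with eigenvalues $e^{\mp 2i\arccos A}$ and eigenvectors $\ket{\Psi_\pm}=(W\ket{0}\ket{0^n}\pm i\ket{\perp'})/\sqrt{2}$. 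Writing $R_0R_1=e^{-iL}$, the fictitious Hamiltonian $L$ has spectrum $\{\pm 2\arccos A\}$ on this invariant two-dimensional subspace.

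Next I would feed $R_0R_1$ into \cref{thm:qet_unitary} as the unitary oracle, with an even real polynomial $P$ of degree $d=\Or((\gamma_2-\gamma_1)^{-1}\log\delta^{-1})$ satisfying $|P|\leq 1$ on $[-1,1]$, $P(x)\geq 1-\delta$ on $[\gamma_2,1]$, and $|P(x)|\leq\delta$ on $[0,\gamma_1]$; such a $P$ is a standard shifted-sign approximation (from \cite{LowChuang2017a} after symmetrization, or directly from the convex-optimization recipe of \cref{sec:convex}). \QETU then provides a block encoding $\mc{U}$ of $P(\cos(L/2))$, and since $\cos(L/2)=A$ acts as a scalar on the invariant subspace, $P(\cos(L/2))$ equals $P(A)$ on that subspace. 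Preparing $\ket{0}\otimes W\ket{0}\ket{0^n}$, applying $\mc{U}$, and measuring the first qubit therefore returns outcome $0$ with probability exactly $|P(A)|^2$, which is $\geq(1-\delta)^2$ in the ``high'' case and $\leq\delta^2$ in the ``low'' case. Fixing $\delta=1/4$ and amplifying by majority vote over $\Or(\log\vartheta^{-1})$ independent trials (Chernoff) yields the correct branch with probability $\geq 1-\vartheta$. Since one run of $\mc{U}$ invokes $W,W^\dag$ exactly $d$ times, the total query count is $\Or((\gamma_2-\gamma_1)^{-1}\log\vartheta^{-1})$. For the ancilla count, one additional qubit comes from the \QETU rotation register; a second is needed for the $(n+2)$-qubit Toffoli underlying the reflection $2\ket{0^{n+1}}\bra{0^{n+1}}-I$ inside $R_1$, via \cite[Corollary 7.4]{BarencoBennettEtAl1995elementary}.

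The main obstacle I anticipate is the construction and verification of the polynomial $P$: one must simultaneously enforce even parity, the global bound $|P|\leq 1$ on $[-1,1]$, and the sign-like behavior separating $\gamma_1$ from $\gamma_2$, and one must check that the unavoidable values of $P$ on $[-\gamma_1,0]$ and $[-1,-\gamma_2]$ inherited from parity do not interfere with the argument. A secondary care point is the identification $P(\cos(L/2))=P(A)\cdot I$ on the invariant subspace: because the eigenvalues of $L$ come in the symmetric pair $\pm 2\arccos A$ and $\cos(\cdot/2)$ is even, this is automatic, but it relies crucially on $P$ being even, which is why \cref{thm:qet_unitary} is the right tool rather than a more general transform.
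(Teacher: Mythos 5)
Your proposal is correct and follows essentially the same route as the paper's own proof in Appendix \ref{sec:binary_amplitude_estimation}: the same Grover-walk operator $R_0R_1=e^{-iL}$ with eigenvalues $e^{\mp 2i\arccos A}$, the same application of \QETU with an even shifted-sign polynomial to map the two amplitude regimes to measurement probabilities $\geq(1-\delta)^2$ versus $\leq\delta^2$, the same majority-vote amplification, and the same accounting of one ancilla for \QETU plus one for the $(n+2)$-qubit Toffoli. No changes are needed.
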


\section{Details of numerical simulation of TFIM}\label{sec:num-detail}

In the numerical simulation for estimating the ground-state energy of TFIM, we explicitly diagonalize the Hamiltonian to obtain the exact ground state $\ket{\psi_0}$, ground energy $E_0$, the first excited energy $E_1$, and the highest excited energy $E_{n-1}$. We then perform an affine transformation to the shifted Hamiltonian
\begin{equation}
    H^\mathrm{sh} = c_1 H + c_2 I_n, \quad c_1 = \frac{\pi-2\eta}{E_{n-1} - E_0} \text{ and } c_2 = \eta - c_1 E_0.
\end{equation}
Consequently, the eigenvalues of the shifted Hamiltonian are exactly in the interval $[\eta, \pi - \eta]$, i.e., $E_0^\mathrm{sh} = \eta$ and $E_{n-1}^\mathrm{sh} = \pi - \eta$. The time evolution is then $e^{-\I \tau H^\mathrm{sh}} = e^{-\I \tau c_2} e^{-\I \tau c_1 H}$ which means that the evolution time is scaled to $\tau^\mathrm{sh} = \tau c_1$ with an additional phase shift $\phi^\mathrm{sh} = \tau c_2$. The system dependent parameters are then given by
\begin{equation}
    \mu = \frac{1}{2}\left(E_0^\mathrm{sh} + E_1^\mathrm{sh}\right),\ \Delta = E_1^\mathrm{sh} - E_0^\mathrm{sh}, \text{ and } \sigma_\pm = \cos\frac{\mu \mp \Delta/2}{2}.
\end{equation}
We set the input quantum state to $\ket{0} \ket{\psi_\mathrm{in}}$ where $\ket{\psi_\mathrm{in}} = \ket{0^n}$ and the additional one qubit is the ancilla qubit for performing $X$ rotations in \QETU. The initial overlap is $\gamma = \abs{\braket{\psi_\mathrm{in} | \psi_0}}$. We list the system dependent parameters used in the numerical experiments in \cref{fig:ground_energy} in \cref{tab:system_params}.

\begin{table}[htbp]
\centering
\begin{tabular}{@{} *{8}{c} @{}}\hline
$n$ & $\mu$ & $\Delta$ & $\sigma_+$ & $\sigma_-$ & $c_1$ & $c_2$ & $\gamma$\\\hline
2 & 0.7442 & 1.2884 & 0.9988 & 0.7686 & 0.1824 & 1.5708 & 0.5301 \\
4 & 0.3926 & 0.5851 & 0.9988 & 0.9419 & 0.0909 & 1.5708 & 0.3003 \\
6 & 0.2887 & 0.3773 & 0.9988 & 0.9717 & 0.0605 & 1.5708 & 0.1703 \\
8 & 0.2394 & 0.2788 & 0.9988 & 0.9821 & 0.0453 & 1.5708 & 0.0965 \\
\hline
\end{tabular}
\caption{System dependent parameters for different number of system qubits $n$.}
\label{tab:system_params}
\end{table}

For completeness, we briefly introduce the algorithm for deriving the energy estimation from the measurement of bit-string frequencies. The energy estimation process can be optimized so that it is sufficient to measure a few quantum circuits to compute the energy. For \TFIM, if the ground state is $\ket{\psi_0}$, its ground-state energy is 
\begin{equation*}
    E_0 = \braket{\psi_0 | H_\TFIM | \psi_0} = - \sum_{j=1}^{n-1} \braket{\psi_0 | Z_j Z_{j+1} | \psi_0} - g \sum_{j=1}^n \braket{\psi_0 | X_j | \psi_0} =: - \sum_{j=1}^{n-1} \Psi_{j,j+1} - g \sum_{j=1}^n \Psi_j^H.
\end{equation*}
We will show that the energy component $\Psi_{j,j+1}$ and $\Psi_j^H$ can be exactly expressed as the marginal probabilities readable from measurements. Decomposing $Z_j Z_{j+1}$ with respect to eigenvectors, we have
\begin{equation*}
    \Psi_{j,j+1} = \braket{\psi_0 | Z_jZ_{j+1} | \psi_0} = \sum_{z_j = 0}^1 \sum_{z_{j+1} = 0}^1 (-1)^{z_j+z_{j+1}} \abs{\braket{\psi_0 | z_j,z_{j+1}}}^2 = \sum_{z_j = 0}^1 \sum_{z_{j+1} = 0}^1 (-1)^{z_j+z_{j+1}} \bP\left(z_j, z_{j+1}\right | \psi_0).
\end{equation*}
Here, $\bP\left(z_j, z_{j+1} | \psi_0\right)$ is the marginal probability measuring the $j$-th qubit with $z_j$ and the $(j+1)$-th qubit with $z_{j+1}$ under computational basis when the quantum circuit for preparing the ground state $\ket{\psi_0}$ is given. Similarly, the other quantity involved in the energy is
\begin{equation*}
    \Psi^H_j = \braket{\psi_0 | X_j | \psi_0} = \braket{\psi_0 |H^{\otimes n} Z_j H^{\otimes n}| \psi_0} = \braket{\psi_0^H | Z_j | \psi_0^H} = \sum_{z_j = 0}^1 (-1)^{z_j} \bP\left(z_j | \psi_0^H\right).
\end{equation*}
Here, $\bP\left(z_j | \psi_0^H\right)$ is the marginal probability measuring the $j$-th qubit with $z_j$ under computational basis when the quantum circuit for preparing the ground state $\ket{\psi_0}$ following a Hadamard transformation, which is denoted as $\ket{\psi_0^H} := H^{\otimes n}\ket{\psi_0}$, is given.

In order to estimate the ground-state energy of \TFIM, it suffices to measure all qubits in two circuits: the circuit in \cref{fig:TFIM_circuits} (b) and that following a Hadamard transformation on all system qubits. The measurement results estimate the marginal probabilities up to the Monte Carlo measurement error. Furthermore, their linear combination with signs gives the ground-state energy estimate based on the previous analysis. 

The procedure for estimating the energy can readily be generalized to other models. Consider a Hamiltonian
\begin{equation}
    H = \sum_{k=1}^L H_k, \quad H_k = \sum_{j=1}^{v_k} h_{k,j}.
\end{equation}
Here we  group the components of the Hamiltonian into $L$ classes, and for a fixed $k$, the components $h_{k,j}$ can be simultaneously diagonalized by an efficiently implementable unitary $V_k$. \REV{The strategies of Hamiltonian grouping has also been used in e.g., Refs. \cite{IzmaylovYenLangEtAl2019,VerteletskyiYenIzmaylov2020}.} We want to estimate the expectation $\braket{\psi_0 | H |\psi_0}$ where $\ket{\psi_0}$ is the quantum state prepared by some quantum circuit. Then, it suffices to measure $L$ different quantum circuits $\{ V_k \ket{\psi_0} : k = 1, \cdots, L \}$ and to compute the expectation from the measurement data by some signed linear combination. For example, to estimate the ground-state energy of the Heisenberg model, we can let $L = 3$ and $V_1 = I^{\otimes n}$, $V_2 = \mathrm{H}^{\otimes n}$ and $V_3 = \left(\mathrm{H}\mathrm{S}^\dagger\right)^{\otimes n}$ where $\mathrm{H}$ and $\mathrm{S}$ are Hadamard gate and phase gate respectively.

\bibliographystyle{abbrvnat}
\bibliography{ref,lin_ref,heisenberg}

\end{document}